\documentclass[11pt]{article}
\usepackage{mathtools}
\usepackage{hyperref}
\usepackage{graphicx}
\usepackage{graphics}
\usepackage{dsfont}
\usepackage{epsfig}
\usepackage{bbm}
\usepackage{amsmath,amssymb,amsthm,amscd}
\usepackage[sort&compress,square,numbers]{natbib} 
\usepackage{float}
\usepackage{braket}
\usepackage[arrow, matrix, curve]{xy}
\usepackage[latin1]{inputenc}
\usepackage{tikz}
\usepackage{mathabx}
\usetikzlibrary{shapes,arrows}
\usepackage{fancybox}
\usepackage{tikz}
\usetikzlibrary{matrix}
\usepackage{calc}
\usepackage{MnSymbol}
\usepackage[OT2,T1]{fontenc}
\DeclareSymbolFont{cyrletters}{OT2}{wncyr}{m}{n}
\DeclareMathSymbol{\Sha}{\mathalpha}{cyrletters}{"58}

\restylefloat{table}

\setlength{\oddsidemargin}{0.25in}      
\setlength{\evensidemargin}{0.25in}     
\setlength{\topmargin}{0.2in}           
\setlength{\textwidth}{6.0in}           
\setlength{\textheight}{9in}            
\addtolength{\topmargin}{-\headheight}  
\addtolength{\topmargin}{-\headsep}     
\setlength{\marginparwidth}{0.75in}

\newcommand{\nn}{{\nonumber}}
\def\ket{\rangle}
\def\bra{\langle}

\def\Mgn[#1]#2{{\overline{\cal M}_{#1,#2}}}
\def\pqs[#1,#2]{{\footnotesize{$\left[\begin{array}{c} #1\\#2  \end{array}\right]$}}} 
\def\pqsu[#1,#2]{\left[\begin{array}{c} #1\\#2  \end{array}\right]} 
\def\pqssu[#1,#2]{{\footnotesize{\left[\begin{array}{c} #1\\#2  \end{array}\right]}}} 
\def\pqh[#1,#2]{{\footnotesize{$\left[\begin{array}{c} #1\\#2  \end{array}\right]$}}} 
\def\pqhu[#1,#2]{\left[\begin{array}{c} #1\\#2  \end{array}\right]}

\def\Mnote#1{{\bf[MP: #1]}}

\newcommand{\ba}{\begin{eqnarray*}}
\newcommand{\ea}{\end{eqnarray*}}
\newcommand{\ban}{\begin{eqnarray}}
\newcommand{\ean}{\end{eqnarray}}
\newcommand{\be}{\begin{equation}}
\newcommand{\ee}{\end{equation}}
\newcommand{\ben}{\begin{equation}}
\newcommand{\een}{\end{equation}}
\newcommand{\beq}{\begin{equation}}
\newcommand{\eeq}{\end{equation}}
\numberwithin{equation}{section}



\def\ket{\rangle}
\def\bra{\langle}

\numberwithin{equation}{section}

\newtheorem{theorem}{Theorem}[section]

\newtheorem{lemma}[theorem]{Lemma}

\begin{document}

\baselineskip=15pt

\begin{titlepage}
\begin{flushright}
\parbox[t]{1.73in}{\flushright 
UPR-1275-T\\
CERN-PH-TH-2015-273}
\end{flushright}

\begin{center}

\vspace*{ 0.9cm}

{\Large \bf \text{Origin of Abelian Gauge Symmetries in}}\\
{\Large \bf \text{  Heterotic/F-theory Duality 
}}

\vspace*{ .2cm}
\vskip 0.5cm

\renewcommand{\thefootnote}{}
\begin{center}
 {Mirjam Cveti\v{c}$^{1,2}$,  Antonella Grassi$^{3}$, Denis Klevers$^4$, Maximilian Poretschkin$^1$, Peng Song$^1$}
\end{center}
\vskip .2cm
\renewcommand{\thefootnote}{\arabic{footnote}}

$\,^1$ {Department of Physics and Astronomy,\\
University of Pennsylvania, Philadelphia, PA 19104-6396, USA} \\[.3cm]

$\,^2$ {Center for Applied Mathematics and Theoretical Physics,\\ University of Maribor, Maribor, Slovenia}\\[.3cm]

$\,^3$ {Department of Mathematics,\\
University of Pennsylvania, Philadelphia, PA 19104-6396, USA} \\[.3cm]

$\,^4$ {Theory Group, Physics Department, CERN,\\
CH-1211, Geneva 23, Switzerland} \\[.3cm]

{cvetic\ \textsf{at}\ cvetic.hep.upenn.edu,  grassi\ \textsf{at}\ math.upenn.edu, denis.klevers\ \textsf{at}\ cern.ch,  mporet \textsf{at}\ sas.upenn.edu, songpeng\ \textsf{at} \ sas.upenn.edu}

 \vspace*{0.0cm}

\end{center}

\vskip 0.0cm

\begin{center} {\bf ABSTRACT } \end{center}
We study aspects of heterotic/F-theory duality for compactifications with Abelian gauge 
symmetries. We consider F-theory on general Calabi-Yau manifolds with a rank one Mordell-Weil 
group of rational sections. By rigorously performing the stable degeneration limit in a class
of toric models, we derive both the Calabi-Yau geometry as well as the
spectral cover describing the vector bundle in the heterotic dual theory. 
We carefully investigate
the spectral cover employing the group law on the elliptic curve in the heterotic theory. We find 
in explicit examples that there are three different
classes of heterotic duals that have U(1) 
factors in their low energy effective theories: 
split spectral covers describing bundles with S$($U$(m)\times$U$(1))$ structure group, spectral 
covers containing torsional sections that seem to give rise to bundles with 
SU$(m)\times\mathbb{Z}_k$ structure group and bundles with purely non-Abelian 
structure groups having a 
centralizer in $\text{E}_8$ containing a U(1) factor. In the former two cases, it is required that the 
elliptic fibration on the heterotic side has a non-trivial Mordell-Weil group.  
While the number of geometrically massless U(1)'s is determined entirely by geometry on the 
F-theory side, on the heterotic side the correct number of U(1)'s is found by taking into account 
a St\"uckelberg mechanism in the lower-dimensional effective theory. In geometry, this 
corresponds to the condition that sections in the two half K3 surfaces that arise in the stable degeneration limit of F-theory can be glued together globally. 

\begin{flushright}
\parbox[t]{1.2in}{November, 2015}
\end{flushright}
\end{titlepage}

\tableofcontents


\newpage

\section{Introduction and Summary of Results}

The study of effective theories of string theory in lower dimensions with minimal supersymmetry 
are both of conceptual and phenomenological relevance. Two very prominent avenues to their 
construction are Calabi-Yau compactifications of the $E_8\times E_8$ heterotic string and of 
F-theory, respectively. The defining data of the two compactifications are seemingly very 
different. While a compactification to $10-2n$ dimensions is defined in the heterotic string by a 
complex $n$-dimensional Calabi-Yau manifold $Z_n$ and a holomorphic, 
semi-stable vector bundle  $V$ \cite{green2012superstring,polchinski1998string}, in F-theory one needs 
to specify a complex $(n+1)$-dimensional elliptically-fibered Calabi-Yau 
manifold $X_{n+1}$ 
\cite{Vafa:1996xn,Morrison:1996pp,Morrison:1996na}. For an elliptic K3-fibered 
$X_{n+1}$ and an elliptically fibered $Z_n$, however, 
both formulations of compactifications of string theory are physically equivalent. The defining 
data of both sides are related to each other by heterotic/F-theory duality 
\cite{Vafa:1996xn,Morrison:1996pp,Morrison:1996na}. 
Most notably, this duality allows making statements about the heterotic vector bundle $V$ in terms of the controllable geometry of the Calabi-Yau manifold $X_{n+1}$ on the 
F-theory side. Studying the structure of the heterotic vector bundle $V$ is crucial for 
understanding the gauge theory sector of the resulting effective theories. In this note, we 
present key steps towards developing the geometrical duality map between heterotic and 
F-theory compactifications with Abelian gauge symmetries in their effective theories. 

Since the advent of F-theory, the  
matching of gauge symmetry and the matter content in the effective theories  
has been studied in heterotic/F-theory duality \cite{Vafa:1996xn,Morrison:1996pp,Morrison:1996na}. Mathematically, the duality astonishingly allows to use 
the data of singular Calabi-Yau manifolds $X_{n+1}$ in F-theory to efficiently construct 
vector bundles $V$ on the heterotic side, which is typically very challenging. The duality can be 
precisely formulated in the so-called stable 
degeneration limit of $X_{n+1}$ \cite{Aspinwall:1997ye}, in which its K3-fibration degenerates 
into two half K3-fibrations $X_{n+1}^\pm$,
\beq
	X_{n+1}\,\,\rightarrow\,\, X_{n+1}^{+}\cup_{Z_n}X_{n+1}^-\,,
\eeq
that intersect in the heterotic Calabi-Yau manifold, $X_{n+1}^+\cap X_{n+1}^-=Z_n$. It
can be shown that $X_{n+1}^\pm$ naturally encode the heterotic vector bundle 
$V$ on elliptically fibered Calabi-Yau manifolds $Z_n$ 
\cite{Friedman:1997yq}. The most concrete 
map between the data of $X_{n+1}$ in stable degeneration and the heterotic side is realized 
if $V$ is described  
by a spectral cover employing the Fourier-Mukai transform 
\cite{Friedman:1997yq,Donagi:1997dp} (for more details see e.g.~\cite{Andreas:1998zf} and references therein). Heterotic/F-theory
duality has been systematically applied using 
toric geometry for the construction of vector bundles $V$ with 
non-Abelian structure groups described both via spectral covers and half K3 fibrations, 
see e.g.~\cite{Candelas:1996su,Berglund:1998ej} for representative works.
More recently, heterotic/F-theory duality has been used to 
study the geometric constraints on both sides of the duality in four-dimensional 
compactifications and to characterize the arising low-energy physics \cite{Anderson:2014gla}, 
see also \cite{Hayashi:2008ba}. Furthermore,  computations of both vector bundle and M5-brane 
superpotentials could be performed by calculation of the F-theory superpotential  using powerful 
techniques from mirror symmetry \cite{Grimm:2009ef,Grimm:2009sy,Jockers:2009ti}.
In addition, the heterotic/F-theory duality has been recently explored for studies of moduli-dependent prefactor of M5-instanton corrections to the superpotential in  F-theory compactifications \cite{Cvetic:2011gp,Cvetic:2012ts}.
The focus of all these works has been on vector bundles $V$ with non-Abelian structure
groups, see however \cite{Aldazabal:1996du,Klemm:1996hh} for first works on aspects of heterotic/F-theory duality with U(1)'s.

In this work, we will apply the simple and unifying description on the F-theory side in terms of 
elliptically fibered Calabi-Yau manifolds $X_{n+1}$ to study explicitly, using stable degeneration, 
the structure of spectral covers yielding heterotic vector bundles that give rise to U(1) 
gauge symmetry  in the lower-dimensional effective theory,  continuing the analysis  explained in  the 2010 talk \cite{GrassiIPMU}.\footnote{We have recently learned that  A.~Braun and S.~Sch\"afer-Nameki have been working on similar techniques.}

Abelian gauge symmetries are desired ingredients for controlling the phenomenology both of 
extensions of the standard model as well as of GUT theories. Recently, there has been 
tremendous progress on the construction of F-theory compactifications with Abelian gauge 
symmetries based on the improved understanding of elliptically fibered Calabi-Yau 
manifold $X_{n+1}$ with higher rank Mordell-Weil group of rational sections, see the 
representative works \cite{Morrison:2012ei,Borchmann:2013jwa,Cvetic:2013nia,Cvetic:2013uta,Borchmann:2013hta,Cvetic:2013jta,Cvetic:2013qsa,Braun:2014qka,Klevers:2014bqa,Cvetic:2015ioa}. In contrast, it has 
been long known that Abelian gauge symmetries in the heterotic theory can for example be constructed by 
considering background bundle $V$ with line bundle components
\cite{green2012superstring}.
The setup we are studying in this work is the duality map between the concrete and known 
geometry of the Calabi-Yau manifold $X_{n+1}$ with a rank one Mordell-Weil group in 
\cite{Morrison:2012ei} on the F-theory side and the data of the Calabi-Yau manifold $Z_n$ and 
the vector bundle $V$ defining the dual heterotic compactification. We will demonstrate, at the 
hand of a number of concrete examples, 
the utility of the F-theory Calabi-Yau manifold $X_{n+1}$ for the construction of vector bundles 
with non-simply connected structure groups that arise naturally in this duality. In particular,
the F-theory side will guide us to the physical interpretation of less familiar or
novel structures in the heterotic vector bundle.

There are numerous key advancements in this direction presented in this work:
\begin{itemize}
\item We rigorously perform the stable degeneration limit of a class of F-theory
Calabi-Yau manifolds   $X_{n+1}$ with U(1) Abelian gauge symmetry  using toric geometry, applying and extending the techniques of \cite{Grassi:2012qw}. We explicitly extract the data of the two half-K3 surfaces inside $X_{n+1}^{\pm}$, the spectral covers  and the heterotic Calabi-Yau manifold $Z_n$. We point out the non-commutativity of the
stable degeneration limit and birational maps, 
such as the one to the Weierstrass model. The stable degeneration limit we perform, which we
denote as ``toric stable degeneration'', preserves the structure of the Mordell-Weil group
of rational sections before and after the limit, which is, in contrast, obscured in the 
stable degeneration limit performed in the Weierstrass model.
We apply our general techniques to Calabi-Yau manifolds with elliptic fiber in
$\text{Bl}_1\mathbb{P}^2(1,1,2)$, which yield one U(1) in F-theory \cite{Morrison:2012ei}.  
\item We illuminate the systematics in the mapping under heterotic/F-theory duality
between F-theory with a Mordell-Weil group and heterotic vector bundles with non-simply 
connected structure groups leading to U(1)'s in their effective theories. 
We find that a single type of
F-theory geometry $X_{n+1}$ can be dual to a whole range of different phenomena in the 
heterotic string, at the hand of numerous concrete examples. We find three different classes of examples 
of how a U(1) gauge group is obtained in the heterotic string: one class of examples has a split 
spectral cover, which is a well-known ingredient for obtaining U(1) gauge groups
in the heterotic literature starting with 
\cite{Blumenhagen:2005ga} and the F-theory literature, 
see e.g.~\cite{Marsano:2009gv,Blumenhagen:2009yv,Grimm:2010ez}; another class of models have 
a spectral cover containing a torsional section of the heterotic Calabi-Yau manifold 
$Z_n$, where, duality suggests that this should describe zero-size instantons of discrete 
holonomy, as considered in \cite{Aspinwall:1998xj}; in a last set of examples, the U(1) arises as 
the commutant inside $\text{E}_8$ of vector bundles with purely non-Abelian structure groups. We analyze 
the emerging spectral covers by explicit computations in the group law on the elliptic curve in 
$Z_n$. In the first two classes of examples, it is 
crucial that  the heterotic elliptic fibration $Z_n$ exhibits rational sections, as also found in 
\cite{Choi:2012pr}.
In addition, in certain examples, the U(1) is only visible in the  half K3 fibration (and in 
$Z_n$), but not in the spectral cover.  
\item Whereas the number of massless U(1)'s on the F-theory side equals the Mordell-Weil rank 
of $X_{n+1}$, it is on the heterotic side a mixture of geometry and effective field theory 
effects: while the analysis of the spectral cover can be performed already in 8D, in 6D and lower 
dimensions U(1)'s can   be lifted from the massless spectrum by a St\"uckelberg effect, 
i.e.~gaugings of axions \cite{green2012superstring}. We understand explicitly in all three classes of examples how these gaugings arise and what is the remaining number of massless U(1) fields.
\end{itemize} 
We note that although our analysis is performed in 8D and 6D,  it is equally applicable also to 
heterotic/F-theory duality for compactifications to 4D.

This paper is organized in the following way: In Section \ref{sec:review+insights}, we provide a 
brief review of the key points of heterotic/F-theory duality as well as a discussion of
the new insights gained in this work into spectral covers and half K3-fibrations 
for vector bundles with non-simply connected structure 
groups. We review and discuss  heterotic/F-theory duality in eight and six dimensions, the 
spectral cover construction for SU($N$) bundles, specializations thereof giving rise to  
U(1) factors in the heterotic string and the St\"uckelberg mechanism rendering certain 
U(1) gauge fields massive. Section \ref{verticalmodel} contains the toric description of 
a class of F-theory models $X_{n+1}$ for which we describe a toric stable degeneration
limit.  We specialize to the toric fiber $\text{Bl}_1\mathbb{P}^2(1,1,2)$ and obtain the half
K3-fibrations as well as the dual heterotic geometry and spectral cover polynomial. In Section 
\ref{section:examples}, we present selected examples of F-theory/heterotic dual compactifications.
We illustrate the three different classes of examples with heterotic vector bundles of
structure groups $\text{S}(\text{U}(n)\times \text{U}(1))$ and $\text{S}(\text{U}(n)\times \mathbb{Z}_k)$, as well as purely 
non-Abelian ones having a centralizer in $\text{E}_8$ with one U(1) factor. There we also illustrate the 
utility of the St\"uckelberg mechanism to correctly match the number of geometrically massless 
U(1)'s on both sides of the duality. In Section \ref{sec:conclusions}, we conclude and discuss 
possibilities for future works. This work has four Appendices: we present the birational map of the quartic in $\mathbb{P}^2(1,1,2)$ to Tate and Weierstrass form in Appendix 
\ref{appendix:TateandWeierstrass}); Appendix \ref{app:noU1s} contains examples
with no U(1) factor, consistently reproducing \cite{Morrison:1996pp}; in Appendix \ref{lemma}
we state the condition for the existence of two independent rational sections and
Appendix \ref{app:noncomm} illustrates explicitly the non-commutativity of 
the stable degeneration limit and the birational map to Weierstrass form.

\section{Heterotic/F-theory Duality and U(1)-Factors}
\label{sec:review+insights}

The aim of this section is two-fold: On the one hand, we review those aspects of 
heterotic/F-theory duality in eight and six dimensions 
that are relevant for the analyses performed in this work. 
On the other hand,  we point out subtleties and new insights into heterotic/F-theory duality 
with Abelian U(1) factors. In particular, we discuss in detail split spectral covers for 
heterotic vector bundles with non-simply connected gauge groups and the heterotic 
St\"uckelberg mechanism.

In Section \ref{sec:hetF8d}, we discuss 
the fundamental duality in 8d, the standard stable degeneration limit in Weierstrass form and the 
principal matching of gauge groups and moduli. There, we also discuss a subtlety in performing
the stable degeneration limit of F-theory models with U(1) factors due to the non-commutativity 
of this limit with the map to the Weierstrass model. Section \ref{SUNellipticcurve} contains
a discussion of the spectral cover construction for SU($N$) bundles as well as of 
split spectral covers giving rise to  $\text{S}(\text{U}(N-1)\times \text{U}(1))$ and $\text{S}(\text{U}(N-1)\times \mathbb{Z}_k)$ bundles. In Section \ref{sec:hetF6d} we briefly review
heterotic/F-theory duality in 6d, before we discuss 
the St\"uckelberg effect in the effective theory of heterotic compactifications with U(1) bundles
as well as the relation to gluing condition of rational sections in Section \ref{sec:Stueckelberg}.

In the review part, we mainly follow \cite{Aspinwall:1996mn,Friedman:1997yq,Andreas:1998zf}, 
to which we refer for further details.

\subsection{Heterotic/F-Theory duality in eight dimensions} 
\label{sec:hetF8d}

The basic statement of heterotic/F-Theory duality is that the heterotic String (in the following, 
we always concentrate on the $E_8\times E_8$ string) compactified on a torus, 
which we denote by $Z_1$, is 
equivalent to F-Theory compactified on an elliptically fibered K3 surface $X_2$. The first 
evidence is that the moduli spaces $\mathcal{M}$ of these two theories coincide and are 
parametrized by
\be
\mathcal{M}=\text{SO}(18,2, \mathbb{Z}) \big\backslash  \text{SO}(18,2, \mathbb{R}) \big/(\text{SO}(18) \times \text{SO}(2))\, \times \mathbb{R}^+.
\ee
From a heterotic perspective this is just the parametrization of the complex and K\"ahler 
structure of the torus $Z_1$ as well as of the 
24 Wilson lines. On the F-Theory side it corresponds to the 
moduli space of algebraic K3 surfaces $X_2$
with Picard number two. The last factor corresponds to the 
vacuum expectation value of the dilaton and the size of the base $\mathbb{P}^1$ of $X_2$,
respectively. 

Lower-dimensional dualities are obtained, applying the adiabatic argument 
\cite{Vafa:1995gm}, by fibering the eight-dimensional duality over a 
base manifold $B_{n-1}$ of complex dimension $n-1$ that is common to both theories of the 
duality.

\subsubsection{The standard stable degeneration limit}   \label{stabledegeneration}

In order to match the moduli on both sides of the duality, the K3 surface $X_2$ has to undergo 
the so-called stable degeneration limit. In this limit it splits into two half K3 surfaces $X_2^+$,
$X_2^-$ as
\beq
	X_{2}\,\,\rightarrow\,\, X_{2}^{+}\cup_{Z_1}X_{2}^-\,.
\eeq
Each of these are an elliptic fibration $\pi_\pm: X_2^{\pm} \longrightarrow \mathbb{P}^1$ over 
a $\mathbb{P}^1$. These two $\mathbb{P}^1$ intersect in precisely one point so that the two 
half K3 surfaces intersect in a common elliptic fiber which is identified with the heterotic elliptic 
curve, $X_2^+\cap X_2^-=Z_1$. 
On the heterotic side, the stable degeneration limit corresponds to the large elliptic fiber
limit of $Z_1$.

\paragraph{Matching the gauge groups}
~\\
\noindent The F-theory gauge group is given by the singularities of the elliptic fibration of $X_2$, 
determining the non-Abelian part $G$, and its rational sections, which correspond to Abelian 
gauge fields \cite{Vafa:1996xn,Morrison:1996na,Bershadsky:1996nh}. In stable degeneration
the non-Abelian gauge group of F-theory is distributed into the two half K3 surfaces $X^\pm_2$ 
and matched with the heterotic side as follows. 

It is a well-known fact that the homology lattice of a half K3 surface $X_2^\pm$ is given in 
general by
\be
H_2(X_2^\pm, \mathbb{Z}) = \Gamma_8 \oplus U
\ee
Here, $U$ contains the classes of the elliptic fiber
as well as of the zero section. $\Gamma_8$ equals the root lattice of E$_8$ and splits into a 
direct sum of two contributions: the first contribution is given by the Mordell-Weil group of the 
rational elliptic surface while the second contribution is given by a sub-lattice which forms,
for the half K3 surfaces $X_{2}^\pm$ at hand, 
the root-lattice of the part  $G_\pm$ of the non-Abelian F-theory gauge group $G= G_+\times G_-$ that is of ADE type. 
In the F-Theory limit all 
fiber components are shrunken to zero size and the half K3 surface develops a singularity of type 
$G_\pm$. The possible ADE-singularities  in the case of complex 
surfaces have been classified by Kodaira  
\cite{kodaira1963compact}. Thus, 
one can always read off the corresponding gauge group from the order of vanishings of $f, g$ and $\Delta$ once the half K3 has been brought into affine Weierstrass normal form
\be \label{eq:WSF}
y^2 z = x^3 + f x z^2 +g z^3, \qquad \Delta = f^3 + 27 g^2\, ,
\ee
with $f$ and $g$ in $\mathcal{O}(4)$ and $\mathcal{O}(6)$ of $\mathbb{P}^1$, 
respectively.
For convenience of the reader, we reproduce Kodaira's classification in Table \ref{Kodairatable}. 

\begin{table}[ht]
\begin{center}
\begin{tabular}{|c|c|c|c|}\hline
order $(f)$ & order $(g)$ & order $(\Delta)$ & singularity \nn \\ \hline
$\ge 0$ & $\ge 0$ & $0$ & none \\
$ 0$ & $ 0$ & $n$ & A$_{n-1}$ \\
$ \ge 1$ & $ 1$ & $2$ & none \\
$ 1$ & $ \ge 2$ & $3$ & A$_{1}$ \\
$ \ge 2$ & $ 2$ & $4$ & A$_{2}$ \\
$ 2$ & $ \ge 3$ & $n+6$ & D$_{n+4}$ \\
$ \ge 2$ & $ 3$ & $n+6$ & D$_{n+4}$ \\
$ \ge 3$ & $ 4$ & $8$ & E$_{6}$ \\
$ 3$ & $ \ge 5$ & $9$ & E$_{7}$ \\
$ \ge 4$ & $ 5$ & $10$ & E$_{8}$ \\ \hline
\end{tabular} 
\caption{The Kodaira classification of singular fibers. Here $f$ and $g$ are the coefficients of the Weierstrass normal form, $\Delta$ is the discriminant as defined in \eqref{eq:Delta} and order refers to their order of vanishing at a particular zero.} \label{Kodairatable}
\end{center}
\end{table}

In contrast, the gauge group on the heterotic side is encoded in two vector bundles $V_1, V_2$ 
that generically carry the structure group $\text{E}_8$. Their respective commutants inside the two 
ten-dimensional $\text{E}_8$ gauge groups of the heterotic string are to be identified with the
F-theory gauge group.  As observed in \cite{Friedman:1997yq}, the moduli 
space of semi-stable $\text{E}_8$-bundles on an elliptic curve $E$ corresponds to the complex 
structure moduli space of a half K3 surface $S$ whose anti-canonical class is given by 
$E$. Furthermore, if $S$ has an ADE singularity of type $\tilde{G}_\pm$ then the structure 
group of $V_1$, $V_2$ is reduced to the centralizer $H_\pm$ of $\tilde{G}_\pm$ within 
$\text{E}_8$, respectively.
In heterotic/F-theory duality, a matching of the gauge group is then
established by identifying $S\equiv X_2^\pm$ yielding $\tilde{G}_\pm\equiv G_\pm$.

Notice that the full eight-dimensional gauge group is given by $G\times \text{U}(1)^{16-\text{rk}(G)}\times \text{U}(1)^4$. Here, the last factor accounts for the reduction of the 
metric and 
the Kalb Ramond $B$-field  along the two one-cycles of the torus in the heterotic string. 
From the F-theory perspective, all U(1) factors arise from the reduction of the $C_3$ field 
along those 2-forms in the full K3 surface $X_2$ that are orthogonal to the zero section and the 
elliptic fiber. In particular, the  $\text{U}(1)^{16-\text{rk}(G)}$ arises from the generators of
the Mordell-Weil group of the half K3 surfaces.
For a derivation in Type IIB string theory, see the recent work \cite{Douglas:2014ywa}.

\paragraph{Matching complex structure and bundle moduli}
~\\
\noindent In this section, we discuss how the heterotic moduli can be recovered from the data of the 
F-theory K3 surface \cite{Morrison:1996pp, Donagi:1998vw}. Here we restrict the discussion to 
the moduli of the heterotic torus $Z_1$ and the vector bundle (i.e.~Wilson line) moduli, ignoring 
the heterotic dilaton modulus.

So far, this discussion has been restricted to the case that the elliptic fibration of the K3 surface 
is described by a Weierstrass model. In this case, the standard stable degeneration procedure 
applies. Given the Weierstrass form \eqref{eq:WSF} for $X_2$ with $f$, $g$ sections of $\mathcal{O}(8)$ 
and $\mathcal{O}(12)$ on $\mathbb{P}^1$,
respectively, we can expand these degree eight and twelve polynomials in the affine 
$\mathbb{P}^1$-coordinate $u$ as
\be \label{eq:fgK3}
f = \sum_{i=0}^8 f_i u^i \, , \qquad g= \sum_{i=0}^{12} g_i u^i \, .
\ee
Then, the two half K3 surfaces $X_2^{\pm}$ arising in the stable degeneration limit, given as the 
Weierstrass models
\be
X^{\pm}: \qquad y^2 z = x^3 + f^{\pm} z + g^{\pm} z^3,
\ee
can be obtained from \eqref{eq:fgK3} by  the split 
\be \label{eq:splitfg}
f^+ = \sum_{i=0}^4 f_i u^i\, , \quad f^- = \sum_{i=4}^8 f_i u^i\, , \qquad g^+ = \sum_{i=0}^{6} g_i u^i\, ,\quad g^-= \sum_{i=6}^{12} g_i u^i\, ,
\ee
The "middle" polynomials $f_4$ and $g_6$ 
correspond to the heterotic elliptic curve, which then reads
\be
y^2 z = x^3 + f_4 x z^2 + g_6 z^3,
\ee
while the "upper" and "lower" coefficients correspond to the moduli of the two E$_8$-bundles.  

\subsubsection{Stable degeneration with other elliptic fiber types}

The focus of the present work are F-theory compactifications with one U(1) gauge 
group arising from elliptically fibered Calabi-Yau manifolds with two rational sections. These  
are naturally constructed using the fiber ambient space $\text{Bl}_1 \mathbb{P}^{(1,1,2)}$ 
\cite{Morrison:2012ei}. More 
precisely, we will consider K3 surfaces given as sections $\chi$ of the anti-canonical bundle 
$-K_{\mathbb{P}^1 \times \text{Bl}_1 \mathbb{P}^{(1,1,2)}}$ of 
$\mathbb{P}^1 \times \text{Bl}_1 \mathbb{P}^{(1,1,2)}$ reading
\beq
	\chi=\sum_i s_i\chi^i\,.
\eeq
Here $s_i$ and $\chi^i$ are sections of the anti-canonical bundles
$-K_{\mathbb{P}^1}=\mathcal{O}(2)$ and $-K_{\text{Bl}_1 \mathbb{P}^{(1,1,2)}}$, respectively.

Then, analogously to the above construction, one 
can perform a stable degeneration limit for these hypersurfaces as well. However, it is crucial to
note here that we can perform the stable degeneration limit in
\textit{two} possible ways, as shown in Figure
\ref{noncommutativediagram}: one way is to first take the Weierstrass normal form $W_\chi$
(upper horizontal arrow) of the full 
$\text{Bl}_1 \mathbb{P}^{(1,1,2)}$-model and then apply the split 
\eqref{eq:splitfg} to obtain two half K3 surfaces (right vertical arrow); a second way is to 
first perform stable degeneration (left vertical arrow), yielding two half K3 surfaces $\chi^\pm$ with elliptic fibers in $\text{Bl}_1 \mathbb{P}^{(1,1,2)}$, and then compute their Weierstrass normal forms $W_{\chi}^\pm$ (lower horizontal arrow). It is important
to realize, however,  that these two possible paths in the diagram \ref{noncommutativediagram} 
do not commute, as explicitly shown in Appendix \ref{app:noncomm}. 
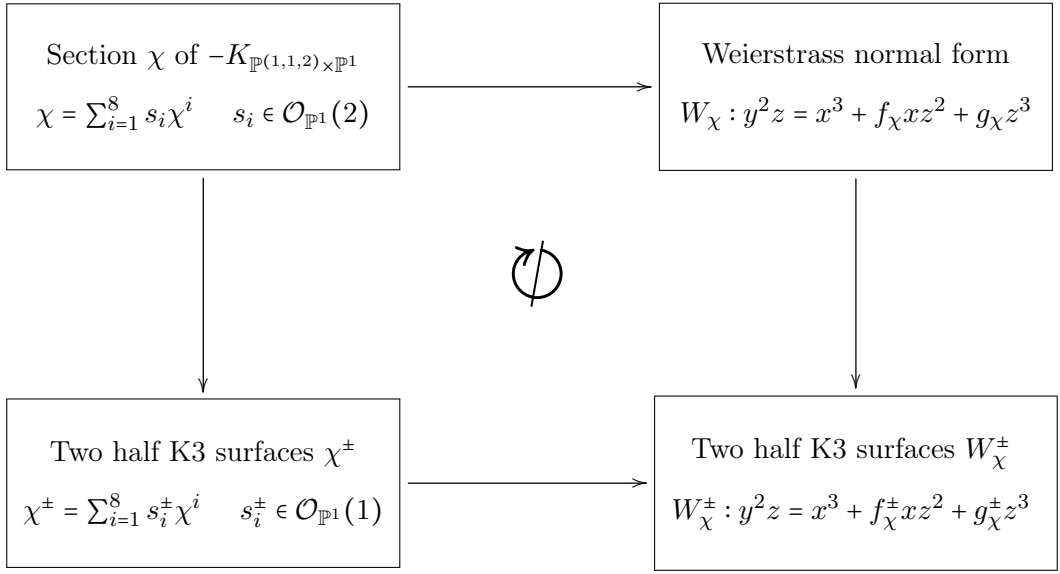
\begin{figure}[t!]
\be
\xymatrix{
\fbox{\parbox{5cm}{\begin{center} \text{Section $\chi$ of }$-K_{\mathbb{P}^{(1,1,2)}\times \mathbb{P}^1} $ \\ \vspace{0.3cm} $\chi = \sum_{i=1}^8 s_i \chi^i$ \quad $s_i \in \mathcal{O}_{\mathbb{P}^1}(2)$ \end{center} }}\ar@{->}[rr]\ar@{->}[dd]  && \fbox{\parbox{5cm}{\begin{center} \text{Weierstrass normal form} \\ \vspace{0.3cm}  $W_{\chi}: y^2 z=x^3 + f_\chi x z^2 +g_\chi z^3$ \end{center}  }}\ar@{->}[dd] \\
& \text{\Huge{ $\ncirclearrowright$}} & \\
\fbox{\parbox{5cm}{\begin{center} \text{Two half K3 surfaces} $\chi^{\pm}$ \\ \vspace{0.3cm}  $\chi^\pm = \sum_{i=1}^8 s_i^\pm \chi^i$ \quad $s_i^\pm \in \mathcal{O}_{\mathbb{P}^1}(1)$ \end{center} }}\ar@{->}[rr]  && \fbox{\parbox{5cm}{\begin{center} Two half K3 surfaces $W_{\chi}^\pm$ \\ \vspace{0.3cm}  $W_{\chi}^{\pm}: y^2 z=x^3 + f_\chi^\pm x z^2 +g_\chi^\pm z^3$ \end{center} }  } } \nn
\ee
\caption{Computing the Weierstrass normal form (horizontal arrows) and taking the stable degeneration limit (vertical arrows) does not commute.} \label{noncommutativediagram}
\end{figure}

We propose and demonstrate in Section \ref{verticalmodel}
that the natural order to perform heterotic/F-theory duality for models with U(1) factors and 
different elliptic fiber types than the Weierstrass model is to first perform stable degeneration 
with the other fiber type (left vertical arrow) and then
compute the Weierstrass model of the resulting half K3-fibrations (lower horizontal arrow) in 
order to analyze the physics of the model.

\subsection{Constructing SU($N$) bundles on elliptic curves and fibrations} \label{SUNellipticcurve}

While the description of the structure group of the vector bundle via half K3 surfaces as reviewed 
above is of high conceptual importance, it is in practice often easier to construct vector 
bundles with the desired structure group directly. In the following section, we review this 
construction for SU($N$) bundles and specializations thereof which has been studied first in \cite{Atiyah01011957} and was further developed in \cite{Donagi:1997dp,Friedman:1997yq,Friedman:1997ih}.

In this section $E$ always denotes an elliptic curve with a marked point $p$. The curve is defined 
over a general field $K$, which does not necessarily have to be algebraically closed. 
It is well-known that an elliptic curve with a point $p$ has a representation in the 
Weierstrass normal form \eqref{eq:WSF}, where $p$ reads $[x:y:z]=[0:1:0]$. 
In general, a degree zero line bundle $\mathcal{L} \longrightarrow E$, i.e.~a U(1)-bundle, takes the form
\be \label{eq:L}
\mathcal{L} = \mathcal{O} (q) \otimes \mathcal{O}(p)^{-1} = \mathcal{O}(q-p)\, ,
\ee
where $q$ denotes another arbitrary rational point on $E$ (note that over $K=\mathbb{C}$ 
every point is rational). Furthermore, we note that there is a bijective map $\phi$ from the 
elliptic curve $E$ onto its Picard group of degree zero which is defined by
\be
\phi: E \longrightarrow \text{Pic}^0(E)\, , \qquad q \mapsto q-p\, . \label{mapEtoPicE}
\ee
In particular, this extends to an isomorphism from the space of line bundles onto Pic$^0(E)$, defined by div$(\mathcal{L})=q-p$. To be more precise, the divisor map `div' is to be applied to a meromorphic section\footnote{This map is independent of the section chosen.} of $\mathcal{L}$.  For later purposes, we also recall that the addition law in Pic$^0(E)$ can be identified with the group law on $E$, which we denote by $\boxplus$, via this isomorphism.

A semi-stable SU$(N)$ vector bundle of degree zero $V$ is then given as the sum\footnote{If two or more points coincide, the situation is a bit more subtle. In this case the bundle is given by $\oplus_{i=1}^N \mathcal{O}(q_i-p)I_{r_i}$, where $r_i$ denotes the multiplicity of the point $q_i$ and $I_r$ is inductively defined by the extension sequence $0 \longrightarrow \mathcal{O} \longrightarrow I_{r-1} \longrightarrow \mathcal{O} \longrightarrow 0$. However, one usually only considers bundles up to S-equivalence which identifies $I_r$ with $\mathcal{O}^{\oplus r}$.} of $N$ holomorphic line bundles $\mathcal{L}_i$, i.e.~we have $V= \oplus_{i=1}^N \mathcal{L}_i= \oplus_{i=1}^N \mathcal{O}(q_i-p)$, such that the determinant of $V$ is trivial. The latter implies that
\be
\otimes_{i=1}^N \mathcal{O}(q_i-p) = \mathcal{O} \qquad \Leftrightarrow \qquad \boxplus_{i=1}^N q_i = 0\, .
\ee
An SU$(N)$ vector bundle is therefore determined by the choice of $N$ points on $E$ that sum up to zero. Any such $N$-tupel is determined by a projectively unique element of $H^0(E, \mathcal{O}(Np))$, i.e.~a function with $N$ zeros and a pole of order $N$ at $p$. 
Thus, the moduli space of SU$(N)$ vector bundles is given by
\be
\mathcal{M}_{\text{SU}(N)} = \mathbb{P} H^0(E, \mathcal{O}(N p))\, .
\ee 
In the affine Weierstrass form of $E$, given by \eqref{eq:WSF}, the coordinates $x$, $y$ have a 
pole of order two and three at $p$, respectively. Accordingly, any element of $\mathbb{P} H^0(E, \mathcal{O}(N p))$ enjoys an expansion
\be
w=c_0 + c_1 x + c_2 y+ c_3 x^2 + \ldots +\begin{cases} c_{N}x^{\frac{N}{2}} \,\,\,\,\,\,\,\,\,\,\, \text{if $N$ is even\,,}\\ c_N x^{\frac{N-3}{2}}y \,\,\, \text{if $N$ is odd\,,}\end{cases}\label{spectralcover}
\ee
with $c_i\in K$.
The section $w$ is called the spectral cover polynomial and has $N$ common points with $E$,
called the \textit{spectral cover}, which define the desired SU$(N)$ bundle. Counting parameters 
of \eqref{spectralcover}, one is lead to the conclusion that
\be
\mathcal{M}_{\text{SU}(N)} = \mathbb{P}^{N-1}\, .
\ee

Finally, a comment on rational versus non-rational points is in order. Generically, $p$ is the only 
point on $E$ over a general field $K$. 
However, in such a situation, it is possible to mark $N$ points in a rational way by the polynomial 
$w=0$ which give rise to an SU$(N)$ bundle in the way just described. Nevertheless, under the 
circumstances that there 
are additional rational points on $E$ and the spectral cover polynomial $w=0$ specializes 
appropriately, the structure group reduces in a certain way, as discussed next.

\subsubsection{Vector bundles with reduced structure groups}

As described in the previous section, the choice of $N$ points on $E$ describes an SU$(N)$ 
bundle. If we consider just an elliptic curve $E$ over $\mathbb{C}$, which is the geometry 
relevant for the construction of heterotic compactifications to eight dimensions, the spectral 
cover \eqref{spectralcover} can be factorized completely. This corresponds to
the 16 possible Wilson lines on $T^2$.  

In contrast, if we consider an elliptic 
curve over a function field, as it arises in elliptic fibrations $Z_n$ of $E$ over a base $B_{n-1}$ 
used for lower-dimensional heterotic compactifcations, the $N$ points are the zeros of 
\eqref{spectralcover}, which defines an $N$-section of the fibration. In non-generic situations,
where subsets of the $N$ sheets of this  $N$-section are well-defined globally, i.e.~are 
monodromy invariant,  the structure group of the vector bundle is reduced. For example, a 
separation into two sets of $k$ and $l$ sheets (with $k+l=N$), respectively, results in the 
structure group  $\text{S}(\text{U}(k)\times \text{U}(l))$. The spectral cover defined by \eqref{spectralcover} is 
called  ``split'' and defines a reducible variety inside $Z_n$, see 
e.g.~\cite{Blumenhagen:2005ga,Marsano:2009gv,Blumenhagen:2009yv,Grimm:2010ez}. In the most extreme 
case, one could have $k=1$ and $l=N-1$. 
In this case, the elliptic fibration of $Z_n$ has to necessarily have another well-defined section
in addition to the section induced by the rational point $p$: it is the one marked by the 
component of the spectral cover $w=0$ with just one sheet \cite{Choi:2012pr}.  Thus, 
the fiber $E$ has a rational point, which we denote by $q$ and one can, as discussed above, 
define a U(1) line bundle $\mathcal{L}$ via \eqref{eq:L}. As this fiberwise well-defined line 
bundle is also well-defined globally, it will induce a line bundle on $Z_n$, whose first 
Chern class is given, up to vertical components, by the difference of the sections induced by $q$ 
and $p$, cf.~\cite{Aspinwall:2005qw}.  The structure group $H$ of the vector bundle 
is in this case given by 
\beq
 H =\text{S}(\text{U}(N-1) \times \text{U}(1))\,. 
\eeq 
We will see later that this situation will be relevant situation for the construction of U(1) 
gauge groups in the heterotic string.  

We emphasize that for a U(1)-bundle alone there is no spectral cover polynomial 
\eqref{spectralcover} that would be able to detect this additional rational 
point. This is due to the fact that there is no function that has only one zero 
on an elliptic curve $E$ . However, if the rational point is accompanied by further points, rational 
or non-rational points over the  field $K$, it can very well be seen by the spectral cover.  For 
instance, one could construct 
a spectral cover from $q$ and $-q$, which would describe a bundle of structure group 
$\text{S}(\text{U}(1)\times \text{U}(1))$.

Finally, it needs to be discussed what interpretation should be given to the case that the rational 
point $q$ on the curve $E$ happens to be torsion of order $k$. In this case the structure group
$H$  
reduces further to $\text{S}(\text{U}(N) \times \mathbb{Z}_k)$. To argue for this, we invoke again
a fiberwise argument. The fiber at a generic point in $B_{n-1}$ admits a line bundle 
$\mathcal{L}=\mathcal{O}(q-p)$ with the property that $\mathcal{L}^k = \mathcal{O}$. This is
clear as the transition functions $g_{ij}$ will be subject  to $g_{ij}^k =1$ in \v{C}ech 
cohomology as $k$ times the Poincar\'{e} dual of its first Chern class is trivial. However, this is 
just the statement that the fiberwise structure group of $\mathcal{L}$ is contained in 
$\mathbb{Z}_k$. Employing that $p$ and $q$ are globally well-defined sections then suggests 
that this argument also holds  on $Z_n$.

\subsection{Heterotic/F-Theory duality in six dimensions}
\label{sec:hetF6d}

Six-dimensional heterotic/F-Theory duality arises by fibering the eight-dimensional duality 
over a common base $B_1=\mathbb{P}^1$, employing the adiabatic argument \cite{Vafa:1995gm}. Thus, the heterotic string gets compactified on an 
elliptically fibered K3 surface $Z_2$ while  F-Theory is compactified on an
elliptically fibered Calabi-Yau threefold $X_3$ over a Hirzebruch surface $\mathbb{F}_n$.
Our presentation will be brief and focused on the later applications in this work. For a more 
detailed discussion we refer to the classical reference 
\cite{Morrison:1996na,Morrison:1996pp,Friedman:1997yq} or the reviews 
\cite{Aspinwall:1996mn,Andreas:1998zf}.

On the F-theory side, the non-Abelian gauge content originates from the codimension one singularities of the elliptic fibration $\pi: X_3\rightarrow \mathbb{F}_n $. The singularity is generically of type $G'$, which gets broken down to $G \subset G'$ by monodromies corresponding to outer automorphisms of the Dynkin diagram of $G'$ \cite{Bershadsky:1996nh}. The resulting gauge symmetry is encoded in the order of vanishing of the coefficients $a_0, a_1, a_2, a_3, a_4, a_6$ in the Tate form of the elliptic fibration
\be
y^2 + a_1 x y +a_2 = x^3 + a_3 x^2 + a_4 x + a_6  \, .
\ee

In addition, we introduce the \textit{Tate vector} $\vec{t}_X$ which encodes the orders of vanishing of the coefficients $a_i$ along the divisor defined by the local coordinate $X$:
\be
\vec{t}_X = (\text{ord}_X \left(a_0 \right),\,\text{ord}_X \left(a_1 \right),\, \text{ord}_X \left(a_2 \right),\,\text{ord}_X \left(a_3 \right),\, \text{ord}_X \left(a_4 \right),\, \text{ord}_X \left(a_6 \right),\, ord_X \left(\Delta \right))\,.
\ee
The results of the analysis of singularities, known as Tate's algorithm, are summarized in Table 
\ref{Tatetable} \cite{tate1975algorithm,Bershadsky:1996nh}, see, however, \cite{Katz:2011qp} 
for subtleties. 

\begin{table}[ht]
\begin{center}
\begin{tabular}{|c|c|c|c|c|c|c|c|}\hline
Type & Group & $a_1$ & $a_2$ & $a_3$ & $a_4$ & $a_6$ & $\Delta$\\ \hline
$I_0$ & \{e\} & $0$ & $0$ & $0$ & $0$ & $0$ & $0$\\ \hline
$I_1$ & \{e\} & $0$ & $0$ & $1$ & $1$ & $1$ & $1$\\ \hline
$I_2$ & SU(2) & $0$ & $0$ & $1$ & $1$ & $2$ & $2$\\ \hline

$I_3$ & SU(3) & $0$ & $1$ & $1$ & $2$ & $3$ & $3$\\ \hline
$I_{2k}, k \geq 2$ & Sp($k$) & $0$ & $0$ & $k$ & $k$ & $2k$ & $2k$\\ \hline
$I_{2k+1}, k\geq 1$ & Sp(k) & $0$ & $0$ & $k+1$ & $k+1$ & $2k+1$ & $2k+1$\\ \hline
$I_{n}, n\geq 4$ & SU(n) & $0$ & $1$ & $\left[ \frac{n}{2} \right]$ & $\left[ \frac{n+1}{2} \right]$ & $n$ & $n$\\ \hline
$II$ & \{e\} & $1$ & $1$ & $1$ & $1$ & $1$ & $2$\\ \hline
$III$ & SU(2) & $1$ & $1$ & $1$ & $1$ & $2$ & $3$\\ \hline
$IV$ & Sp(1) & $1$ & $1$ & $1$ & $2$ & $2$ & $4$\\ \hline
$IV$ & SU(3) & $1$ & $1$ & $1$ & $2$ & $3$ & $4$\\ \hline
$I_0^{*}$ & G$_2$ & $1$ & $1$ & $2$ & $2$ & $3$ & $6$\\ \hline
$I_0^{*}$ & Spin(7) & $1$ & $1$ & $2$ & $2$ & $4$ & $6$\\ \hline
$I_0^{*}$ & Spin(8) & $1$ & $1$ & $2$ & $2$ & $4$ & $6$\\ \hline
$I_1^{*}$ & Spin(9) & $1$ & $1$ & $2$ & $3$ & $4$ & $7$\\ \hline
$I_1^{*}$ & Spin(10) & $1$ & $1$ & $2$ & $3$ & $5$ & $7$\\ \hline
$I_2^{*}$ & Spin(11) & $1$ & $1$ & $3$ & $3$ & $5$ & $8$\\ \hline
$I_2^{*}$ & Spin(12) & $1$ & $1$ & $3$ & $3$ & $5$ & $8$\\ \hline
$I_{2k-3}^{*}, k\geq 3$ & SO($4k+1$) & $1$ & $1$ & $k$ & $k+1$ & $2k$ & $2k+3$\\ \hline
$I_{2k-3}^{*}, k\geq 3$ & SO($4k+2$) & $1$ & $1$ & $k$ & $k+1$ & $2k+1$ & $2k+3$\\ \hline
$I_{2k-2}^{*}, k\geq 3$ & SO($4k+3$) & $1$ & $1$ & $k+1$ & $k+1$ & $2k+1$ & $2k+4$\\ \hline
$I_{2k-2}^{*}, k\geq 3$ & SO($4k+4$) & $1$ & $1$ & $k+1$ & $k+1$ & $2k+1$ & $2k+4$\\ \hline
$IV^{*}$ & F$_4$ & $1$ & $2$ & $2$ & $3$ & $4$ & $8$\\ \hline
$IV^{*}$ & E$_6$ & $1$ & $2$ & $2$ & $3$ & $5$ & $8$\\ \hline
$III^{*}$ & E$_7$ & $1$ & $2$ & $3$ & $3$ & $5$ & $9$\\ \hline
$II^{*}$ & E$_8$ & $1$ & $2$ & $3$ & $4$ & $5$ & $10$\\ \hline
non-min & - & $1$ & $2$ & $3$ & $4$ & $6$ & $12$\\ \hline
\end{tabular}
\end{center}
\caption{Results from Tate's algorithm.} \label{Tatetable}
\end{table}

On the heterotic side, the gauge theory content is encoded in a vector bundle $V$ where the following discussion restricts itself to the case of $\text{SU}(N)$ bundles. 
The six-dimensional bundle is defined in terms of two pieces of data, the spectral cover curve 
$C$ as well as a line bundle $\mathcal{N}$ which is defined on $C$.  Here, the spectral curve 
$C$ is the 6d analog of the points defined by the section of $\mathbb{P} H^0(E, \mathcal{O}(np))$ which has been discussed in \ref{SUNellipticcurve}. In six dimensions, the elliptic curve 
$Z_1\cong E$ gets promoted to an elliptic fibration, which can again be described by a Weierstrass form
\eqref{eq:WSF} with coordinates $x$, $y$,$ z$ being sections of 
$\mathcal{L}^2$, $\mathcal{L}^2$, $\mathcal{O}$, respectively, for 
$\mathcal{L} = K_{\mathbb{P}^1}^{-1}=\mathcal{O}(-2)$ and
coefficients $f$, $g$ being in $\mathcal{L}^4$, $\mathcal{L}^6$, respectively. Accordingly, the 
coefficients $c_i$ entering the spectral cover  \eqref{spectralcover} are now sections of 
$\mathcal{M} \otimes \mathcal{L}^{-i}$, $\mathcal{M}$ being an arbitrary line bundle on 
$\mathbb{P}^1$ and $C$ is defined as the zero locus of the section of \eqref{spectralcover}.
Thus, $C$ defines an $N$-sheeted ramified covering of $\mathbb{P}^1$, i.e.~a Riemann surface. 
The spectral cover
$C$ defines the isomorphism class of a semi-stable vector bundle above each fiber. The line 
bundle $\mathcal{N}$ describes the possibility to twist the vector bundle without changing its 
isomorphism class. It is usually fixed, up to a twisting class $\gamma$, by the condition $c_1(V)=0$ for an $\text{SU}(N)$ bundle, see \cite{Friedman:1997yq} for more details.

\subsection{Massless U(1)-factors in heterotic/F-theory duality}
\label{sec:Stueckelberg}

As previously discussed, the perturbative heterotic gauge group is obtained by commuting the 
structure group $H$ of the vector bundle $V$ within the two $\text{E}_8$-bundles. 
We propose  three possibilities, how U(1) 
gauge groups can arise from this perspective:
\begin{itemize}
\item $H$ contains a U(1) factor, i.e.~it is of the form $H=H_1 \times  \text{U}(1)$, or $\text{S}(\text{U}(M)\times \text{U}(1))$,
\item  $H$ contains a discrete piece, i.e.~a part taking values in $\mathbb{Z}_k$,
\item or $H$ is non-Abelian and is embedded such that its centralizer in $\text{E}_8$ 
necessarily contains a U(1)-symmetry. 
\end{itemize}
The construction of a vector bundle for  these three different cases employing spectral
covers has been discussed in  Section \ref{SUNellipticcurve}. 

In general, we emphasize that  U(1)-factor which arises from a split spectral cover is usually 
massive due to a St\"uckelberg mass term which is induced by the first Chern class of the U(1) 
background bundle, as we review next. However, if the U(1) term originates from a background 
bundle with non-Abelian structure group there is tautologically no U(1) background factor which 
could produce a mass term and therefore the six-dimensional U(1) field is expected to be 
massless. Finally, we propose, for consistency with heterotic/F-theory duality, 
that a six-dimensional torsional section gives rise to a point-like 
instanton with discrete holonomy, as introduced in \cite{Aspinwall:1998xj}.
Indeed, we will show in several examples in Section \ref{section:examples} that 
all three cases naturally appear in heterotic duals of F-theory compactifications with one U(1) and that a
matching of the corresponding gauge groups is only possible if the arising spectral covers
are interpreted as suggested here.

\subsubsection{The heterotic St\"uckelberg mechanism} \label{massU(1)}

In six and lower dimensions, it is well-known that a geometric St\"uckelberg effect can render a 
U(1) gauge field massive \cite{green2012superstring}. To identify the mass term of the 
six- (or lower-) dimensional U(1), one considers the modified ten-dimensional kinetic term of the Kalb-Ramond 
field $B_2$ which reads, up to some irrelevant proportionality constant, as
\be
\mathcal{L}^{10d}_{\text{kin}} = H \wedge \star_{10d} H, \qquad H= dB_2 - \frac{\alpha'}{4} \left(\omega_{3Y} (A) - \omega_{3L} (\Omega) \right) \,.\label{10dfieldstrength}
\ee
Here, $\star_{10d}$ is the ten-dimensional Hodge-star and $\omega_{3Y}$, $\omega_{3L}$ 
denote the Chern-Simons terms of the gauge field and 
the spin connection, respectively. The  physical effect we want to discuss here arises from the 
former one, which is given explicitly by
\be
\omega_{3Y} = \text{Tr} \left(A \wedge d A + \frac{2}{3} A \wedge A \wedge A \right)\, .
\ee

Now, we perform a dimensional reduction of the kinetic term \eqref{10dfieldstrength} in
the background of a U(1) vector bundle on the heterotic compactification manifold $Z_n$, 
ignoring possible additional non-Abelian vector bundles for simplicity.
On such a background, we can expand the ten-dimensional field strength $F^{10d}_{U(1)}$ 
of the U(1) gauge field as 
\beq
F^{10d}_{U(1)} = F_{U(1)} + \mathcal{F}=F_{U(1)}+k_\alpha\omega^\alpha\,.
\eeq
 Here $\mathcal{F}=\frac{1}{2\pi i}c_1(\mathcal{L})$ is the background field strength, i.e.~the first 
Chern class $c_1(\mathcal{L})$ of the corresponding U(1) line bundle $\mathcal{L}$, and
$F_{U(1)}$ is the lower-dimensional gauge field.  
We have also introduced a basis $\omega^\alpha$, $\alpha=1,\ldots, b_2(Z_n)$,
of harmonic two-forms in $H^{(2)}(Z_n)$, where $b_2(Z_n)$ is the second Betti number of 
$Z_n$, along which we have expanded $\mathcal{F}$ into the flux quanta $k_\alpha$.  
We also expand the ten-dimensional Kalb-Ramond field as
\beq
 B_2=b_2+\rho_\alpha\omega^\alpha \,,
\eeq
where $b_2$ is a lower-dimensional two-form and $\rho_\alpha$ are lower-dimensional axionic
scalars .
We readily insert this reduction ansatz into the ten-dimensional field strength $H$ in 
\eqref{10dfieldstrength}, where we only take into account the gauge part, to arrive, dropping 
unimportant prefactors, 
at the lower-dimensional kinetic term for the axions $\rho_\alpha$ of the form
\be \label{eq:LStueck}
\mathcal{L}_{\text{St\"uck.}} = G^{\alpha\beta}\left(d\rho_\alpha + k_\alpha A_{U(1)} \right)\wedge \star\left(d\rho_\beta + k_\beta A_{U(1)} \right)\, .
\ee
Here we introduced the kinetic metric
\beq
	G^{\alpha\beta}=\int_{Z_n}\omega^\alpha\wedge \star \omega^\beta\,.
\eeq

It is clear from \eqref{eq:LStueck} that a single U(1) gauge field will be massive if we have a 
non-trivial $c_1(\mathcal{L})\neq 0$. However, we note that in the presence of multiple massive 
U(1) gauge fields, appropriate linear combinations of them in the kernel of the mass matrix can 
remain massless U(1) fields. A computation similar to the one above has appeared in 
e.g.~\cite{Blumenhagen:2005ga}, where also the case of multiple U(1)'s is systematically 
discussed.

\subsubsection{U(1)-factors from gluing conditions in half K3-fibrations}
\label{gluing}

We conclude this section by discussing the connection between 
the previous field theoretic considerations that lead to a  
massive U(1) via the St\"uckelberg action \eqref{eq:LStueck} on the heterotic side
and geometric glueing conditions of the sections of half K3 surfaces to global sections of the
two half K3-fibrations $X_n^\pm$ that arise in stable degeneration as well 
as of the full Calabi-Yau manifold $X_n$. 
We illustrate this in six dimensions 
for concreteness, i.e.~for F-theory on a Calabi-Yau threefold $X_3$ and the heterotic string on a 
K3 surface $Z_2$, although the arguments hold more generally.

It is well known that the number of U(1) factors in F-theory is given by the rank of the 
Mordell-Weil group, i.e.~by the number of independent global rational sections of the elliptic 
fibration $X_3$ in addition to the zero section. As discussed in Section \ref{sec:hetF8d}, 
a half K3 surface with ADE singularity of rank $r$ has an $(8-r)$-dimensional Mordell-Weil group. 
Promoting the half K3 surface to a fibration of half K3 surfaces over the base $\mathbb{P}^1$,
such as the threefolds $X_3^\pm$, 
these sections need 
not necessarily give rise to sections of the arising three-dimensional elliptic fibrations. 
Considering the half K3 surfaces arising in the stable degeneration limit of F-theory, there
are those sections which also give rise to sections of e.g.~the full half K3 fibration $X_3^+$.
These sections will induce a U(1)-factor on the heterotic side which is embedded into one 
$\text{E}_8$-bundle and which is generically massive with a mass arising via the St\"uckelberg action
\eqref{eq:LStueck}. If there is also a globally well-defined section of 
the  other half K3 fibration $X_3^-$ and this section glues
with the section in the first half K3 fibration $X^+_3$, then there is a linear combination of 
U(1)'s that remains massless in the St\"uckelberg mechanism  on the heterotic side.   
This is clear from the F-theory perspective, as these two sections can then be glued  along the 
heterotic two-fold $Z_2$ to a section of the full Calabi-Yau threefold $X_3$, i.e.~give rise to an 
element in its Mordell-Weil group and a massless U(1).

\section{Dual Geometries with Toric Stable Degeneration} \label{verticalmodel}

In this section, we describe a toric method in order to study the stable degeneration limit of an elliptically fibered K3 surface.
This stable degeneration limit will be at the heart of the analysis of the examples of heterotic/F-theory dual geometries in Section \ref{section:examples}.
In a first step in Section \ref{sec:K3surface}, we construct an elliptically fibered K3 surface. Afterwards in Section \ref{sec:K3surfacefibrations}, we fiber this K3 surface over another $\mathbb{P}^1$ which is used to investigate the splitting of the K3 surface into two rational elliptic surfaces, as discussed in Section \ref{subsection:stabledegenerationlimit}. In the concluding Section \ref{sec:proof12K3}, we prove that the surfaces arising in the stable degeneration of the K3 surface indeed 
define rational elliptic surfaces, i.e.~half K3 surfaces.

\subsection{Constructing an elliptically fibered K3 surface}
\label{sec:K3surface}

We start by constructing a three-dimensional reflexive polytope $\Delta_3^\circ$ given as the 
convex hull of vertices that are the rows of the following matrix:
\be
\left(\begin{array}{ccc|c}
a_1 & b_1 & 0 & x_1\\
\vdots & & 0 & x_i \\
a_n & b_n & 0 & x_n \\
0 & 0& 1 & U \\
0 & 0 & -1 & V \\
\end{array}\right)\, .
\ee
Here $(a_i\, b_i)$ denote the points of a two-dimensional reflexive polytope $\Delta_2^\circ$, which will specify the geometry of
the elliptic fiber $E$. It is embedded into $\Delta_3^\circ$ in the $xy$-plane, see the first picture in
Figure \ref{fig:Delta3}.  
The last column contains the  homogeneous coordinate associated to a given vertex.
We label the rays of the two-dimensional polytope counter-clockwise by the coordinates $x_1,\dots x_n$.
In addition, we assign the coordinates $U$, $V$ to the points $(0\, 0\, 1)$ and $(0\, 0\, -1)$ which correspond to 
the rays of the fan of the $\mathbb{P}^1$-base. We use the shorthand notation $\mathbb{P}^1_{[U:V]}$ to indicate its homogeneous coordinates. Finally, we use the notation $\rho_H$ for the ray 
with corresponding homogeneous coordinate $H$. We denote by $\Sigma_{3}$ the natural simplicial fan associated to 
$\Delta_3^\circ$ and denote the corresponding toric variety over the fan of $\Delta^\circ$ as $\mathbb{P}_{\Sigma_3}$. Provided  
a fine triangulation of the polytope $\Delta_3^\circ$ has been chosen, the toric ambient space 
$\mathbb{P}_{\Sigma_3}$ will be Gorenstein and terminal.  

A general section $\chi$ of the anti-canonical bundle 
$\mathcal{O}_{\mathbb{P}_{\Sigma_3}}(-K_{\mathbb{P}_{\Sigma_3}})$ defines a smooth elliptically fibered  
K3 surface $X_2$. The ambient space of its elliptic fiber $E$ is the toric variety 
$\mathbb{P}_{\Sigma_2}$ that is constructed from the fan $\Sigma_2$ of the polytope 
$\Delta_2^\circ$ induced by $\Sigma_3$. 
As the toric fibration of $\Sigma_2$ over $\Sigma_{\mathbb{P}^1}$ is direct, the section $\chi$ takes the form 
\be
\chi= { s}_i \eta^i\, \qquad \text{for}\qquad {s}_i = { s}_i^0 U^2 +  { s}_i^1 UV + { s}_i^2 V^2. \label{CYconstraint}
\ee
Here $\eta^i$ are the sections of the anti-canonical bundle of 
$\mathcal{O}_{\mathbb{P}_{\Sigma_2}}(-K_{\mathbb{P}_{\Sigma_2}})$, i.e.~the range of the index $i$ is given by the number of integral points in $\Delta_2$,  
and $s_i^k$, $k=1,2,3$, are constants.
Note that, for a very general\footnote{A point is very general if it lies outside a countable union of closed subschemes of positive codimension} $X_2$,  the dimension $h^{(1,1)}(X)$, 
of the cohomology group $H^{(1,1)}(X_2,\mathbb{C})$  can be computed combinatorically from the pair of reflexive polyhedra $\Delta_3$, $\Delta_3^\circ$ by a 
generalization of the Batyrev's formula \cite{2010arXiv1011.1003B}:
\begin{eqnarray}
h^{(1,1)}(X) &=& l(\Delta^\circ) -n - 1 - \sum_{\Gamma^\circ} l^*(\Gamma^\circ) + \sum_{\Theta^\circ} l^*(\Theta^\circ)l^*({\hat \Theta}^\circ) \, . \label{homologydimension}
\end{eqnarray}
Here $l(\Delta)$  ($l^*(\Delta)$) denote the number of (inner) points of the $n$-dimensional polytope $\Delta$. In addition, $\Gamma$ ($\Gamma^\circ$) denote the codimension one faces of $\Delta$ ($\Delta^\circ$), while $\Theta$ denotes a codimension two face with $\hat \Theta$ being its dual.
\begin{figure}[h!]
\begin{center}
\includegraphics[scale=0.22]{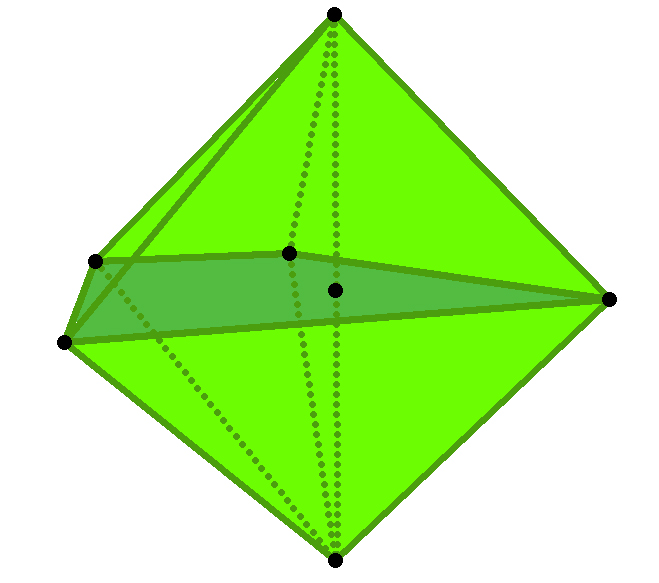}\hspace{1cm}
\includegraphics[scale=0.3]{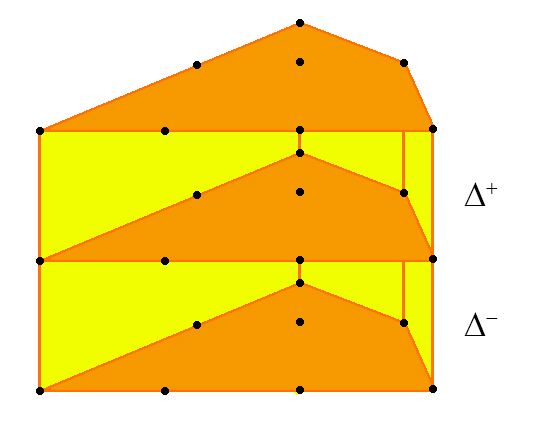}
\end{center}
\caption{On the left we show the reflexive polytope $\Delta_3^\circ$, while its dual $\Delta_3$ is shown on the right. In this example, the ambient space for the elliptic fiber, specified by $\Delta_2^\circ$, is given by $\text{Bl}_1 \mathbb{P}^{(1,1,2)}$.}
\label{fig:Delta3}
\end{figure}
  
\subsection{Constructing K3 fibrations}
\label{sec:K3surfacefibrations}

As a next step, we fiber this ambient space over a second $\mathbb{P}^1_{[\lambda_1,\lambda_2]}$ with homogeneous 
coordinates $\lambda_1$, $\lambda_2$. 
The following construction is such that the generic fiber consists of a smooth K3 surface $X_2$ over a generic 
point of $\mathbb{P}^1_{[\lambda_1,\lambda_2]}$ and a split fiber, i.e.~a splitting into two half K3 
surfaces, over a distinguished point of $\mathbb{P}^1_{[\lambda_1,\lambda_2]}$, as explained below.

The four-dimensional polytope which describes this construction is given by 
\be
\Delta_4 = \left\{ \left(m_1, m_2, m_3, m_4\right) \in \mathbb{Z}^4 \mid \left(m_1, m_2, m_3\right) \in \Delta_3,\, \, -1 \le m_4 \le 1, \begin{cases} m_4 \ge -1 \,\,\, \text{if} \,\,\, m_3 \le 0\,, \\ m_4 \ge m_3 -1 \,\,\, \text{if} \,\,\, m_3 \ge 0\,.  \end{cases}\hspace{-0.3cm} \right\}\,.
\ee
Here, $\Delta_3$ denotes the dual polytope of $\Delta_3^\circ$, cf.~the second picture in Figure 
\ref{fig:Delta3}. 
The faces of $\Delta_4$ are given by the (intersection of the) hyperplanes  
\be
m_4=1\,, \qquad m_4=-1\,, \qquad m_4=-1+m_3\,, \qquad m_3=-1\,, \quad m_3=1\,, \qquad \sum_{j=0}^2 a_i^j m_j=1\, ,
\ee
where the last expression is given by the the defining hyperplanes of $\Delta_2$, the dual of 
$\Delta_2^\circ$. We denote by $\Sigma_4$ the fan associated to the dual polytope $\Delta^\circ_4$ of $\Delta_4$. In particular, the normal vectors of the facets of $\Delta_4$ give the rays of 
$\Sigma_4$. To be explicit, the rays of $\Sigma_4$ are given by the rows of the matrix
\be
\left(\begin{array}{cccc|c}  a_1 & b_1 & 0 &0 & x_1\\
\vdots & & 0&0 & x_i \\
a_n & b_n & 0&0 & x_n \\ 0&0&1&0& U \\ 0&0&0&1 &\lambda_1 \\ 0&0&-1&1 & \mu \\ 0&0&-1&0& V \\ 0&0&0&-1 & \lambda_2  \end{array} \right) \label{Deltatoricpoints}\, .
\ee
We note that the coordinates assigned to its rays as displayed in \eqref{Deltatoricpoints} transform as follows under the $\mathbb{C}^*$-actions
\begin{eqnarray} \label{scalingrelationsdP2}
\left(U: \lambda_1 : \mu : V: \lambda_2 \right) \sim \left(a^{-1}U: ab^{-1} \lambda_1 : a^{-1} bc^{-1}\mu :b^{-1}c V:c^{-1}\lambda_2 \right) 
\end{eqnarray}
with  $a,b,c \in \mathbb{C}^*$.

In analogy to the discussion in the previous section, a section $\chi_4$ of the  anti-canonical bundle 
$-K_{\mathbb{P}_{\Sigma_4}}$  of the toric variety $\mathbb{P}_{\Sigma_4}$
defines a three-dimensional smooth Calabi-Yau manifold $\mathfrak{X}$. In particular, the Calabi-Yau constraint \eqref{CYconstraint} generalizes as
\be
\chi_4 = s_i \eta^i,  
\ee
where the $\eta^i$ are given as before and the coefficients $s_i$  now read
\begin{eqnarray} \label{coefficients4dCYconstraint}
s_i (U,V, \lambda_1, \lambda_2, \mu) &=&s_i^1\lambda_1 \lambda_2 U^2 + s_i^2 \lambda_1^2 \mu U^2 +s_i^3 \lambda_2^2 U V + s_i^4
  \lambda_1 \lambda_2 \mu U V + s_i^5 \lambda_1^2 \mu^2 U V \nn \\ && + s_i^6
  \lambda_2^2 \mu V^2 + s_i^7 \lambda_1 \lambda_2 \mu^2 V^2 + s_i^8
  \lambda_1^2 \mu^3 V^2 \,  
\end{eqnarray}
with constants $s_i^j \in \mathbb{C}$.

We proceed by observing that the projection on the last two columns in \eqref{Deltatoricpoints}  yields the polytope $\Delta_{dP_2}^\circ$ 
of the toric variety $dP_2$, cf.~Figure \ref{fig:dP2}. 
\begin{figure}[t!]
\begin{center}
\includegraphics[scale=0.36]{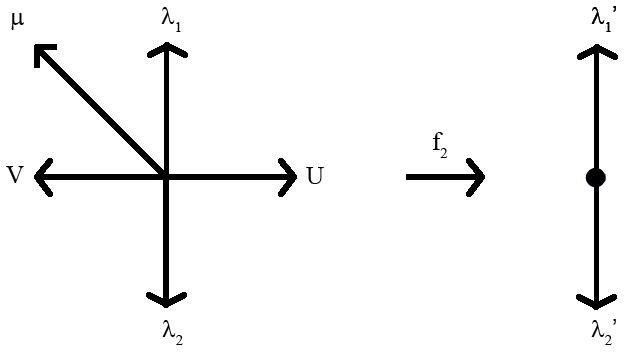}
\end{center}
\caption{The toric morphism $f_2$.}
\label{fig:dP2}
\end{figure}
Denoting the fan of $\Delta_{dP_2}^\circ$  by $\Sigma_{dP_2}$ this projection gives 
rise to a toric map
\be
f_1: \, \,\,\Sigma_4 \longrightarrow \ \Sigma_{dP_2}\,. 
\ee
In addition, $dP_2$ is fibered over the $\mathbb{P}^1_{[\lambda_1':\lambda_2']}$ as can be seen by projecting onto the fourth column of $\Delta_4$, cf.~Figure \ref{fig:dP2}, i.e.~there is a toric map
\be
f_2:\,\,\, \Sigma_{dP_2} \longrightarrow \Sigma_{\mathbb{P}^1}\,, \label{mapf2}
\ee
where $\Sigma_{\mathbb{P}^1}$ is the fan of $\mathbb{P}^1_{[\lambda_1':\lambda_2']}$. Note that this $\mathbb{P}^1$ is isomorphic to $\mathbb{P}^1_{[\lambda_1:\lambda_2]}$.
We denote the composition map of the two by  $f=f_2 \circ f_1$.

In summary, we have the following diagram of toric morphisms and induced maps on $\mathfrak{X}$:
\be
\xymatrixcolsep{1.8pc}\xymatrix{
\mathbb{P}_{\Sigma_4} \ar@{->}[rr]^f \ar@{->}[rd]^{f_1} \ar@{->}[rr]^f && \mathbb{P}^1_{[\lambda_1':\lambda_2']} \\&dP_2 \ar@{->}[ru]^{f_2} &\\&dP_2 \ar@{->}[rd]^{\pi_2} &\\
\mathfrak{X} \ar@{^{(}->}[uuu] \ar@{->}[ur]^{\pi_1} \ar@{->}[rr]^{\pi} && \mathbb{P}^1_{[\lambda_1':\lambda_2']} \ar@{->}[uuu]^{\cong}
}\nn
\ee
Here we denote the toric maps $f_1$, $f_2$, $f$ and their induced morphisms of toric 
varieties by the same symbol, respectively. Note that for a generic point, the fiber of $\pi$ is given by a smooth K3 surface $X_2$. 

In order to prepare for the discussion of the stable degeneration limit, we proceed by discussing the fibration map in more detail. 
For this purpose, we note the correspondence of facets and rays as displayed in Table \ref{tablefacetsrays}.
\begin{table}[t] 
\begin{center}
\begin{tabular}{|c|c|c|}
ray & facet & constraint \\
$\rho_{\lambda_1}$ & $m_4=-1$ & $s_{\lambda_1} = s_i^3 U + s_i^6 \mu$ \\
$\rho_{\mu}$ & $m_4= m_3-1$ & $s_{\mu} = s_i^1 \lambda_1 + s_i^3 V$ \\
$\rho_{\lambda_2}$ & $m_4 =1$ & $s_{\lambda_2} = s_i^2U^2 + s_i^5 UV + s_i^8 V^2$
\end{tabular}
\end{center}
\caption{The correspondence between the rays of $\Delta_{dP_2}^\circ$ and the facets of $\Delta_{dP_2}$. The last column displays the global sections that embed the associated divisor into $\mathbb{P}^1$ and $\mathbb{P}^2$, respectively. The coefficients on the right-hand side refer to equation \eqref{coefficients4dCYconstraint}.}
\label{tablefacetsrays}
\end{table} 
The dual $\Delta_{dP_2}$ of $\Delta_{dP_2}^\circ$ with associated monomials is shown in Figure 
\ref{figure2ddiagram}. These monomials are the global sections of $K_{dP_2}$ and are constructed according to \cite{Batyrev:1994hm}
\be
\chi_{dP_2} = \sum_{P  \in \Delta_{dP_2}} \prod_{P^* \in \Delta_{dP_2}^*} a_P x_{P^*}^{\bra P, P^* \ket + 1}\,.
\ee
Here $x_{P^*}$ denotes the coordinate which is associated to the corresponding ray of the toric diagram and $a_P$ are 
constants.
\begin{figure}[H]
\begin{center}
\includegraphics[width=6cm]{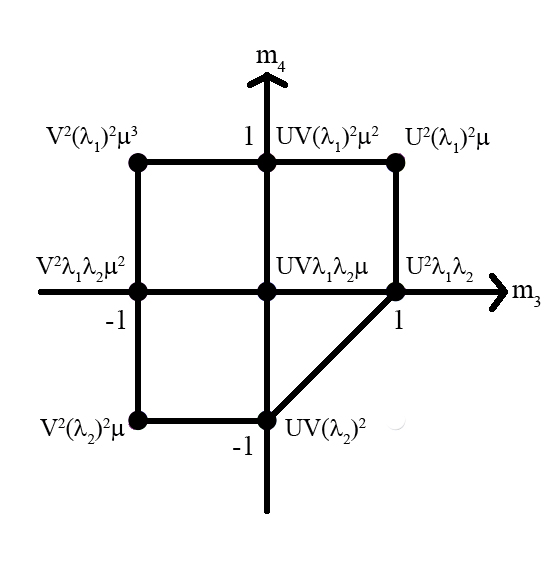}
\caption{The dual polytope $\Delta_{dP_2}$ and the associated monomials.}
 \label{figure2ddiagram}
\end{center}
\end{figure}
By the correspondence between cones of $\Delta^\circ_{dP_2}$ and 
vertices of $\Delta_{dP_2}$ the vertex corresponding to the monomial $V^2 \lambda_2^2 \mu$ 
is dual to the cone spanned by the rays $\rho_U$ and $\rho_{\lambda_1}$.
We denote the coordinates associated to the two rays of 
$\mathbb{P}^1_{[\lambda_1:\lambda_2]}$ inside $\Delta_{dP_2}^0$ appearing in \eqref{mapf2} by 
$\rho_{\lambda_1'}$ and  $\rho_{\lambda_2'}$. 
Note that  $f_2^{-1}(\rho_{\lambda_1'}) = \{\rho_{\lambda_1}, \rho_{\mu}\}$, while 
$f_2^{-1}(\rho_{\lambda_2'}) = \{\rho_{\lambda_2}\}$. 

\subsection{The toric stable degeneration limit} 
\label{subsection:stabledegenerationlimit}

In the following, we aim to show that the general fiber of the map $\pi$ gives rise to a smooth K3 surface while the pre-image of 
the point $[\lambda_1': \lambda_2']=[1:0]$ gives rise
to a degeneration into two half K3 surfaces $X_{2}^\pm$ that intersect in the elliptic fiber $Z_1$ 
over the point of intersection of the two $\mathbb{P}^1$ which are the respective bases of their elliptic fibrations.

Let us first consider the toric variety $f_2^{-1}(\lambda_2'=0)$ 
corresponding to the pre-image in $\Delta_{dP_2}^0$ of  
$\rho_{\lambda'_2}$. It is given by the star of $\rho_{\lambda_2}$ in $\Delta_{dP_2}^0$ which is 
just the generic fiber of $f_2$. Indeed, if $\lambda_2=0$, the coordinates $\mu$ and $\lambda_1$ are non-vanishing due to the Stanley-Reisner ideal. Two of the scaling relations \eqref{scalingrelationsdP2} can be used in order to eliminate the latter two variables while the remaining (linear combination) endows the coordinates $U$, $V$ with the well-known scaling relations of $\mathbb{P}^1_{[U,V]}$. In addition, the  monomials associated to the vertices of the dual facet of $\rho_{\lambda_2}$ give rise to the following sections
\be \label{explicitformsi}
s_{\lambda_2} := s_i^2U^2 + s_i^5 UV + s_i^8 V^2\, ,
\ee 
as follows from \eqref{coefficients4dCYconstraint} by setting $\lambda_2=0$.
These provide precisely the global sections of $O_{\mathbb{P}^1}(2)$ that are needed for the Veronese embedding, i.e.~the embedding of $\mathbb{P}^1_{[U,V]}$ into $\mathbb{P}^2$ as a conic 
\be
[U:V] \longmapsto [U^2: UV: V^2]\, . \label{embeddingP1}
\ee

In contrast, the preimage of $\rho_{\lambda_1'}$ consists of the two divisors $\lambda_1=0$ and 
$\mu=0$. In this case the Stanley-Reisner ideal forbids the vanishing of the coordinates $V$, $\lambda_2$ and $U$, 
$\lambda_2$ respectively. Taking again into account the scaling relations \eqref{scalingrelationsdP2}, one observes that the 
pre-image of the divisor $\lambda_1'=0$ consists of two $\mathbb{P}^1$'s that are given by
\be
D_{\lambda_1} = [U:0:\mu:1:1], \qquad D_{\mu} = [1:\lambda_1:0:V:1]\, .
\ee
These intersect in precisely one point given by $[1:0:0:1:1]$. One identifies the dual facets of $\rho_{\lambda_1}$ and $\rho_{\mu}$ as 
 $m_4=-1$ and $m_4=m_3-1$. In this case the global sections are given by
\beq
s_{\lambda_1} := s_i^3 U + s_i^6 \mu\,,\qquad 
s_{\mu} := s_i^1 \lambda_1 + s_i^3 V\,, \label{coefficientshalfK3}
\eeq
as follows again from \eqref{coefficients4dCYconstraint}.
This induces in this case only the trivial embedding via the identity map. 
Note that the union of the two divisors $D_{\lambda_1}$ and $D_{\mu}$ is given by a degenerate conic
\be
z_1 z_3 = z_2^2 \lambda_1 \mu\,, \qquad \text{with}\qquad  (V^2 \mu, UV, U^2 \lambda_1) \mapsto [z_1: z_2:z_3] \in \mathbb{P}^2\,,
\ee
which splits as just observed into the two lines $z_1=0$, $z_3=0$ at $\lambda_1=0$ and $\mu=0$.

A similar reasoning applies to the pre-image of $\rho_{\lambda_1'}$ under the composite map $f$.
As noted above, we have $f^{-1}(\rho_{\lambda_1'})=\{\rho_\mu,\rho_{\lambda_1}\}$, which implies that 
the pre-image is given by the two divisors  $\mathbb{P}_{\Sigma_3^+}=\{\mu=0\}$ and 
$\mathbb{P}_{\Sigma_3^-}=\{\lambda_1=0\}$ in $\mathbb{P}_{\Sigma_4}$.
They are obtained as the star of $\rho_{\mu}$ and $\rho_{\lambda_1}$ in $\Delta^0_4$, 
respectively, with their fans $\Sigma_{3}^{\pm}$ induced by $\Sigma_4$. The corresponding
respective dual facets are given by the three-dimensional facets $m_4=-1$ and $m_4= m_3-1$ in 
$\Delta_4$.
In addition, this gives rise to a splitting of $\Delta_3$ as
\begin{eqnarray}
{\Delta}_3^+ &=& \left\{\left(m_1,m_2,m_3 \right) \in \mathbb{Z}^3  \mid \left(m_1, m_2 \right) \in \Delta_2,\,\,\, m_3 \in \{0,1 \}  \right\} \, ,\\
{\Delta}_3^- &=& \left\{\left(m_1,m_2,m_3 \right) \in \mathbb{Z}^3  \mid \left(m_1, m_2 \right) \in \Delta_2,\,\,\, m_3 \in \{-1,0 \}  \right\}\, ,
\end{eqnarray}
which is also referred to as the top and bottom splitting \cite{Candelas:1996su}, c.f.~Figure \ref{fig:Delta3}. Thus, the section of the anti-canonical bundle $\mathcal{O}(-K_{\mathbb{P}_{\Sigma_3}})$ in \eqref{CYconstraint} in the limit becomes the sum of 
\begin{eqnarray}
\chi_{X_2^+} &:=& s^+_i \eta^i\,,\qquad \text{with}\qquad s^+_i = {s^+}_i^0 V +  {s^+}_i^1 \lambda_1\, ,  \nn \\
\chi_{X_2^-} &:=& s^-_i \eta^i\,,\qquad \text{with}\qquad s^-_i = {s^-}_i^0 U +  {s^-}_i^1 \mu\, ,
\end{eqnarray}
so that we can define the two surfaces $X_2^\pm$  as
\be
X_2^+ = X_2\vert_{\mathbb{P}_{\Sigma_3^+}}=\Big\{\chi =\mu=0\Big\}\,,\qquad  X_2^- = X_2\vert_{\mathbb{P}_{\Sigma_3^-}}=\Big\{\chi =\lambda_1=0\Big\} .\label{definitionXpm}
\ee
As we will prove in the next subsection, $X_2^+$  and  $X_2^-$  are two rational elliptic surfaces (half K3 surfaces).

In contrast, the pre-image of $\rho_{\lambda_2'}$ is given by the whole three-dimensional fan 
$\Sigma_3$ as it is also for a generic point in 
$\mathbb{P}^1_{[\lambda_1:\lambda_2]}$. To justify the latter statement inspect the fiber above the 
origin $0$ of the fan $\Sigma_{\mathbb{P}^1}$ corresponding to a 
generic point in $\mathbb{P}^1_{[\lambda_1:\lambda_2]}$.

Finally, we remark that the two rational elliptic surfaces $X_2^+$, $X_2^-$ that arise at the loci $\{\mu=0\}$ and $\{\mu=0\}$, 
respectively, are independent of the K3 surface which appears over the locus $\{\lambda_2=0\}$. In the following, we explain how 
the half K3 surfaces can be obtained from the data of the K3 surface $X_2$ directly. As the notation used so far is rather heavy, 
which is unfortunately necessary, we introduce a slightly easier notation that will be used in the discussion of explicit examples in 
section \ref{section:examples}. We rewrite a general hypersurface constraint as
\be
\chi= {s}_i \eta^i, \qquad {s}_i = {s}_{i1}^0 U^2 +  {s}_{i2}^1 UV + {s}_{i3}^2 V^2\, , \label{newconvention}
\ee
which requires, depending on the situation at hand, the  following identifications between the coefficients of 
\eqref{newconvention} and of \eqref{coefficients4dCYconstraint}:
\begin{eqnarray}
s_{i1} \equiv s_i^2\, , \qquad s_{i2} \equiv  s_i^5 \, , \qquad s_{i3} \equiv s_i^8\, ,\nn\\
\text{or}\qquad s_{i1} \equiv s_i^1\,,\qquad
s_{i2} \equiv s_i^3\,,\qquad 
s_{i3}  \equiv s_i^6 \,.
\end{eqnarray}
However, it is crucial to note that the pair of coordinates $U,V$ is only suited to describe the base $\mathbb{P}^1$ of the K3 
surface $X_2$, 
while the base coordinates $\mathbb{P}^1$'s of $X_2^+$ and $X_2^-$ are given by $\lambda_1$, $V$ and $U$, $\mu$, 
respectively. 

\subsection{Computing the canonical classes of the half K3 surfaces $X_2^\pm$}
\label{sec:proof12K3}

In this subsection, we discuss how the half K3 surfaces $X_2^\pm$ can be re-discovered in the toric stable degeneration limit. Note that  the two components $\mathbb{P}_{\Sigma_3^+}$ and $\mathbb{P}_{\Sigma_3^-}$ of 
the degenerate fiber, as divisors in $\mathbb{P}_{\Sigma_4}$, should equal the generic fiber  
$\mathbb{P}_{\Sigma_3}$:
\be \label{eq:P++P-=P}
\mathbb{P}_{\Sigma_3^+} + \mathbb{P}_{\Sigma_3^-} \cong \mathbb{P}_{\Sigma_3}\,.
\ee
In addition, we have 
\be \label{eq:P+-P=0}
\mathbb{P}_{\Sigma_3} \cdot\mathbb{P}_{\Sigma_3^\pm} = 0
\ee
 as
the generic fiber can be moved away from the locus $\lambda_1'=0$, cf.~Figure 
\ref{fig:dP2}.
This allows us to compute the canonical bundle of $\mathbb{P}_{\Sigma_3^\pm}$ using adjunction in
$\mathbb{P}_{\Sigma_4}$ as
\be \label{eq:K_PSigma3pm}
K_{\mathbb{P}_{\Sigma_3}^\pm} = \left.\big(K_{\mathbb{P}_{\Sigma_4}} \otimes \mathcal{O}_{\mathbb{P}_{\Sigma_4}}(\mathbb{P}_{\Sigma_3^\pm})\big) \right\vert_{\mathbb{P}_{\Sigma_3}^\pm} = \left.K_{\mathbb{P}_{ \Sigma_4}}\right\vert_{\mathbb{P}_{\Sigma_3}^\pm} \otimes \mathcal{O}_{\mathbb{P}_{\Sigma_3}^\pm} \Big(-\mathbb{P}_{\Sigma_3^\pm} \cdot \mathbb{P}_{\Sigma_3^{\mp}} \Big) \, ,
\ee
where we used  \eqref{eq:P++P-=P} and \eqref{eq:P+-P=0}.
Note that the divisor corresponding to the last term equals the class of the ambient space 
$\mathbb{P}_{\Sigma_2}$ of elliptic fiber of $X_2$, 
i.e.~$\mathbb{P}_{\Sigma_3^+} \cdot\mathbb{P}_{\Sigma_3^-} = \mathbb{P}_{\Sigma_2}$.
Making one more time use of the adjunction formula, one finally arrives at
\ba
K_{X_2^\pm} &=& \left.\Big( K_{\mathbb{P}_{\Sigma_3^\pm}} \otimes \mathcal{O}_{\mathbb{P}_{\Sigma_3^\pm}} (X_2^\pm) \Big)\right\vert_{X_2^\pm} =\left.\Big(\left.K_{\mathbb{P}_{ \Sigma_4}}\right\vert_{\mathbb{P}_{\Sigma_3}^\pm} \otimes \mathcal{O}_{\mathbb{P}_{\Sigma_3}^\pm} \big(-\mathbb{P}_{\Sigma_2}  \big)\otimes \mathcal{O}_{\mathbb{P}_{\Sigma_3^\pm}} (X_2^\pm)\Big)\right\vert_{X_2^\pm}\\
&=& \left.\mathcal{O}_{\mathbb{P}_{\Sigma_3}^\pm} \Big(-\mathbb{P}_{\Sigma_2}\Big)\right\vert_{X^\pm}=\mathcal{O}_{X_2^\pm}\Big(-\mathcal{E}\Big)\,,
\ea
where we used \eqref{eq:K_PSigma3pm} in the second equality  and 
$K_{\mathbb{P}_{\Sigma_4}}\vert_{\mathbb{P}_{\Sigma_3}^\pm}=\mathcal{O}_{\mathbb{P}_{\Sigma_3^\pm}}(X_2^\pm)$.
Thus, the anti-canonical class of $X_2^\pm$ is given by that of the elliptic fiber $E$ which leads to the conclusion that 
$X_2^\pm$ is indeed a rational elliptic surface.

\section{Examples of Heterotic/F-Theory Duals with U(1)'s} \label{section:examples}

In the section, we use the tools of Section \ref{verticalmodel} to construct explicit elliptically
fibered Calabi-Yau two- and threefolds whose stable degeneration limit is well under control. Our
geometries have generically two sections, which give rise to a \text{U}(1)-factor in the
corresponding F-Theory compactification. Performing the toric symplectic cut
allows us to explicitly track these sections through the stable degeneration limit and to make
non-trivial statements about the vector bundle data on the heterotic side in which the
$\text{U}(1)$-factor in the effective theory is encoded.
Finally, after having performed the stable degeneration limit as discussed in section
\ref{subsection:stabledegenerationlimit}, we split the resulting half K3 surfaces into the spectral
cover polynomial and the constraint for the heterotic elliptic curve. Then, we determine the common solutions of the latter two constraints which encode the data of a (split) spectral cover. 
The general geometries we consider as well as the procedure we apply their analysis is discussed in Section \ref{sec:generalgeometry}.
Despite the fact that we do not determine the embedding of the structure group into E$_8$ directly, we are able to match the spectral cover with the resulting gauge group in all cases. 
In particular, we consider three different classes of examples.
In subsection \ref{sec:splitcover} we investigate a number of examples whose heterotic dual gives rise to a split spectral cover. This class of examples has generically one U(1) factor embedded into both E$_8$-bundles of which only a linear combination is massless.
The next class of examples considered in subsection \ref{sec:torsionalcover} displays torsional points in its spectral cover. There is one example with a U(1)-factor on the F-theory side which is found to be only embedded into one E$_8$-bundle while the other E$_8$-bundle is kept intact.
Finally, in the last subsection \ref{sec:nonAbelian} we consider an example where the structure group reads SU(2) $\times$ SU(3). However, we argue that it is embedded in such a way that its centralizer necessarily contains a U(1) factor.

\subsection{The geometrical set-up: toric hypersurfaces in $\mathbb{P}^1 \times\text{Bl}_1 \mathbb{P}^{(1,1,2)}$ }
\label{sec:generalgeometry}

For convenience, we recall the three-dimensional polyhedron $\Delta_3^\circ$ for the resolved 
toric ambient space $\mathbb{P}_{\Sigma_3}=\mathbb{P}^1 \times \text{Bl}_1 \mathbb{P}^{(1,1,2)}$. It is given by the 
points
\be
\left(\begin{array}{ccc|c}-1 & 1 & 0 & x_1 \\  -1 & -1 & 0 & x_2 \\ 1 & 0 & 0 & x_3\\ 0 & 1 & 0 & x_4 \\  -1 & 0 & 0 & x_5  \\ 0 & 0 & 1 & U  \\  0 & 0& -1 & V \end{array} \right)\,. \label{DeltaP112}
\ee
Here, $x_1, \dots x_5$ are homogeneous coordinates on the resolved variety $\text{Bl}_1 \mathbb{P}^{(1,1,2)}$, while $U,V$ denote the two homogeneous coordinates of $\mathbb{P}^1$. In particular, $x_5$ resolves the $A_1$-singularity of the space $\text{Bl}_1{\mathbb{P}}^{(1,1,2)}$.

A generic section of the anti-canonical bundle of the ambient space $\mathbb{P}_{\Sigma_3}$
takes the form
\begin{eqnarray}
\chi \!\!\!&\!\!: =\!\!&\!\!\! s_1 x_1^4x_4^3x_5^2+ s_{2}x_1^3x_2x_4^2x_5^2+s_{3}x_1^2x_2^2x_4x_5+s_{4}x_1x_2^3x_5^2+s_{5}x_1^2x_3x_4^2x_5 \nn \\ \!\!\!&\!\!\!\!&\!\!\! +s_{6}x_1x_2x_3x_4x_5  +s_{7}x_2^2x_3x_5  +s_{8}x_3^2x_4  =0\,,  \label{genericsectionP112}
\end{eqnarray}
where the coefficients  $s_i$ are homogeneous quadratic polynomials in $U,V$.
An elliptically fibered K3 surface is defined by $X_2=\{\chi=0\}$. As can be seen for example its 
Weierstrass form, the K3 surface generically has a  Kodaira fiber 
of type $I_2$ at the locus $s_8=0$. It is resolved by the divisor  
$\left\{x_5=0 \right\} \cap X_2$ as mentioned above.\footnote{As the details of the 
resolution are not important, we can set $x_5=1$ in most computations performed here.} 
In addition, $X_2$ generically has a Mordell-Weil group of rank one. A choice of zero section $S_0$ and generator of the Mordell-Weil 
group $S_1$ are given by
\begin{equation}\label{eq:Rk1GeneralTwoSections}
 S_0 = X_2\cap \{x_1=0 \}, \qquad  S_1 = X_2 \cap \{x_4=0 \}\, .
\end{equation}
Explicitly, their coordinates read
\be \label{sectionsP112}
S_0 = \begin{cases} [0:1:1:s_{7}:-s_{8}] \quad \text{generically}\,, \\ [0:1:1:0:1] \quad \,\,\,\,\,\,\,\,\,\, \text{if}\,\,\, s_{7}=0\,, \\ [0:1:1:1:0] \quad \,\,\,\,\,\,\,\,\,\, \text{if}\,\,\, s_{8}=0\,, \end{cases}
\quad
S_{1} = \begin{cases} [s_{7}:1:-s_{4}:0:1] \quad\text{generically}\,, \\ [0:1:1:0:1] \quad \,\,\,\,\,\,\,\,\,\,  \text{if}\,\,\, s_{7}=0\,, \\ [1:1:0:0:1] \quad \,\,\,\,\,\,\,\,\,\, \text{if}\,\,\, s_{4}=0\,. \end{cases}
\ee
Here we distinguished the special cases with $s_7=0$ and $s_8=0$, respectively, from
the generic situation. 
Using the fact that the generic K3 surface $X_2$ has $h^{(1,1)}=5$ 
\cite{2010arXiv1011.1003B}, we, 
hence, conclude that  the full F-theory gauge group $G_{X_2}$ is\footnote{We note that $s_8=0$ has two solutions on $\mathbb{P}^1$. If we consider higher dimensional bases of the elliptic fibration, we will just have one SU(2) factor as $s_8=0$ is in general an irreducible divisor.} 
\beq
G_{X_2}=\text{SU}(2)  \times \text{SU}(2)  \times \text{U}(1)\,.
\eeq
We  note that if $s_{7}=0$, one observes that the two sections coincide, as was also employed 
in \cite{Morrison:2014era}. That the converse is true is shown in Appendix \ref{lemma}. This is 
expected as the vanishing of $s_7$ can be interpreted as a change of the toric fibre ambient 
space from $\text{Bl}_1 \mathbb{P}^{(1,1,2)}$ to $\mathbb{P}^{(1,2,3)}$, which has a purely 
non-Abelian gauge group \cite{Klevers:2014bqa}.

\subsubsection{Engineering gauge symmetry: specialized sections of $-K_{\mathbb{P}^1 \times \text{Bl}_1 \mathbb{P}^{(1,1,2)}}$}

In order to construct examples with higher rank gauge groups, we tune the coefficients of $\chi^{\text{sing}}$ further. To be concrete, every $s_i$ in \eqref{genericsectionP112} takes the form
\be
s_i = s_{i1} U^2 + s_{i2} U V + s_{i3} V^2 \label{definitionsi}
\ee
and a specialization corresponds to the identical vanishing of some $s_{ij}$. This specialization of coefficients implies that 
$\Delta$, the dual polyhedron of $\mathbb{P}^1 \times \text{Bl}_1 \mathbb{P}^{(1,1,2)}$, gets replaced by the Newton polytope 
$\Delta_{\text{spec.}}$ of the specialized constraint,
compare also Figure \ref{fig:tuning}. As a technical side-remark we note that we strictly speaking refer with $\Delta_{\text{spec.}}$ 
to the convex hull of the points defined by the non-vanishing monomials in \eqref{definitionsi} respectively 
\eqref{genericsectionP112}. As a consequence, also $\Delta^\circ$ changes to the dual of $\Delta_{\text{spec.}}^\circ$. Thus, we 
have secretly changed the toric ambient space by this specialization of coefficients. It is crucial to note that only those polyhedra 
$\Delta_{\text{spec.}}$ give rise to consistent geometries which are reflexive.

\begin{figure}[h!]
\begin{center}
\includegraphics[scale=0.22]{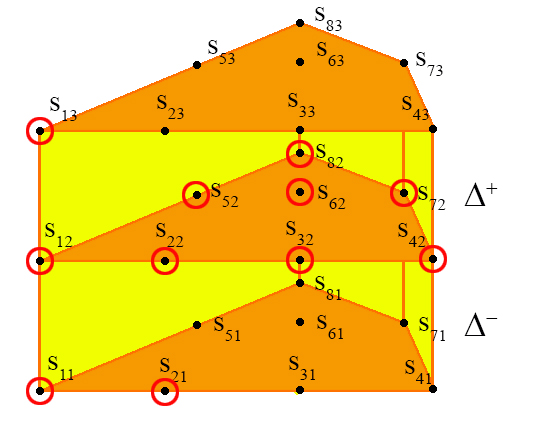}\hspace{1cm}
\includegraphics[scale=0.3]{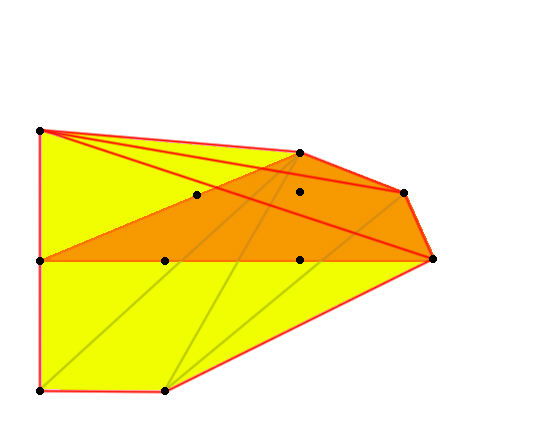}
\end{center}
\caption{This figure illustrates a specialization of the coefficients of the hypersurface $\chi=0$ such that the resulting gauge group is enhanced to E$_7$ $\times$ E$_6$ $\times$ \text{U}(1), see also the discussion in Section \ref{E7E6U1}. In the left picture, the non-vanishing coefficients are marked by a circle in the polytope $\Delta_3$. In the right figure the new polytope, i.e.~the Newton polytope of the specialized constraint $\chi=0$, is shown.}
\label{fig:tuning}
\end{figure}

In order to determine the gauge group of this specialized hypersuface, we need to transform 
$\chi=0$ into its corresponding model into Tate or Weierstrass normal 
form. For convenience, we provide the Weierstrass as well as the Tate form of the most general 
hypersurface in Appendix \ref{appendix:TateandWeierstrass}.

\subsubsection{Stable degeneration and the spectral cover polynomial} \label{section:splitintospectralcover}

As a next step, we show how the K3 surface $X_2$ defined via \eqref{genericsectionP112} can 
be decomposed into the two half K3 surfaces $X_2^\pm$ and the heterotic elliptic curve as well 
as the two spectral cover polynomials, respectively.
First, we write the Calabi-Yau hypersurface equation 
 \eqref{eq:Rk1GeneralTwoSections} for $X_2$ as
 \be
 \label{definitionspectralcover}
\chi= p^+(x_i, s_{j1}) U^2 + p_0 (x_i, s_{j2}) UV + p^-(x_i, s_{j3}) V^2\, ,
\ee
for appropriate polynomials $p^+$, $p_0$ and $p^-$ depending on the fiber coordinates. 
By the results of the previous section, the K3 surface $X_2$  in the semistable degeneration limit can be described  by the half K3 surfaces $X_2^\pm$ with defining equations
\be
X_2^+:\quad p^+(x_i, s_{j1}) U + p_0 (x_i, s_{j2}) V=0, \qquad X_2^-:\quad  p^-(x_i, s_{j3}) V + p_0 (x_i, s_{j2}) U=0\, .
\ee
It follows that generically  the two linearly independent sections \eqref{sectionsP112} of the 
K3  become independent sections in the half K3s, which we denote, by abuse of notation, by the
same symbols. They intersects along the common (heterotic) elliptic curve. 
This is a novel property of our toric degeneration.

In addition, the heterotic elliptic curve is given as $p_0 (x_i, s_j)=0$ while the data of the two 
background bundles are given by the spectral cover polynomials $p^+(x_i, s_{j1})=0$ and 
$p^-(x_i, s_{j3})=0$. The structure group of the two heterotic bundles is then determined by 
the common solutions of $p_0 (x_i, s_j)=0$ with $p^\pm(x_i, s_{j1/3})=0$ using the results 
and techniques from Section \ref{SUNellipticcurve}.

\subsubsection{Promotion to elliptically fibered threefolds} \label{promotionto6d}

Eventually, we are interested in examples of six-dimensional heterotic/F-Theory duality. In order 
to promote the  K3 surfaces $X_2$ constructed above to elliptically fibered threefolds we 
promote the coefficients $s_{ij}$, defined in \eqref{definitionsi}, to sections of a line bundle of 
another $\mathbb{P}^1$ with homogeneous coordinates $R,T$. The base of the previously 
considered K3 surface and the new $\mathbb{P}^1$ form a Hirzebruch surface $\mathbb{F}_n$. 
At this point, we only consider base geometries which are Fano and restrict our discussion to 
$\mathbb{F}_0$ and $\mathbb{F}_1$ for simplicity, avoiding additional singularities in the heterotic elliptic fibration.  
For these two geometries, the explicit form of the $s_{ij}$ reads
\be
s_{ij} = s_{ij1} R^2 + s_{ij2} RT + s_{ij3} T^2\, ,
\ee
for $\mathbb{F}_0$ and
\be
s_i = s_{i11}R + s_{i12} T + s_{i21} R^2 + s_{i22} RT + s_{i23} T^2 + s_{i31}R^3  + s_{i32} R^2 T + s_{i33}R T^2 + s_{i34} T^3\, ,
\ee
for the geometry $\mathbb{F}_1$. 

Next, we observe that the explicit expression of the discriminant of the heterotic Calabi-Yau manifold $Z_n$, which is given by $p_0=0$, contains a factor of 
$s_{82}^2$. While this is certainly not a problem in eight dimensions, as $s_{82}$ is just a 
constant there, it gives rise to an SU(2)-singularity at co-dimension one in the heterotic K3 
surface $Z_2$. This can be cured by a resolution of this singularity through an exceptional divisor 
$E$, which is the analog of $x_5$ in \eqref{DeltaP112}. In particular, the solutions to the 
spectral cover constraint will pass through the singular point in the fiber. 
Thus, one expects that the 
spectral cover curve will pick up contributions from the class $E$ in general. A similar situation 
has been analyzed in  \cite{Donagi:1999ez} where it has been argued that the introduction of 
this exceptional divisor will not change the structure of the spectral cover as an $N$-sheeted 
branched cover of the base except for a finite number of points where it wraps a whole new fiber 
component over the base. As discussed in \cite{Donagi:1999ez}, this introduces more freedom
in the construction of the heterotic vector bundle $V$.
As this work focuses on the mapping of $\text{U}(1)$-factors under the 
heterotic/F-theory duality, we only concentrate on the generic structure of the spectral cover 
and leave the resolution of this singularity as well as an exploration of the freedom in the 
construction of $V$ to future works. 

\subsection{$\text{U}(1)$'s arising from $\text{U}(1)$ factors in the heterotic structure group}
\label{sec:splitcover}

In this section, we consider examples that have an additional rational section in the dual heterotic geometry. We consider
K3 surfaces in F-theory, which are given as hypersurfaces in $\text{Bl}_1 \mathbb{P}^{(1,1,2)} \times \mathbb{P}^1$ with 
appropriately specialized coefficients generating a corresponding gauge symmetry.  
Elliptic K3 fibered Calabi-Yau threefolds are constructed straightforwardly as described in section \ref{promotionto6d}. 
Thus, our following discussion will be equally valid in six dimensions, although, in order to avoid confusion, we present our 
geometric discussions in eight dimensions. Having this in mind we, therefore, drop here in the rest of this work 
the subscripts on all considered manifolds $X_{n+1}$, $X_{n+1}^\pm$ and $Z_n$, respectively, .
In the following, we discuss the main geometric properties of the Calabi-Yau manifold $X$, demonstrate heterotic/F-theory duality 
and relations among different examples by a chain of Higgsings. 

We begin by a summary of key results and by setting some notation.
As we will see, all considered examples have the same heterotic Calabi-Yau manifold $Z$ in common. It is given by the most generic section of the anti-canonical bundle in $\text{Bl}_1 \mathbb{P}^{(1,1,2)}$ reading
\be
Z:\,\,\, s_{12} x_1^4+s_{22}x_1^3 x_2+s_{32}x_1^2x_2^2 +s_{42} x_1 x_2^3 +s_{52} x_1^2 x_3  +s_{62}x_1x_2x_3  +s_{72}x_2^2x_3  +s_{82}x_3^2  =0\, . \label{HeteroticcurveU1}
\ee
The examples considered here only differ among each other by the spectral covers, i.e.~by 
the choice of the coefficients $s_{i1}$ and $s_{i3}$ in \eqref{newconvention}, which will be different in each case.
Generically, all examples will have a \text{U}(1)-factor embedded into the structure groups of both heterotic vector bundles $V_1$, $V_2$. Thus, the maximal non-Abelian gauge group determining any chain of Higgsings is given by E$_7$ $\times$ E$_7$. For later reference we also note the Weierstrass normal form of \eqref{HeteroticcurveU1} is given by
\begin{eqnarray}
W_{Z}:\,\,\, y^2 &=& x^3 + \left( -\frac{1}{48}s_{62}^4 + \frac{1}{6} s_{52} s_{62}^2 s_{72} - \frac{1}{3} s_{52}^2 s_{72}^2 - 
 \frac{1}{2} s_{42} s_{52} s_{62} s_{82} + \frac{1}{6} s_{32} s_{62}^2 s_{82} \right. \nn \\ && \left.+ \frac{1}{3} s_{32} s_{52} s_{72} s_{82} - 
 \frac{1}{2} s_{22} s_{62} s_{72} s_{82} + s_{21} s_{72}^2 s_{82} - \frac{1}{3} s_{32}^2 s_{82}^2 + s_{22} s_{42} s_{82}^2 \right)x \nn \\ && 
 + \left(  \frac{1}{864}s_{62}^6 - \frac{1}{72} s_{52} s_{62}^4 s_{72} + \frac{1}{18} s_{52}^2 s_{62}^2 s_{72}^2 - 
 \frac{2}{27} s_{52}^3 s_{72}^3 + \frac{1}{24} s_{42} s_{52} s_{62}^3 s_{82} \right. \nn \\ && \left. - \frac{1}{72} s_{32} s_{62}^4 s_{82} - 
 \frac{1}{6} s_{42} s_{52}^2 s_{62} s_{72} s_{82} + \frac{1}{36} s_{32} s_{52} s_{62}^2 s_{72} s_{82} + 
 \frac{1}{24} s_{22} s_{62}^3 s_{72} s_{82} \right. \nn \\ && \left. + \frac{1}{9} s_{32} s_{52}^2 s_{72}^2 s_{82} - 
 \frac{1}{6} s_{22} s_{52} s_{62} s_{72}^2 s_{82} - \frac{1}{12} s_{21} s_{62}^2 s_{72}^2 s_{82} + 
 \frac{1}{3} s_{21} s_{52} s_{72}^3 s_{82} \right. \nn \\ && \left. + \frac{1}{4} s_{42}^2 s_{52}^2 s_{82}^2 - 
 \frac{1}{6} s_{32} s_{42} s_{52} s_{62} s_{82}^2 + \frac{1}{18} s_{32}^2 s_{62}^2 s_{82}^2 - 
 \frac{1}{12} s_{22} s_{42} s_{62}^2 s_{82}^2 \right. \nn \\ && \left. + \frac{1}{9} s_{32}^2 s_{52} s_{72} s_{82}^2 - 
 \frac{1}{6} s_{22} s_{42} s_{52} s_{72} s_{82}^2 - \frac{1}{6} s_{22} s_{32} s_{62} s_{72} s_{82}^2 + 
 s_{21} s_{42} s_{62} s_{72} s_{82}^2 \right. \nn \\ && \left.+ \frac{1}{4} s_{22}^2 s_{72}^2 s_{82}^2 - 
 \frac{2}{3} s_{21} s_{32} s_{72}^2 s_{82}^2 - \frac{2}{27} s_{32}^3 s_{82}^3 + 
 \frac{1}{3} s_{22} s_{32} s_{42} s_{82}^3 - s_{21} s_{42}^2 s_{82}^3 \right)\, . \nn \\ &&
\end{eqnarray}
Also, the two generic sections of the heterotic geometry $Z$, denoted by $S_0^Z$ and $S_1^Z$ read in Weierstrass normal form as
\begin{eqnarray}
S_0^Z &=& \left[1:1:0 \right]\,, \label{section1W} \\
S_1^Z &=& \left[ \frac{1}{12} \left(s_{62}^2 s_{72}^2 - 4 s_{52} s_{72}^3 - 12 s_{42} s_{62} s_{72} s_{82} + 
   8 s_{32} s_{72}^2 s_{82} + 12 s_{42}^2 s_{82}^2\right) \right. : \frac{1}{2} s_{82} \left(s_{42} s_{62}^2 s_{72}^2 \right. \nn \\ &&  \left. \left.  - s_{42} s_{52} s_{72}^3 - s_{32} s_{62} s_{72}^3 + 
     s_{22} s_{72}^4 - 3 s_{42}^2 s_{62} s_{72} s_{82} + 2 s_{32} s_{42} s_{72}^2 s_{82} + 
     2 s_{42}^3 s_{82}^2\right) : s_7 \right] \,. \nn \\ && \label{section2W}
\end{eqnarray}
Here, the first section $S_0^Z$ is the point at infinity, while the second section $S_1^Z$ can be seen also in the affine chart. We note that $S_0^Z$ can be obtained by a simple coordinate transformation\footnote{To be more precise, we refer in this case to \eqref{eq:Rk1GeneralTwoSections} as a section of the heterotic geometry.} from $S_0$ defined in \eqref{eq:Rk1GeneralTwoSections}, while $S_1^Z$ needs to be constructed using the procedure of Deligne applied in \cite{Morrison:2012ei}.

\subsubsection{Structure group $ \text{U}(1) \times \text{U}(1) $: $\text{E}_7$ $\times$ $\text{E}_7 \times \text{U}(1) $ gauge symmetry} \label{E7E7U1}

We start with a model which has a heterotic vector bundle of structure group $\text{U}(1) \times \text{U}(1)$. Upon commutation within the group $\text{E}_8 \times \text{E}_8$, the centralizer is given as $\text{E}_7 \times \text{U}(1) \times \text{E}_7 \times \text{U}(1)$. On the heterotic side, the two \text{U}(1) factors acquire a mass term so that only a linear combination of them is massless. This matches the F-theory gauge group given by $\text{E}_7 \times \text{E}_7 \times \text{U}(1)$.  

Our example is specified by the following non-vanishing coefficients:
\begin{center}
\begin{tabular}{|c|c|c|c|}\hline
Coefficient & $X$ & $X^-$ & $X^+$ \\ \hline
$s_1$ & $s_{11} U^2 + s_{12} U V +s_{13} V^2$ & $s_{12}U + s_{13}\mu$ & $s_{11}\lambda_1 + s_{12}V$ \\
$s_{2}$ & $s_{22} U V$ & $s_{22} U$ & $s_{22} V$ \\
$s_{3}$ & $s_{32} U V$ & $s_{32} U$ & $s_{32} V$ \\
$s_{4}$ & $s_{42} U V$ & $s_{42} U$ & $s_{42} V$ \\
$s_{5}$ &$s_{52} U V$ & $s_{52} U$ &  $s_{52} V$ \\
$s_{6}$ & $s_{62} U V$ & $s_{62} U$ & $s_{62} V$ \\
$s_{7}$ & $s_{72} U V$ & $s_{72} U$ & $s_{72} V$ \\
$s_{8}$ & $s_{82} U V$ & $s_{82} U$ & $s_{82} V$ \\ \hline
\end{tabular}
\end{center}
Here, the first columns denotes the coefficient in the Calabi-Yau constraint \eqref{CYconstraint}, the second column indices the 
chosen specialization and the third as well as fourth column contain the resulting coefficient in the half K3 fibrations $X^\pm$, respectively.

 Using the identities \eqref{coefficientshalfK3} and \eqref{definitionXpm}, we readily write down the defining equations for the half K3 surfaces $X_2^\pm$ obtained via stable degeneration explicitly. They read
\begin{eqnarray}
X^+: \quad \left(s_{11} \lambda_1 + s_{12} V \right) x_1^4+s_{22} V x_1^3 x_2+s_{32} V x_1^2x_2^2 +s_{42} V x_1 x_2^3 && \nn \\ +s_{52} V x_1^2 x_3 +s_{62} V x_1x_2x_3  +s_{72} V x_2^2x_3  +s_{82} V x_3^2  &=& 0\,, \nn \\
X^-: \,\,\,\, \quad \left( s_{12} U + s_{13} \mu \right) x_1^4+s_{22} U x_1^3 x_2+s_{32} U x_1^2x_2^2 +s_{42} U x_1 x_2^3 && \nn \\ +s_{52} U x_1^2 x_3 +s_{62} U x_1x_2x_3  +s_{72} U x_2^2x_3  +s_{82} U x_3^2  &=& 0\,.
\end{eqnarray}

\begin{figure}[t!]
\begin{center}
\includegraphics[width=5cm]{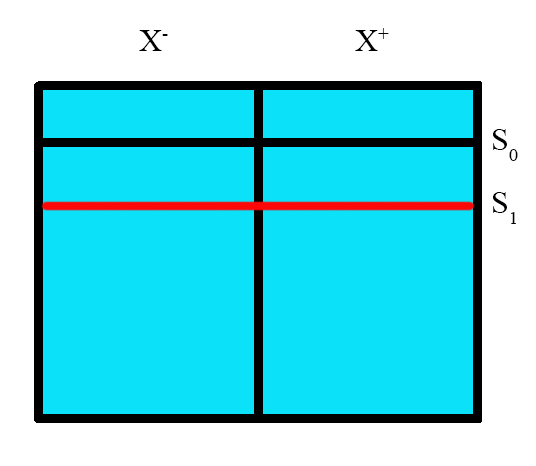}
\end{center}
\caption{This figure shows the stable degeneration limit of a K3 surface which has E$_7\times\text{E}_7\times\text{U}(1)$ gauge symmetry. There are the two half K3 surfaces, $X^+$ and $X^-$ which have both an E$_7$ singularity and intersect in a common elliptic curve. Both have two sections, $S_0$ and $S_1$ which meet in the common elliptic curve. Thus, there are two global sections in the full K3 surface and therefore a $\text{U}(1)$ factor.}
\label{figure:E7E7U1model}
\end{figure}
By explicitly evaluating the Tate coefficients \eqref{Tatecoefficients1}, one obtains the following orders of vanishing for the Tate 
vector at the loci $U=0$ and $V=0$ for the full K3 surface,
\be 
\vec{t}_U = \vec{t}_V = \left( 1 , 2 , 3 , 3 , 5 , 9 \right)\, ,
\ee
which reveal two $\text{E}_7$ singularities. Also, the two half K3 surfaces inherit an $\text{E}_7$ singularity each, which are located at $U=0$, $V=0$, respectively. Thus, the non-Abelian part of the gauge group is given by  $\text{E}_7\times \text{E}_7$. 
Both half K3 surfaces have two rational sections given by $S_0^{X^\pm}=[0:1:0]$ and $S_1^{X^+}=[0:s_{82}V:-s_{72} s_{82}V^2]$ and $S_1^{X^-}=[0:s_{82}U:-s_{72} s_{82}U^2]$, respectively.  In the intersection point of the two half K3's given by $[U:\lambda_1:\mu:V:\lambda_2]=[1:0:0:1:1]$, the sections $S_0^{X^\pm}$ from both half K3's intersect and meet each other, and similarly the sections $S_1^{X^\pm}$ from both half K3's intersect and meet each other, cf.~Figure \ref{figure:E7E7U1model}. Thus,  the six-dimensional gauge group contains a $\text{U}(1)$ factor.

However, if one evaluates the spectral cover, as described in section \ref{section:splitintospectralcover}, one obtains\footnote{Here, and in the following we set  $x_4\rightarrow 1$, $x_5 \rightarrow 1$ for convenience.}
\be
p_+ = s_{11}x_1^4, \qquad p_- = s_{13} x_1^4 \, .
\ee
which is mapped by use of the transformations \eqref{transfP112Weier} onto
\be
p_+^{W} = s_{11}z^4, \qquad p_-^W = s_{13} z^4 \, .
\ee
These expressions gives rise to a constant spectral cover in affine Weierstrass coordinates $x,y$ defined by $z=1$. However, on 
an elliptic curve there does not exist any function which has exactly one zero at a single point, in this case $S_1$.\footnote{However, note that the homogeneous expression $x_1^4$ vanishes indeed at the loci of $S_0^Z$ and of $S_1^Z$.} 
Nevertheless, one can use the two points $S_0^Z$ and $S_1^Z$ on the heterotic elliptic curve in order to construct the 
bundle $\mathcal{L}=\mathcal{O}(S_1^Z-S_0^Z)$ fiberwise, which is symmetrically embedded into both E$_8$-bundles. As 
argued in \cite{Aspinwall:2005qw}, this bundle promotes to a bundle $\mathcal{L}^{6d}$ in six dimensions whose first Chern 
class is given by the difference of the two sections $c_1(\mathcal{L}^{6d})= \sigma_{S_1^Z} - \sigma_{S_0^Z}$, up to fiber 
contributions.  Thus, the heterotic gauge group is given by  E$_7$ $\times$ E$_7$ $\times$ $\text{U}(1)$ $\times$ $\text{U}(1)$. 
Due to the background bundle $\mathcal{L}^{6d}$, these two $\text{U}(1)$'s seem both massive according to the St\"uckelberg
mechanism discussed in Section \ref{sec:Stueckelberg}. However, due to the symmetric embedding into both $\text{E}_8$'s their 
sum remains massless. Thus, one obtains a perfect match with the F-theory gauge group. 

We conclude with the remark that one can interpret this model also as a Higgsing of a model with $\text{E}_8\times\text{E}_8$
gauge symmetry as presented in the Appendix \ref{E8E8model}. Here, the Higgsing corresponds to a geometrical transition from 
the ambient space geometry $\mathbb{P}^{(1:2:3)}$ to $\text{Bl}_1 \mathbb{P}^{(1,1,2)}$ where the vacuum expectation value of
the Higgs corresponds to the non-vanishing coefficient $s_{72}$. 

\subsubsection{Structure group $\text{S}(\text{U}(2) \times \text{U}(1))$: $\text{E}_7$ $\times$ $\text{E}_6 \times \text{U}(1)$  gauge symmetry} \label{E7E6U1}

As a next step, we investigate an example which has $\text{E}_7 \times \text{E}_6 \times \text{U}(1)$ gauge symmetry. On the 
heterotic side we find an $\text{U}(1) \times \text{SU}((2) \times \text{U}(1))$ structure group which directly 
matches the non-Abelian gauge group and gives rise to one massless as well as one massive $\text{U}(1)$. 
The model is specified by the following non-vanishing coefficients:
\begin{center}
\begin{tabular}{|c|c|c|c|}\hline
Coefficient & $X$ & $X^-$ & $X^+$ \\ \hline
$s_1$ & $s_{11} U^2 + s_{12} U V +s_{13} V^2$ & $s_{12}U + s_{13}\mu$ & $s_{11}\lambda_1 + s_{12}V$ \\
$s_{2}$ & $s_{21}U^2 + s_{22} U V$ & $s_{22} U$ & $s_{21} \lambda_1 +s_{22} V$ \\
$s_{3}$ & $s_{32} U V$ & $s_{32} U$ & $s_{32} V$ \\
$s_{4}$ & $s_{42} U V$ & $s_{42} U$ & $s_{42} V$ \\
$s_{5}$ &$s_{52} U V$ & $s_{52} U$ &  $s_{52} V$ \\
$s_{6}$ & $s_{62} U V$ & $s_{62} U$ & $s_{62} V$ \\
$s_{7}$ & $s_{72} U V$ & $s_{72} U$ & $s_{72} V$ \\
$s_{8}$ & $s_{82} U V$ & $s_{82} U$ & $s_{82} V$ \\ \hline
\end{tabular}
\end{center}
\begin{figure}[h!]
\begin{center}
\includegraphics[width=5cm]{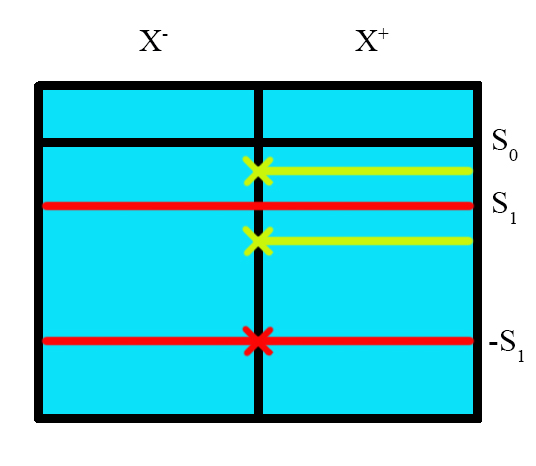}
\end{center}
\caption{The interpretation of this figure is similar to Figure \ref{figure:E7E7U1model}. The additional structure arises from two sections shown in yellow which form together with $\boxminus S_1$ the zeros of the spectral cover.}
\label{E7xE6xU(1)} 
\end{figure}
The evaluation of the order of vanishing of the Tate coefficients is summarized in the two Tate vectors
\be 
\vec{t}_V =  (1,2,2 , 3 , 5 , 8) \,  \qquad \vec{t}_U = ( 1,  2 , 3, 3 , 5 , 9) \, .
\ee
It signal one $\text{E}_6$ singularity at $V=0$ and one $\text{E}_7$ singularity at $U=0$. The E$_7$ singularity is inherited by the half K3 surface $X^-$ while the E$_6$ singularity is contained in the half K3 surface $X^+$ after stable degeneration. 

Next, we turn to the heterotic side. Here, the analysis of sections and $\text{U}(1)$ symmetries from the perspective of the gluing 
condition is completely analogous to the geometry with $\text{E}_7$ $\times$ $\text{E}_7$ $\times \text{U}(1)$ gauge symmetry 
discussed in the previous Section \ref{E7E7U1}. The situation at hand is summarized in Figure \ref{E7xE6xU(1)}.  However, there 
is a crucial difference in the evaluation of the spectral cover which we discuss next.

The corresponding split of the two half K3 surfaces into a spectral cover polynomial and the heterotic elliptic curve results in
\be
p^+ = s_{11} x_1^4 + s_{21} x_1^3 x_2 \, , \qquad    p^- = s_{13} x_1^4\, .
\ee
Again, in order to evaluate the spectral cover information, one needs to transform both constraints
into Weierstrass normal form. $p^-$ is again just a constant and its interpretation is along the lines of the previous example in 
Section \ref{E7E7U1}. However, in the case of $p^+$ something non-trivial happens. 
Its transform into Weierstrass coordinates reads explicitly
\begin{eqnarray}
p^+_W &=& \left(s_{21} s_{62}^3 s_{72} - 4 s_{21} s_{52} s_{62} s_{72}^2 - 2 s_{11} s_{62}^2 s_{72}^2 + 
   8 s_{11} s_{52} s_{72}^3 - 2 s_{21} s_{42} s_{62}^2 s_{82} \right. \nn \\ && \left. - 4 s_{21} s_{42} s_{52} s_{72} s_{82} - 
   4 s_{21} s_{32} s_{62} s_{72} s_{82} + 24 s_{11} s_{42} s_{62} s_{72} s_{82} + 
   12 s_{21} s_{22} s_{72}^2 s_{82} \right. \nn \\ && \left. - 16 s_{11} s_{32} s_{72}^2 s_{82} + 
   8 s_{21} s_{32} s_{42} s_{82}^2 - 24 s_{11} s_{42}^2 s_{82}^2 - 12 s_{21} s_{62} s_{72} x \right. \nn \\ && \left. + 
   24 s_{11} s_{72}^2 x + 24 s_{21} s_{42} s_{82} x + 
   24 s_{21} s_{72} y\right)/\left(2 \left(-s_{62}^2 s_{72}^2 + 4 s_{52} s_{72}^3 \right. \right. \nn \\ && \left. \left.+ 
     12 s_{42} s_{62} s_{72} s_{82} - 8 s_{32} s_{72}^2 s_{82} - 12 s_{42}^2 s_{82}^2 + 
     12 s_{72}^2 x\right)\right)\, . \label{spectralcoverE7E6U1}
\end{eqnarray}

In contrast to the well-known case of the spectral cover in the $\mathbb{P}^{(1,2,3)}$-model which takes only poles at infinity, 
one observes that the denominator of \eqref{spectralcoverE7E6U1} has two zeros at $S_1^Z$ and at $\boxminus S_1^Z$, the 
negative of $S_1^Z$ in the Mordell-Weil group of $Z$. 
In addition, the numerator has zeros at two irrational points $Q_1$, $Q_2$ and at $\boxminus S_1$.  Finally, there is a pole of 
order one at $S^Z_0$. Here, $S^Z_0$ and $S^Z_1$ refer to the two sections \eqref{section1W}, \eqref{section2W}.
Thus, the divisor of $p^-_W$ is given by
\be
\text{div}\left(p^+_W \right) = Q_1 + Q_2 - S_1 - S_0\, . \label{divisorP+}
\ee
Clearly, in order to promote the points defined by the spectral cover polynomial in eight dimensions to a curve in six dimensions, 
the current form of $p^-_W$ is not suitable due to its non-trivial denominator. However, one observes that the polynomial given 
by the numerator of $p^+_W$ gives rise to the divisor
\be
\text{div}\left( \text{Numerator}\left( p^+_W \right)\right) = Q_1 + Q_2 + \boxminus S_1 - 3 S_0
\ee
which is, however, linearly equivalent\footnote{To see this, one notices that the element $-S_1 + S_0$ in Pic$^0(E)$ is equivalent 
to $-S_1 + S_0 + f$ where $f$ is defined as $x-x_{S_1}$ on $E$ with $x_{S_1}$ denoting the $x$-coordinate of $S_1$. It holds 
that div$(f)=S_1 + \boxminus S_1 -2 S_0$. Thus, $-S_1+S_0$ maps to $\boxminus S_1$ on $E$ under the map 
\eqref{mapEtoPicE}.} to the divisor \eqref{divisorP+}.  Consequently, a spectral cover, valid also for the construction of
lower-dimensional compactifications, is defined by the numerator of
\eqref{spectralcoverE7E6U1}.

Thus, the three zeros $Q_1, Q_2$ and $\boxminus S_1$ form, following Section \ref{SUNellipticcurve}, a split SU(3) spectral 
cover, i.e.~an S(U(2)$\times\text{U}(1)$) spectral cover. All three points extend as sections into the half K3 surface $X^+$, cf.~Figure \ref{E7xE6xU(1)}. Two of these sections are linearly independent and are in eight dimensions the generators of the rank two Mordell-Weil group corresponding to a rational elliptic surface with an E$_6$ singularity. However, due to monodromies of $Q_1$ and $Q_2$ only $\boxminus S_1$ survives in six dimensions as a rational section. 

In conclusion, this spectral cover gives rise to an $\text{S}(\text{U}(2) \times \text{U}(1))$ background bundle which is embedded 
into the $\text{E}_8$ factor corresponding to $X^+$. The centralizer of this is given by $\text{E}_6 \times$ $\text{U}(1)$. The latter 
factor seems again massive due to the $\text{U}(1)$ background bundle. However, this  $\text{U}(1)$ forms together with the 
seemingly massive $\text{U}(1)$ of the  half K3 surface $X^-$ a massless linear combination. In conclusion, there is a perfect 
match with the F-theory analysis of the low energy gauge group. 
Analogously to the previous case in Section  \ref{E7E7U1}, this model can be understood as arising by Higgsing the non-Abelian 
model \ref{E8E7model} with gauge symmetry $\text{E}_8$ $\times$ $\text{E}_7$. Here, a (massive) $\text{U}(1)$ factor is 
embedded minimally into both factors. Again, the vacuum expectation value of the Higgs corresponds to the coefficient $s_{72}$. 
In addition, we can view this model also as arising by a Higgsing process from a compactification with $\text{E}_7\times\text{E}_7\times\text{U}(1)$ gauge group where a vacuum expectation value of the Higgs corresponds to $s_{21}$.

\subsubsection{Structure group $\text{SU}(2) \times \text{SU}(2) \times \text{U}(1)$: $\text{E}_7 \times \text{SO}(9) \times \text{U}(1)$ gauge symmetry} \label{E7SO9U1}

The final example in this chain of Higgsings is given by a model with $\text{E}_7 \times \text{SO}(9) \times \text{U}(1)$ gauge symmetry. On the heterotic side we find an $\text{U}(1) \times \left(\text{SU}(2) \times \text{SU}(2)\times \text{U}(1)\right)$ structure group which matches the non-Abelian gauge content. Also in this case we find one massless as well as one massive $\text{U}(1)$ on the heterotic side. 

As before, we define the model  by the following choice of coefficients in $X$:
\begin{center}
\begin{tabular}{|c|c|c|c|}\hline
Coefficient & $X$ & $X^-$ & $X^+$ \\ \hline
$s_1$ & $s_{11} U^2 + s_{12} U V +s_{13} V^2$ & $s_{12}U + s_{13}\mu$ & $s_{11}\lambda_1 + s_{12}V$ \\
$s_{2}$ & $s_{21}U^2 + s_{22} U V$ & $s_{22} U$ & $s_{21} \lambda_1 +s_{22} V$ \\
$s_{3}$ & $s_{31}U^2 + s_{32} U V$ & $s_{32} U$ & $s_{31} \lambda_1 + s_{32} V$ \\
$s_{4}$ & $s_{42} U V$ & $s_{42} U$ & $s_{42} V$ \\
$s_{5}$ &$s_{52} U V$ & $s_{52} U$ &  $s_{52} V$ \\
$s_{6}$ & $s_{62} U V$ & $s_{62} U$ & $s_{62} V$ \\
$s_{7}$ & $s_{72} U V$ & $s_{72} U$ & $s_{72} V$ \\
$s_{8}$ & $s_{82} U V$ & $s_{82} U$ & $s_{82} V$ \\ \hline
\end{tabular}
\end{center}
\begin{figure}[t]
\begin{center}
\includegraphics[width=5cm]{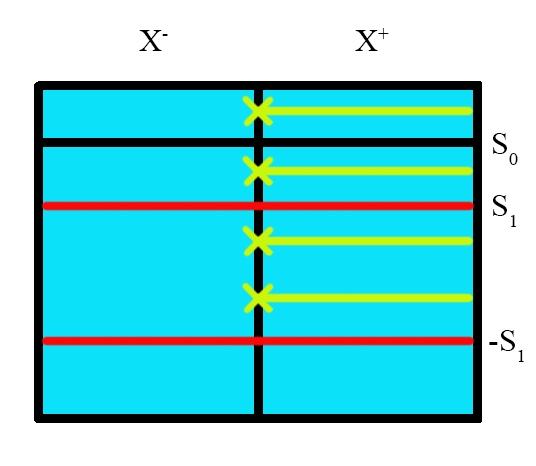}
\end{center}
\caption{The half K3 surface $X^-$ only exhibits the section $S^1$ in addition to the zero section. In contrast, $X^+$ gives rise to a spectral cover polynomial that has two pairs of irrational solutions $Q_1, Q_2$, $R_1, R_2$ that sum up to $S_1^Z$ each.}
\label{E7xSO9xU(1)} 
\end{figure}
Once again we begin the analysis on the F-theory side with the evaluation of the order of vanishing of the Tate coefficients. We obtain the Tate vectors
\be
\vec{t}_V =( 1,  1,  2,  3,  4,  7) \,,  \qquad \vec{t}_U = ( 1 , 2 , 3 , 3 , 5 , 9) \, ,
\ee
which signal one SO(9) singularity at $V=0$ and one $\text{E}_7$ singularity at $U=0$, each of which being inherited by one half 
K3 surface. 

For the analysis of  the heterotic side, we  split the two half K3 surfaces into a spectral cover polynomial and the heterotic elliptic 
curve. We obtain
\be
p^+ = s_{11} x_1^4 + s_{21} x_1^3 x_2 + s_{31} x_1^2 x_2^2\,, \qquad p^- = s_{11} x_1^4\,,
\ee
from which we see that $p^-$ is again a trivial spectral cover.
Again, in order to evaluate the non-trivial spectral cover $p^+$, one needs to transform both constraints into Weierstrass normal 
form. The interpretation of $p^+$ is as in the previous cases. We again obtain a Weierstrass form $p^+_W$ with
a denominator. The explicit expression is rather lengthy and can be provided upon request. Its  divisor is given by
\be
\text{div}\left(p^+_W \right) = Q_1 + Q_2 + R_1 + R_2 - 2  S_1 - 2 S_0\, , 
\ee
Here $Q_1$, $Q_2$ and $R_1$, $R_2$ are two pairs of irrational points which obey $Q_1 \boxplus Q_2 \boxminus S_1 =0$ and $R_1 \boxplus R_2 \boxminus S_1 =0$. 
The divisor of $p_W^+$ is again equivalent to the divisor of its numerator reading
\be
\text{div}\left( \text{Numerator}\left( p^+_W \right)\right) = Q_1 + Q_2 + R_1 + R_2 + 2 \boxminus S_1 - 6 S_0\,.
\ee
By a similar token as before, we thus drop the denominator and just work with the numerator of $p_W^+$.

All the points appearing here extend to sections of the half K3 surface $X^+$. However, while $Q_1, Q_2, R_1, R_2$ extend to rational sections of the half K3 surface they do not lift to rational sections of the fibration of the rational 
elliptic surface over $\mathbb{P}^1$. Altogether, we obtain as in the previous examples two rational sections in both half K3
surfaces which glue to global sections and therefore give rise to a $\text{U}(1)$ factor. Besides that the spectral cover is split and 
describes a vector bundle with structure group $\text{S}(\text{U}(2) \times \text{U}(1))\times\text{S}(\text{U}(2) \times \text{U}(1))$, 
where the $\text{U}(1)$ part in both factors needs to be identified. This is due to the fact that in both cases the same point, 
$\boxminus S_1^Z$, splits off. Thus, the spectral cover is isomorphic to $\text{SU}(2) \times \text{SU}(2) \times \text{U}(1)$ 
whose centralizer\footnote{We employ here  the breaking $\text{E}_8$ 
$\longrightarrow \text{SO}(9) \times \text{SU}(2) \times \text{SU}(2) \times \text{SU}(2)$.} within $\text{E}_8$ is given by 
$\text{SO}(9) \times \text{U}(1)$. Thus we obtain again two seemingly massive $\text{U}(1)$'s which give rise to one massless 
linear combination. 

This model can be understood by a Higgsing mechanism. Either it can be viewed as arising from the non-Abelian model in 
Section \ref{E8SO11model} with $\text{E}_8 \times \text{SO}(11)$ gauge symmetry, by giving a vacuum expectation value to a 
Higgs corresponding  to $s_{72}$, or from the previous example in Section \ref{E7E6U1}, by giving a vacuum expectation value 
to a Higgs associated to $s_{31}$. 

\subsubsection{Example with only one massive $\text{U}(1)$: $\text{S}(\text{U}(1) \times \text{U}(1))$ structure group}

Finally, we conclude the list of examples with a model which has only one $\text{U}(1)$-bundle embedded into one of its E$_8$ factors while the other E$_8$ stays untouched.  Accordingly there is only one massive $\text{U}(1)$ symmetry. On the F-theory side we obtain an $\text{E}_8 \times \text{E}_6 \times \text{SU}(2)$ gauge symmetry which matches the findings on the heterotic side.

The model is defined by the following specialization of the coefficients in the constraint \eqref{genericsectionP112}:
\begin{center}
\begin{tabular}{|c|c|c|c|}\hline
Coefficient & $X$ & $X^-$ & $X^+$ \\ \hline
$s_{1}$ & $s_{13}V^2$ & $s_{13} \mu$ & $0$ \\
$s_{2}$ & $s_{22} UV$ & $s_{22} U$ & $s_{22} V$ \\
$s_{3}$ & $s_{32} UV$ & $s_{32} U$ & $s_{32} V$ \\
$s_{4}$ & $s_{41} U^2+s_{42} UV$ & $s_{42} U$ & $s_{42} V+s_{41} \lambda_1$ \\
$s_{5}$ & $s_{52} UV$ & $s_{52} U$ & $s_{52} V$ \\
$s_{6}$ & $s_{62} UV$ & $s_{62} U$ & $s_{62} V$ \\
$s_{7}$ & $0$ & $0$ & $0$ \\
$s_{8}$ & $s_{82} UV$ & $s_{82} U$ & $s_{82} V$ \\ \hline
\end{tabular}
\end{center}
\begin{figure}[h!]
\begin{center}
\includegraphics[width=5cm]{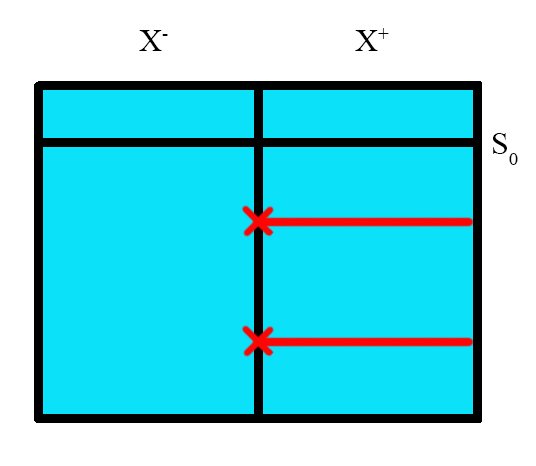}
\end{center}
\caption{The half K3 surface $X^-$ has only one section $S^{X^-}_0$ which merges with the section $S^{X^+}_0$ from the other half K3 surface $X^+$. $X^+$ has in addition also the section $S^{X^+}$ which does not merge with a section of $X^-$. Thus, there is no U(1)-factor on the F-theory side.}
\label{E7xSO9xSU(2)} 
\end{figure}
First of all, we note that the coefficient $s_7$ vanishes identically. Thus, we have changed the ambient space of the fiber from $\text{Bl}_1 \mathbb{P}^{(1,1,2)}$ to $\mathbb{P}^{(1,2,3)}$. Therefore, we do not expect to see another section besides the zero section on the F-theory side and therefore no $\text{U}(1)$, cf.~Appendix \ref{lemma}. 

First, we determine the gauge group on the F-theory. As before, we evaluate the Tate coefficients along the singular fibers which 
are in the case at hand located at $U=0$, $V=0$ and $s_{41} U+s_{42} V=0$. One obtains the Tate vectors
\begin{equation}
\vec{t}_{U} = \left( 1 , 2 ,3 , 4 , 5 , 10 \right)\, , \qquad \vec{t}_V = \left(1 , 2 , 2 , 3 , 5 , 8 \right)\, , \qquad   \vec{t}_{s_{41}U+s_{42}V} = \left( 0 , 0 , 1 , 1 , 2, 2 \right) \, .
\end{equation}
Clearly, these signal an $\text{E}_8 \times \text{E}_6 \times \text{SU}(2)$ gauge group in  F-Theory. Also, after the stable 
degeneration limit, one obtains one half K3 surface $X^-$ with an $\text{E}_8$ singularity and one, $X^+$, with an $\text{E}_6\times \text{SU}(2)$ singularity.  

For the further analysis we remark that there is the zero section $S_0=[0:1:0]$ in the K3 surface only. Here and in the following, 
we refer to the $\mathbb{P}^{(1,1,2)}$ coordinates $[x_1:x_2:x_3]$ only, i.e.~we work in the limit $x_4 \rightarrow 1$, $x_5 \rightarrow 1$. For the two half K3s one finds that $X^-$ has only a zero section. In contrast, one observes the sections\footnote{Clearly, as the rank of the Mordell Weil group of $X^+$ is positive, there are in fact infinitely many sections.} $S^{X^+}=[1:0:0]$ and $\boxminus S^{X^+}=[s_{82}V:0:-s_{52} s_{82}V^2]$ in the other half K3 surface $X^+$.
However, these sections do not glue with another section of $X^-$ and therefore do not give rise to a $\text{U}(1)$ symmetry from the F-theory perspective. However, from the heterotic perspective they should give rise to a \textit{massive} $\text{U}(1)$ which upon commutation within $\text{E}_8$ leaves an $\text{E}_6 \times \text{SU}(2)$ gauge symmetry. 

This result is in agreement with the spectral cover analysis. One evaluates the spectral cover polynomials as
\be
p^-= s_{13}x_1^4  \qquad p^+ = s_{41} x_1x_2^3 \,.
\ee
As observed already before, the Weierstrass transform $p_W^-$ of $p_W^-$ does not have any common solution with the heterotic elliptic curve and therefore the $\text{E}_8$-symmetry does not get broken. For the half K3 surface $X^+$, the common solutions to $p^+_W$ and the heterotic elliptic curve are given in Weierstrass coordinates $[x:y:z]$ as
\begin{eqnarray}
S^Z_W  &=&  \left[\frac{1}{12}(s_{62}^2-4s_{32}s_{82}): -\frac{1}{2}s_{42}s_{52}s_{82} :1 \right]\,, \nonumber\\ 
\boxminus S^Z_W  &=&  \left[\frac{1}{12}(s_{62}^2-4s_{32}s_{82}): \frac{1}{2}s_{42}s_{52}s_{82} :1 \right]\, .
\end{eqnarray}
Here, $S^Z_W$ and $\boxminus S^Z_W$ denote the intersections of $S^{X^+}$ and $\boxminus S^{X^+}$ with the heterotic 
geometry $Z$ respectively, in Weierstrass coordinates. Thus, we observe a split spectral cover pointing towards the structure 
group $S(\text{U}(1) \times \text{U}(1))$. Using the breaking $\text{E}_8 \longrightarrow \text{E}_6 \times \text{SU}(2) \times \text{U}(1)$, this spectral cover matches with the observed gauge group. The U(1) is decoupled from the massless spectrum via the St\"uckelberg effect of Section \ref{sec:Stueckelberg}.

\subsection{Split spectral covers with torsional points}
\label{sec:torsionalcover}

In the following, we discuss examples which exhibit a torsional  section in their spectral covers. As mentioned before, 
heterotic/F-theory duality suggests that the structure group of the heterotic vector bundle should contain a discrete part.

\subsubsection{Structure group $\mathbb{Z}_2$: $\text{E}_8 \times \text{E}_7 \times \text{SU}(2)$ gauge symmetry} \label{structuregroupZ2}

We consider a model which arises by the following specialization of coefficients in \eqref{genericsectionP112}:
\begin{center}
\begin{tabular}{|c|c|c|c|}\hline
Coefficient & $X$ & $X^-$ & $X^+$ \\ \hline
$s_1$ & $s_{13} V^2$ & $s_{13}\mu$ & $0$ \\
$s_{2}$ & $s_{22} U V$ & $s_{22} U$ & $s_{22} V$ \\
$s_{3}$ & $s_{32} U V$ & $s_{32} U$ & $s_{32} V$ \\
$s_{4}$ & $s_{41} U^2+ s_{42} U V$ & $s_{42} U$ & $s_{42} V+s_{41}\lambda_1$ \\
$s_{5}$ & $0$ & $0$ & $0$ \\
$s_{6}$ & $s_{62} U V$ & $s_{62} U$ & $s_{62} V$ \\
$s_{7}$ & $0$ & $0$ & $0$ \\
$s_{8}$ & $s_{82} U V$ & $s_{82} U$ & $s_{82} V$ \\ \hline
\end{tabular}
\end{center}

\begin{figure}[t]
\begin{center}
\includegraphics[width=5cm]{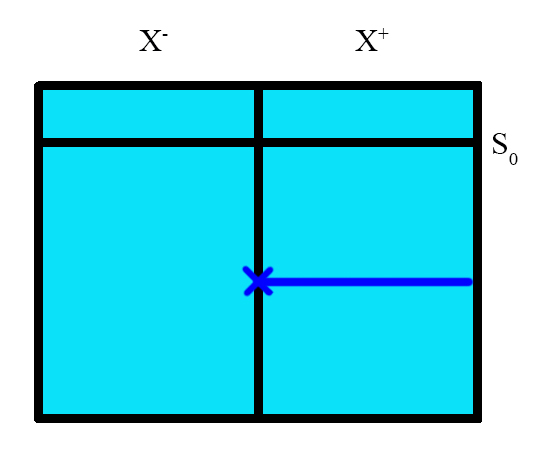}
\end{center}
\caption{The stable degeneration limit of a K3 surface with $\text{E}_8 \times (\text{E}_7 \times \text{SU}(2))/ \mathbb{Z}_2$. The half K3 surface $X^-$ has trivial Mordell-Weil group, while the half K3 surface $X^+$ has a torsional 
Mordell-Weil group $\mathbb{Z}_2$.}
\label{E8xE7xSU(2)} 
\end{figure}

We start the analysis with the gauge group on the F-theory side first. There are three singular loci of the fibration at $U=0$, $V=0$ and $s_{41}U+s_{42}V$. The evaluation of the Tate coefficients reveals the Tate vectors
\begin{equation}
\vec{t}_{U} = \left( 1 , 2 , \infty , 4 , 5 , 10\right)\, , \qquad t_V = \left( 1 , 2 , \infty , 3 , 5 , 9 \right) \, , \qquad \vec{t}_{s_{41}U+s_{42}V} = \left( 0 , 0 , \infty , 1 , 2, 2 \right) \, .
\end{equation}
Thus, there are an $\text{E}_8$ singularity as well as an $\text{E}_7$ and an SU(2) singularity. The E$_8$ singularity is inherited by the half K3 surface $X^-$ while $X^+$ gets endowed with an E$_7$ and an SU(2) singularity.

\begin{figure}[t]
\begin{center}
\includegraphics[width=5.2cm]{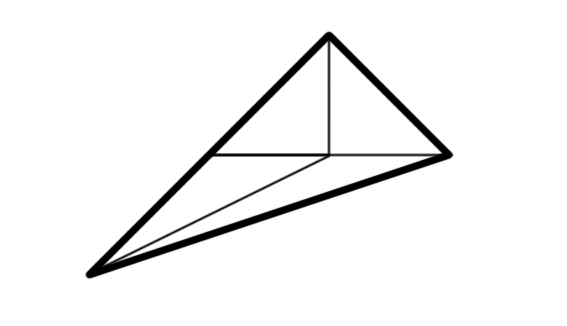}\hspace{1cm}
\includegraphics[width=2.8cm]{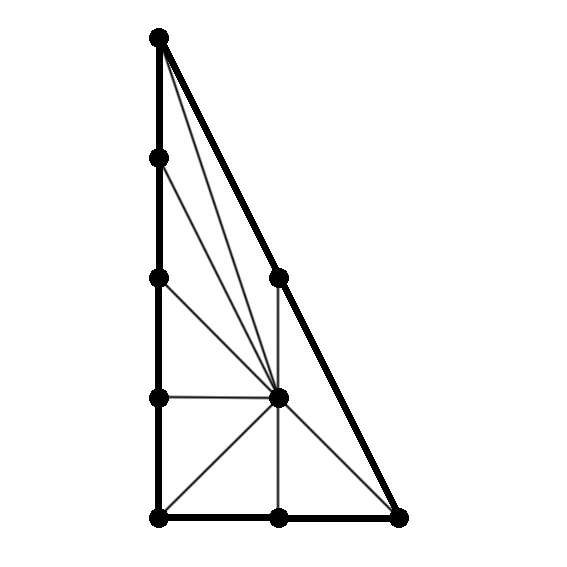}
\end{center}
\caption{The left picture shows the specialized two-dimensional polytope $\Delta_2$ corresponding to the half K3 surface $X^+$. The right figure shows its dual, $\Delta_2^\circ$, which specifies the ambient space of the elliptic fiber of $X^+$.}
\label{specialized2dambientspace}
\end{figure}

As a next step, we observe that there is only one section given by $[1:0:0]$ in the half K3 surface $X^-$ and two sections given 
by $[x_1:x_2:x_3]=[0:1:0]$ and $[x_1:x_2:x_3]=[1:0:0]$ in the half K3 surface $X^+$. Here, we work again in the limit 
$x_4=x_5=1$. In contrast, the full K3 surface has only one section namely the point at infinity. Moreover, a transformation into 
Weierstrass coordinates shows that the generic section $S_1^{X^+}$ has specialized into a torsional section of order two
as can be checked using the results of \cite{Aspinwall:1998xj}. This is 
expected, as the centralizer of the gauge algebra\footnote{To be precise, $\text{E}_8$ only contains the group 
$(\text{E}_7 \times \text{SU}(2))/ \mathbb{Z}_2$ as a subgroup. } $\text{E}_7 \times \text{SU}(2)$ within $\text{E}_8$ is given by 
$\mathbb{Z}_2$, which is also expected from the general analysis of \cite{OguisoShioda}. In contrast, the full K3 surface $X$ 
does not seem to exhibit a torsional section of order two.

Finally, we turn towards the analysis of the gauge group from the heterotic side. Here, the spectral cover is given by
\be
p^- = s_{13} x_1^4, \qquad p^+ = s_{41} x_1 x_2^3 \, .
\ee
After transformation to Weierstrass normal coordinates $p^-_W$ is given by a constant which has no common solution with the elliptic curve. In contrast, the transformed quantity $p_W^+$ gives rise to the point 
\begin{equation}
[x:y:z] = \left[\frac{1}{3} \left(\frac{s_{62}^2}{4}-s_{32}s_{82} \right):0:1 \right]\,.
\end{equation}
which is a torsion point of order two. In other words we see that the spectral cover is just given by a torsional point. 

In 
\cite{Aspinwall:1998xj} it has been suggested that an F-theory compactification with a torsional section in an 
elliptically fibered Calabi-Yau manifold and its stable degeneration limit should be dual to pointlike instantons with discrete holonomy on the heterotic side. Due to the similarity to the considered example, we propose that the spectral cover $p_W^+$
is to be interpreted as describing such a pointlike instanton with discrete holonomy.  In addition, as pointed out above, 
the matching of gauge symmetry on both sides of the duality only works if the spectral cover $p^+_W$ is interpreted in this way. It would be important to confirm this proposal further by a more detailed analysis of the spectral cover, computation
of the heterotic tadpole, or an analysis of codimension two singularities in F-theory.

\subsubsection{Structure group S(U(2) $\times \mathbb{Z}_2$): E$_8$ $\times$ E$_6$ $\times$ $\text{U}(1)$ gauge symmetry} \label{E8E6U1model}

In this section we present another example whose spectral cover polynomial containing a torsional point and leading to an E$_8\times\text{E}_6\times\text{U}(1)$ gauge symmetry. As one E$_8$ factor is left intact, the $\text{U}(1)$ factor needs to be embedded solely into one E$_8$ bundle.  

The starting point of our analysis  is the following specialization of  coefficients in \eqref{genericsectionP112}:
\begin{center}

\begin{tabular}{|c|c|c|c|}\hline
Coefficient & $X$ & $X^-$ & $X^+$ \\ \hline
$s_1$ & $s_{13} V^2$ & $s_{13}\mu$ & $0$ \\
$s_{2}$ & $s_{22} U V$ & $s_{22} U$ & $s_{22} V$ \\
$s_{3}$ & $s_{32} U V$ & $s_{32} U$ & $s_{32} V$ \\
$s_{4}$ & $s_{41} U^2+ s_{42} U V$ & $s_{42} U$ & $s_{42} V + s_{41} \lambda_1$ \\
$s_{5}$ & $0$ & $0$ & $0$ \\
$s_{6}$ & $s_{62} U V$ & $s_{62} U$ & $s_{62} V$ \\
$s_{7}$ & $s_{71} U^2$ & $0$ & $s_{71} \lambda_1$ \\
$s_{8}$ & $s_{82} U V$ & $s_{82} U$ & $s_{82} V$ \\ \hline
\end{tabular}
\end{center}

As in the previous cases, we compute the orders of vanishing of the Tate coefficients in order to determine the gauge group on 
the F-theory side. The computed Tate vectors signal an E$_8$ symmetry at $U=0$ and an E$_6$ symmetry at $V=0$. 
As a next step, we investigate the rational sections of $X$. As the coefficient $s_7$ does not vanish for the full K3 surface, there 
are the two generic sections $S_0, S_1$ realized in this model. However, the half K3  surface $X^-$ only has the zero section $S_0$. In contrast, the half K3 surface $X^+$  has two sections given by $S_0$, $S_1$, which unify in the heterotic elliptic curve and continue as one section into the other half K3 surface, see Figure \ref{E8xE6xU(1)}. This behavior of rational sections explains the origin of the $\text{U}(1)$-factor from the gluing condition discussed in Section \ref{gluing}.

\begin{figure}[t]
\begin{center}
\includegraphics[width=5cm]{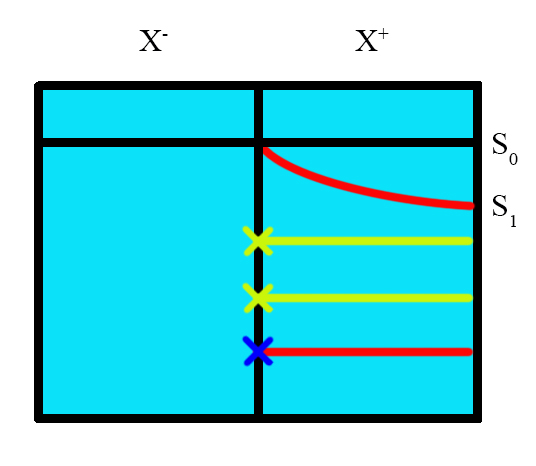}
\end{center}
\caption{The half K3 surface $X^-$ exhibits only the zero section, while  the half K3 surface $X^+$ has also the section $S_1^{X^+}$ which merges with the section $S_0^{X^-}$ along the heterotic geometry. Thus there are two independent sections in the full K3 surface giving rise to a U(1) gauge group factor. In addition, the inverse of $S_1^{X^+}$ becomes a torsion point of order two when hitting the heterotic geometry.}
\label{E8xE6xU(1)} 
\end{figure}

As a further step, we investigate how this $\text{U}(1)$ factor is reflected in the spectral cover on the heterotic side. 
The spectral cover polynomials computed by stable degeneration read
\be
p^- = s_{13} x_1^4, \qquad p^+ = s_{41} x_1 x_2^3 + s_{71} x_2^2 x_3\,.
\ee
The interpretation of $p^-$ is as in all the other cases just a trivial spectral cover. The common solution to $p^+$ and the 
heterotic Calabi-Yau manifold $Z$ is given by a pair of irrational points $R_1, R_2$ as well as a further point $T_t$ which has in 
Weierstrass normal form coordinates
\be
T_t =  \left[\frac{1}{3} \left(\frac{1}{4}s_{62}^2 - s_{32} s_{82} \right) :0 :1 \right]\,.
\ee
Thus, it is a torsion point of order two. However, it does not extend as a full torsional section into 
the half K3 surface $X^+$. The corresponding section is rather the inverse of $S_1$. 

Again we see that the split spectral cover $p^+$ contains a torsional section.  Let us comment on the interpretation of this for 
the structure group of the heterotic vector bundle.
Heterotic/F-theory duality implies that the low-energy effective theory contains a massless $\text{U}(1)$-symmetry.
However, as we have seen in Section \ref{sec:Stueckelberg}, a $\text{U}(1)$ background bundle in the heterotic theory
has a non-trivial field strength and thus a 
non-vanishing first Chern class, which would yield a massive U(1) in the effective field theory. Thus, we can not interpret
the torsional component $T_t$ to the spectral cover as a U(1) background bundle. 
By the arguments of Section \ref{SUNellipticcurve} and the similarity to the setups considered in  \cite{Aspinwall:1998xj},  it is 
tempting to identify this torsional component $T_t$ as a pointlike heterotic instanton with discrete holonomy.
In order to justify this statement, it would be necessary to compute the first Chern class of a heterotic line bundle that
is defined in terms of components to the cameral cover given by rational sections of the half K3 fibrations arising in stable degeneration. In  \cite{Aspinwall:2005qw}, it has been argued that the first Chern class is given, up to vertical components,
by the difference of the rational section and the zero section. If the first Chern class were completed into
the Shioda map of the rational section, which we conjecture to be the case, it would be zero precisely for a torsional section \cite{Mayrhofer:2014opa}.
Consequently, the U(1) in the commutant of $\text{E}_8$ would remain massless as the gauging in \eqref{eq:LStueck}  
would be absent. It would be important  to
confirm this conjecture by working out the missing vertical part in the formula for the first Chern class
of a U(1) vector bundle.

\subsection{$\text{U}(1)$ factors arising from purely non-Abelian structure groups}
\label{sec:nonAbelian}

In this final section, we present an example in which the heterotic vector bundle  has only purely  
non-Abelian structure group, while the F-Theory gauge group analysis clearly signals a $\text{U}(1)$ factor. 

As in the previous cases, we start by specifying the specialization of the coefficients in the defining hypersurface equation for
$X$:
\begin{center}
\begin{tabular}{|c|c|c|c|}\hline
Coefficient & $X$ & $X^-$ & $X^+$ \\ \hline
$s_{1}$ & $s_{12} UV + s_{13} V^2$ & $s_{12} U + s_{13} \mu$ & $s_{12} V$ \\
$s_{2}$ & $s_{22} UV$ & $s_{22} U$ & $s_{22} V$ \\
$s_{3}$ & $s_{32} UV$ & $s_{32} U$ & $s_{32} V$ \\
$s_{4}$ & $s_{42} UV$ & $s_{42} U$ & $s_{42} V$ \\
$s_{5}$ & $s_{52} UV$ & $s_{52} U$ & $s_{52} V$ \\
$s_{6}$ & $s_{62} UV$ & $s_{62} U$ & $s_{62} V$ \\
$s_{7}$ & $s_{71} U^2$ & $0$ & $s_{71} \lambda_1$ \\
$s_{8}$ & $s_{82} UV$ & $s_{82} U$ & $s_{82} V$ \\ \hline
\end{tabular}
\end{center}
We determine the gauge symmetry of the F-theory side by analysis of the Tate coefficients. We obtain the Tate vectors
\be
\vec{t}_U = \left( 1 , 2 , 3 , 4 , 5 , 10 \right)\,  \qquad \vec{t}_V = \left(1 , 1 , 2 , 2 , 4 , 6 \right)\, ,
\ee
which reveals an E$_8$ singularity at $U=0$ and an SO(7) singularity at $V=0$. We note that it is not directly possible to 
distinguish an SO(7) singularity from an SO(8) singularity using the Tate table \ref{Tatetable} only. To confirm that the type of 
singularity is indeed SO(7) we have to investigate the monodromy cover \cite{Grassi:2011hq} which is for an $I_0^*$ fiber given 
by
\be
A:\qquad \psi^3 +  \left(\left. \frac{f}{v^2} \right|_{v=0} \right) \psi +\left(\left. \frac{g}{v^3} \right|_{v=0} \right)\, .
\ee
Here, $v$ is the affine coordinate $V/U$ and $f,g$ are the Weierstrass coefficients. An $I_0^*$ fiber is SO(7) if the monodromy 
cover $A$ factors into a quadratic and a linear constraint,  which is indeed the case for the example at hand.
\begin{figure}[t]
\begin{center}
\includegraphics[width=5cm]{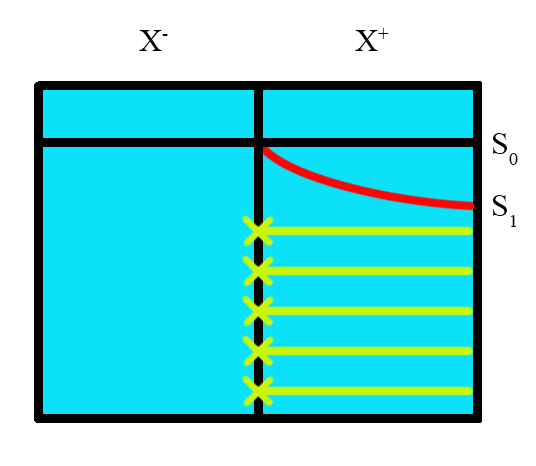}
\end{center}
\caption{The half K3 surface $X^-$ exhibits only the zero section, while  the half K3 surface $X^+$ has also the section $S_1^{X^+}$ which merges with the section $S_0^{X^-}$ in the heterotic geometry. Thus, there are two independent sections in the full K3 surface giving rise to a U(1) gauge group factor.}
\label{E8xSO7xU(1)} 
\end{figure}

The stable degeneration limit yields two half K3 surfaces, $X^+$ and $X^-$, cf.~Figure \ref{E8xSO7xU(1)}. There only exists the 
zero section in $X^-$. In contrast, $X^+$ has two sections which are given by $S_0^{X^+}$ and $S_1^{X^+}$. As in the 
previously considered case in Section \ref{E8E6U1model}, $S_1^{X^+}$ unifies with $S_0^{X^-}$ on the heterotic elliptic curve. 
Thus, there are two global sections in the full K3 surface and therefore a $\text{U}(1)$ factor in the F-theory compactification. 

Turning towards the discussion of the heterotic gauge bundles, one finds that the $\text{U}(1)$ factor is encoded in the data of 
the spectral cover polynomial as follows.
We observe that the spectral covers following  $X^+$ and $X^-$, respectively, are given by
\be
p^- = s_{13} x_1^4\, , \qquad p^+ = s_{71} x_2^2 x_3\, .
\ee
The intersection of its Weierstrass transform $p^+_W$ with the heterotic elliptic curve gives five irrational points $R_1, R_2, T_1, T_2, T_3$ with $R_1 \oplus R_2 = 0$ and $T_1 \oplus T_2 \oplus T_3 =0$. Thus we have a heterotic vector bundle with SU(2) $\times$ SU(3). As the spectral cover $p^+$ has one free parameter only, namely $s_{71}$, this model does not seem to have any moduli.

As our understanding of the precise embedding of the structure group into E$_8$ is limited, we have checked all possible ways to embed the group SO(7) $\times$ SU(2) $\times$ SU(3) into E$_8$. Independently of the chosen embedding, there is always a $\text{U}(1)$ in all possible breakings. Thus, we are led to conclude that the centralizer of SU(2) $\times$ SU(3) necessarily produces a $\text{U}(1)$ factor which matches with the F-theoric analysis.

\section{Conclusions and Future Directions}
\label{sec:conclusions}

In this paper  we have presented a first explicit analysis of the origin of Abelian gauge symmetries  
for string theory compactifications within the duality between the 
$\text{E}_8\times \text{E}_8$ heterotic string and F-theory. Here we summarize  the framework of the analysis,  highlight the key advancements, and conclude with future directions.
~
\\

\noindent {\bf Framework}\\

{\noindent We have focused on  F-theory compactifcations  with a rank one  Mordell-Weil group 
of rational sections both for compactifications to D=8 and D=6. 
We have systematically  studied a broad class  F-theory compactifications on elliptically 
fibered Calabi-Yau $(n+1)$-folds (with $n=1,2$, respectively) with rational sections and 
rigorously performed the stable degeneration limit  to  dual heterotic compactifications on 
elliptically fibered Calabi-Yau $n$-folds.  All considered examples are toric hypersurfaces and 
the stable degeneration limit is performed as a toric symplectic cut.}

The key aspects of the analysis are the following:
\begin{itemize}

\item
We  have carefully investigated the solutions of the spectral cover polynomial and the 
hypersurface for the heterotic elliptically fibered Calabi-Yau manifold. 
We have used the group law of the elliptic 
curve in Weierstrass normal form in order to determine the structure group of the heterotic 
background bundle.

\item
We have analyzed the origin of the resulting gauge  group. In D=6 this involves incorporation 
of  the massive $\text{U}(1)$ gauge symmetries, due to the heterotic
St\"uckelberg mechanism, that are not visible in F-theory. 
 
 \end{itemize}

\noindent {\bf  Key Results}\\

{\noindent
While the F-theory side provides a unifying treatment of Abelian gauge symmetries, as encoded in 
the Mordell-Weil group of elliptically fibered Calabi-Yau $(n+1)$-folds, a detailed analysis of a 
broad classes of toric F-theory compactifications has resulted in the proposal of three different 
classes of heterotic duals that give rise to $\text{U}(1)$ gauge group factors: }
\begin{itemize}
\item  Split spectral covers describing bundles with $\text{S(U({\it m})}\times \text{U}(1))$ structure group. Examples of this type have been discussed in Section \ref{sec:splitcover}.
\item Spectral covers containing torsional sections  giving  rise to bundles with 
$\text{SU({\it m})}\times\mathbb{Z}_n$ structure group. Classes of examples with this structure group
have been presented in Section \ref{sec:torsionalcover}.
\item  The appearance of bundles with structure groups of the type $\text{SU({\it m})}\times \text{SU({\it n})}$ whose commutants inside $\text{E}_8$  contain a $\text{U}(1)$-factor.  Explicit examples of
this form can be found in Section \ref{sec:nonAbelian}.

 \end{itemize}
 \newpage
 \noindent {\bf  Future Directions}\\
 
 {\noindent 
While the work presents a pioneering effort, addressing comprehensively the origin of Abelian 
gauge group factors in heterotic/F-theory duality for a class of compactifications,  the analysis provides a stage for further studies, 
both by extending the systematics of the analysis and by further detailed studies of the dual 
heterotic geometry and vector bundle data.}
 
 \begin{itemize}

 \item  It would be  important to extend  the studies to examples  within  larger classes of pairs 
 of dual  toric varieties as well as of more general  elliptically fibered Calabi-Yau manifolds, 
 respectively. In particular, this would allow to account for studies of dual geometries with 
broader classes of complex structure moduli spaces, and thus for an analysis of more general  
spectral covers of  dual heterotic vector bundles. In D=6 our analysis has been limited to a 
specific elliptically fibered Calabi-Yau $(n+1)$-folds, which has resulted in  constrained 
appearances of non-Abelian gauge symmetries and additional $\text{U}(1)$'s.   In particular, it 
would be illuminating to elaborate on the stable degeneration limit for general toric
fibrations of two-dimensional polyhedra over $\mathbb{P}^1$ in eight dimensions and, in 
addition, over Hirzebruch surfaces in six dimensions. 

\item It would be interesting to have the tools to study the spectral cover directly in the 
$\text{Bl}_1 \mathbb{P}^{(1,1,2)}$ model or more generally for fiber geometries which are 
given by the sixteen two-dimensioal reflexive polyhedra. This would require in particular a notion 
of the group law for these representations of elliptic curves. 

 \item The study of the properties of the spectral cover was  primarily confined to the 
derivation of the resulting gauge symmetries and the structure groups of the heterotic vector 
bundles. Further analysis 
of the spectral cover in compactifications to D=6  (and extensions to D=4) is needed; it should 
shed light on the further spectral cover data, which enter Chern classes, anomaly cancellation 
and matter spectrum calculations. This study is complicated by the resolution of singularities of 
the heterotic geometry that may have to be performed, resulting in spectral covers, which are
not finite \cite{Donagi:1999ez}.

 \item  Our analysis has been primarily constrained to studies of Abelian gauge symmetries
in the language of a perturbative heterotic dual. Although we have encountered spectral covers 
which seem to describe small instantons, i.e.~non-perturbative M5-branes, with discrete 
holonomy, we have not systematically analyzed their effect. In F-theory, M5-branes are visible as 
non-minimal singularities which occur at co-dimension two loci that have to be blown up.  
It would be interesting to thoroughly perform this geometric analysis.
We expect in addition rich structures of Abelian gauge symmetry factors in  F-theory whose 
heterotic duals are due to other types of non-perturbative M5-branes. 
Furthermore, it would be interesting to study the geometric transitions between F-theory 
geometries with different numbers of tensor multiplets, 
whose discussion is again related to this resolution process.

  \end{itemize}

\vskip 0.5cm
\newpage

\noindent  {\bf Acknowledgments}\\
\vskip 0.2cm
{\noindent 
We thank L. Anderson, R. Donagi, J. Gray, T. Grimm, A. Klemm, D. Morrison,  R. Pardini, S. Sch\"afer-Nameki, V. Perduca  and W. Taylor  for useful discussions.
We are grateful to the Theory Division of CERN  (M.C. and M.P.), the 2015 Summer  Program on
``F-Theory at the interface of particle physics and mathematics''  at the Aspen Center for Physics (M.C., A.G., D.K. and M. P.)
and the UPenn (D.K.)  for  hospitality during the course of the project.  
This research is supported in part by the DOE  Grant Award DE-SC0013528, (M.C., M.P., P.S.), UPenn School of Arts and Sciences Funds  for Faculty Working Group (A.G. and M.C.), the Fay R. and Eugene L.~Langberg Endowed Chair (M.C.) and the Slovenian Research Agency (ARRS) (M.C.).}

\appendix

\section{Weierstrass and Tate form of the hypersurface $\chi^{\text{sing}}$} \label{appendix:TateandWeierstrass}

In this appendix, we summarize the Weierstrass normal form as well as the Tate coefficients of the $\chi^{\text{sing}}$ model. For convenience, we recall the most general form of the hypersurface $\chi^{\text{sing}}$ which reads
\begin{equation}
\chi^{\text{sing}} : = s_1x_1^4+s_{2}x_1^3x_2+s_{3}x_1^2x_2^2+s_{4}x_1x_2^3+s_{5}x_1^2x_3  +s_{6}x_1x_2x_3  +s_{7}x_2^2x_3  +s_{8}x_3^2  =0, \quad s_i \in \mathcal{O}_{\mathbb{P}^1}(2) \, . \label{chisingappendix}
\end{equation}
This can be brought in the so-called Tate form
\begin{equation}
y^2 + a_1 x y + a_3 y = x^3 + a_2 + a_4 x + a_6\, .
\end{equation}
The Tate coefficients are explicitly given as \cite{Klevers:2014bqa}
\begin{eqnarray}
a_1 &=& s_6\, , \nn \\
a_2 &=& -s_5 s_7 -s_3 s_8 \, , \nn \\
a_3 &=& -s_4 s_5 s_8 -s_2 s_7 s_8 \, , \nn \\  
a_4 &=& s_3 s_5 s_7 s_8 + s_1 s_7^2 s_8 + s_2 s_4 s_8^2 \, , \nn \\ 
a_6 &=& -s_1 s_3 s_7^2 s_8^2 -s_1 s_4^2 s_8^3 + s_4 s_7 \left(-s_2 s_5 s_8^2 + s_1 s_6 s_8^2 \right) \, . \label{Tatecoefficients1}
\end{eqnarray}
In addition, it is useful, to introduce the quantities
\begin{eqnarray}
b_2 &=& a_1^2 + 4 a_2 \, , \nn \\ 
b_4 &=& a_1 a_3 + 2 a_4\, , \nn \\ 
b_6 &=& a_3^2 + 4 a_6 \, .
\end{eqnarray}
The Weierstrass normal form of $\chi^{\text{sing}}$ reads
\begin{eqnarray} \label{g2g3}
f &=& \left( -\frac{1}{48}s_{62}^4 + \frac{1}{6} s_{52} s_{62}^2 s_{72} - \frac{1}{3} s_{52}^2 s_{72}^2 - 
 \frac{1}{2} s_{42} s_{52} s_{62} s_{82} + \frac{1}{6} s_{32} s_{62}^2 s_{82} \right. \nn \\ && \left.+ \frac{1}{3} s_{32} s_{52} s_{72} s_{82} - 
 \frac{1}{2} s_{22} s_{62} s_{72} s_{82} + s_{21} s_{72}^2 s_{82} - \frac{1}{3} s_{32}^2 s_{82}^2 + s_{22} s_{42} s_{82}^2 \right). \nonumber\\
  \nn \\
g &=&  \left(  \frac{1}{864}s_{62}^6 - \frac{1}{72} s_{52} s_{62}^4 s_{72} + \frac{1}{18} s_{52}^2 s_{62}^2 s_{72}^2 - 
 \frac{2}{27} s_{52}^3 s_{72}^3 + \frac{1}{24} s_{42} s_{52} s_{62}^3 s_{82} \right. \nn \\ && \left. - \frac{1}{72} s_{32} s_{62}^4 s_{82} - 
 \frac{1}{6} s_{42} s_{52}^2 s_{62} s_{72} s_{82} + \frac{1}{36} s_{32} s_{52} s_{62}^2 s_{72} s_{82} + 
 \frac{1}{24} s_{22} s_{62}^3 s_{72} s_{82} \right. \nn \\ && \left. + \frac{1}{9} s_{32} s_{52}^2 s_{72}^2 s_{82} - 
 \frac{1}{6} s_{22} s_{52} s_{62} s_{72}^2 s_{82} - \frac{1}{12} s_{21} s_{62}^2 s_{72}^2 s_{82} + 
 \frac{1}{3} s_{21} s_{52} s_{72}^3 s_{82} \right. \nn \\ && \left. + \frac{1}{4} s_{42}^2 s_{52}^2 s_{82}^2 - 
 \frac{1}{6} s_{32} s_{42} s_{52} s_{62} s_{82}^2 + \frac{1}{18} s_{32}^2 s_{62}^2 s_{82}^2 - 
 \frac{1}{12} s_{22} s_{42} s_{62}^2 s_{82}^2 \right. \nn \\ && \left. + \frac{1}{9} s_{32}^2 s_{52} s_{72} s_{82}^2 - 
 \frac{1}{6} s_{22} s_{42} s_{52} s_{72} s_{82}^2 - \frac{1}{6} s_{22} s_{32} s_{62} s_{72} s_{82}^2 + 
 s_{21} s_{42} s_{62} s_{72} s_{82}^2 \right. \nn \\ && \left.+ \frac{1}{4} s_{22}^2 s_{72}^2 s_{82}^2 - 
 \frac{2}{3} s_{21} s_{32} s_{72}^2 s_{82}^2 - \frac{2}{27} s_{32}^3 s_{82}^3 + 
 \frac{1}{3} s_{22} s_{32} s_{42} s_{82}^3 - s_{21} s_{42}^2 s_{82}^3 \right)\,.
\end{eqnarray}
In particular, the discriminant reads
\begin{equation}\label{eq:Delta}
\Delta = 4f^3 + 27 g^2 = \frac{1}{48} s_{82}^2 \left(\dots \right)\,.
\end{equation}
where the expression in the bracket denotes a generic polynomial. 
\subsection{The map to Weierstrass normal form}

In this subsection we discuss the bi-rational map of \eqref{chisingappendix}  to Weierstrass normal form. As a first step, we transform \eqref{genericsectionP112} into the form
\be
\tilde{s}_1x_1^4+\tilde{s}_{2}x_1^3x_2+\tilde{s}_{3}x_1^2x_2^2+\tilde{s}_{4}x_1x_2^3 + {s}_{7}x_2^2x_3  + x_3^2  =0\, . \label{intermediatestepWeierstrass}
\ee
Here, we have introduced the new quantities
\be
\tilde{s}_1 = -\frac{1}{4}s_5^2 + S_0 s_8, \quad \tilde{s}_2 = -\frac{1}{2}s_5 s_6 + S_1 s_8, \quad \tilde{s}_3 = -\frac{1}{4}s_6^2 - \frac{1}{2}s_5 s_7 + s_3 s_8, \quad \tilde{s}_4 = -\frac{1}{2}s_6 s_7 + s_4 s_8
\ee
Next, one uses the transformations provided in \cite{Morrison:2012ei}
\begin{eqnarray}
x_1 &\longmapsto& z \nn \\
x_2 &\longmapsto& \frac{6 {s}_7 y + 6 \tilde{s}_4 x z + 2 \tilde{s}_3 \tilde{s}_4 z^3 + 
 3 \tilde{s}_2 s_7^2 z^3}{2 (3 s_7^2 x - 3 \tilde{s}_4^2 z^2 - 2 \tilde{s}_3 s_7^2 z^2)} \nn \\
x_3 &\longmapsto& \left(108 s_7^3 x^3 - 108 s_7^3 y^2 - 108 \tilde{s}_4 s_7^2 x y z - 
   216 \tilde{s}_4^2 s_7 x^2 z^2 - 108 \tilde{s}_3 s_7^3 x^2 z^2 - 108 \tilde{s}_4^3 y z^3 \right. \nn \\ && \left. - 
   144 \tilde{s}_3 \tilde{s}_4 s_7^2 y z^3 - 108 \tilde{s}_2 s_7^4 y z^3 - 36 \tilde{s}_3 \tilde{s}_4^2 s_7 x z^4 - 
   54 \tilde{s}_2 \tilde{s}_4 s_7^3 x z^4 + 12 \tilde{s}_3^2 \tilde{s}_4^2 s_7 z^6 \right. \nn \\ && \left. - 54 \tilde{s}_2 \tilde{s}_4^3 s_7 z^6 + 
   16 \tilde{s}_3^3 s_7^3 z^6 - 72 \tilde{s}_2 \tilde{s}_3 \tilde{s}_4 s_7^3 z^6 - 
   27 \tilde{s}_2^2 s_7^5 z^6\right) \Big/ \nn \\ &&  12 \left(3 s_7^2 x - 3 \tilde{s}_4^2 z^2 - 2 \tilde{s}_3 s_7^2 z^2\right)^2 \label{transfP112Weier}
\end{eqnarray}
in order to finally bring \eqref{intermediatestepWeierstrass} into Weierstrass normal form in $\mathbb{P}^{(1,2,3)}$.
We also note that the transformations \eqref{transfP112Weier} simplify in the case $s_7=0$, in particular their denominators loose their dependence on $x,y$. 

\section{Spectral Cover Examples with no $\text{U}(1)$}
\label{app:noU1s}

For convenience and to demonstrate how our formalism works in a well-understood situation, we analyze several examples with pure non-Abelian gauge content only. These are related to the examples \ref{E7E7U1}, \ref{E7E6U1} and \ref{E7SO9U1} by a Higgsing process which gives $s_{72}$ a vacuum expectation value. 
\Mnote{Check this statement about the vev.}

\subsection{Trivial structure group: E$_8$ $\times$ E$_8$ gauge symmetry} \label{E8E8model}

As described in the previous section, we can obtain examples with higher rank gauge symmetry by specializing the coefficients of $chi_{\text{sing}}$. Aiming for a model with E$_8$ $\times$ E$_8$ gauge symmetry, one obtains the following coefficients. 

\begin{center}
\begin{tabular}{|c|c|c|c|}\hline
Coefficient & K3 & $X^-$ & $X^+$ \\ \hline
$s_1$ & $s_{11} U^2 +s_{12} UV + s_{13}V^2$ & $s_{12}U+s_{13} \mu$ & $s_{12}V+s_{11} \lambda_1$ \\
$s_{2}$ & $s_{22} U V$ & $s_{22} U$ & $s_{22} V$ \\
$s_{3}$ & $s_{32} U V$ & $s_{32} U$ & $s_{32} V$ \\
$s_{4}$ & $s_{42}  U V$ & $s_{42} U$ & $s_{42} V$ \\
$s_{5}$ & $s_{52} UV$ & $s_{52} U$ & $s_{52} V$ \\
$s_{6}$ & $s_{62} U V$ & $s_{62} U$ & $s_{62} V$ \\
$s_{7}$ & $0$ & $0$ & $0$ \\
$s_{8}$ & $s_{82} U V$ & $s_{82} U$ & $s_{82} V$ \\ \hline
\end{tabular}
\end{center}

Here the second row displays the coefficients of the full K3 surface while the coefficients of the two half K3 surfaces are displayed in row three and four. In particular, one notices that the coefficient $s_7$ is missing which means that one is passing from the toric ambient space $Bl_1 \mathbb{P}^{(1,1,2)} \times \mathbb{P}^1$ to the ambient space $\mathbb{P}^{(1,2,3)} \times \mathbb{P}^1$. Clearly, a generic section of the anti-canonical bundle of $\mathbb{P}^{(1,2,3)}$ does not have a second section, so there is also no reason to expect any $\text{U}(1)$. 

We proceed by analyzing the F-Theory gauge group. The analysis of the Tate vectors reveals that 
\be
\vec{t}_U = \vec{t}_V = \left(1,2,3,4,5,10 \right)
\ee
and thus there is an $\text{E}_8 \times \text{E}_8$ gauge symmetry. After the stable degeneration limit, both half K3 surfaces $X^+$ and $X^-$ obtain one E$_8$ singularity each. 

Finally, we turn to the Heterotic side. The splitting of the two half K3's into the Heterotic elliptic curve and the spectral cover contributions reveals that
\be
p^+ = s_{11}x_1^4, \qquad p^- = s_{13} x_1^4 \, .
\ee
After transforming these expression into the affine Weierstrass coordinates $x,y$, one obtains
\be
p^+_W = s_{11}, \qquad p^-_W = s_{13}
\ee
In both cases,one obtains an S$\text{U}(1)$ spectral cover. However, the centralizer of the identity in E$_8$ is E$_8$ and one obtains a perfect match with the F-theory calculation.

\subsection{Structure group SU(1) $\times$ SU(2): E$_8$ $\times$ E$_7$ gauge symmetry} \label{E8E7model}
We consider the following model which is specified by the following coefficients in \eqref{genericsectionP112}.
\begin{center}

\begin{tabular}{|c|c|c|c|}\hline
Coefficient & K3 & $X^-$ & $X^+$ \\ \hline
$s_1$ & $s_{11} U^2 +s_{12} UV + s_{13}V^2$ & $s_{12}U+s_{13} \mu$ & $s_{12}V+s_{11} \lambda_1$ \\
$s_{2}$ & $s_{21}U^2 + s_{22} U V$ & $s_{22} U$ & $s_{22} V + s_{21} \lambda_1$ \\
$s_{3}$ & $s_{32} U V$ & $s_{32} U$ & $s_{32} V$ \\
$s_{4}$ & $s_{42}  U V$ & $s_{42} U$ & $s_{42} V$ \\
$s_{5}$ & $s_{52} UV$ & $s_{52} U$ & $s_{52} V$ \\
$s_{6}$ & $s_{62} U V$ & $s_{62} U$ & $s_{62} V$ \\
$s_{7}$ & $0$ & $0$ & $0$ \\
$s_{8}$ & $s_{82} U V$ & $s_{82} U$ & $s_{82} V$ \\ \hline
\end{tabular}
\end{center}
This time, we obtain the following Tate vectors
\be
\vec{t}_V = \left(1,2,3,3,5,9 \right), \qquad \vec{t}_U = \left(1,2,3,4,5,10 \right)
\ee
which signal an E$_7$ singularity at $V=0$ as well as an E$_8$ singularity at $U=0$. The former one is inherited by the half K3 surface $X^-$ while the latter one moves into $X^+$.

The spectral cover is in this case given by
\be
p^+ = x_1^3  \left(s_{11} x_1  + s_{21} x_2 \right) , \qquad p^- = s_{11}x_1^4 \, .
\ee
We only comment on the non-trivial spectral cover. After applying the transformation \eqref{transfP112Weier}, it reads
\be
p^+_W = c_0 + c_1 x
\ee
which defines an SU(2) spectral cover and is precisely what is expected. Explicitly, the $a_i$'s read
\be
 c_0 = s_{21} s_{62}^2 - 4 s_{21} s_{32} s_{82} + 12 s_{11} s_{42} s_{82}  \qquad c_1 = -s_{21}
\ee
Note that the $a_i$ are indeed proportional to $s_{11}$, $s_{21}$ which define the spectral cover. Thus, we obtain an SU(2) spectral cover in the case of $X^+$ and a trivial structure group for the case of $X^-$. In conclusion, there is a perfect match with the F-theory analysis.


\subsection{Example with gauge group E$_8$ $\times$ SO(11)} \label{E8SO11model}

We consider the following model which is specified by the following coefficients in \eqref{genericsectionP112}.
\begin{center}

\begin{tabular}{|c|c|c|c|}\hline
Coefficient & K3 & $X^-$ & $X^+$ \\ \hline
$s_1$ & $s_{11} U^2 +s_{12} UV + s_{13}V^2$ & $s_{12}U+s_{13} \mu$ & $s_{12}V+s_{11} \lambda_1$ \\
$s_{2}$ & $s_{21}U^2 + s_{22} U V$ & $s_{22} U$ & $s_{22} V + s_{21} \lambda_1$ \\
$s_{3}$ & $s_{31}U^2 + s_{32} U V$ & $s_{32} U$ & $s_{32} V + s_{31} \lambda_1$ \\
$s_{4}$ & $s_{42}  U V$ & $s_{42} U$ & $s_{42} V$ \\
$s_{5}$ & $s_{52} UV$ & $s_{52} U$ & $s_{52} V$ \\
$s_{6}$ & $s_{62} U V$ & $s_{62} U$ & $s_{62} V$ \\
$s_{7}$ & $0$ & $0$ & $0$ \\
$s_{8}$ & $s_{82} U V$ & $s_{82} U$ & $s_{82} V$ \\ \hline
\end{tabular}
\end{center}
This time, we obtain the following Tate vectors
\be
\vec{t}_V = \left(1,1,3,3,5,8 \right), \qquad \vec{t}_U = \left(1,2,3,4,5,10 \right)
\ee
which signal an SO(11) singularity at $V=0$ as well as an E$_8$ singularity at $U=0$. The former one is inherited by the half K3 surface $X^+$ while the latter one moves into $X^-$.

The spectral cover is in this case given by
\be
p^+ = x_1^2  \left(s_{11} x_1^2  + s_{21} x_1 x_2 + s_{31} x_2^2 \right) , \qquad p^- = s_{11}x_1^4 \, .
\ee
We only comment on the non-trivial spectral cover. After applying the transformation \eqref{transfP112Weier}, it reads
\be
p^+_W = c_0 + c_1 x + c_2 x^2
\ee
which defines an Sp(2) $\cong$ SO(5)  spectral cover\footnote{Sometimes, Sp($N$) is denoted by Sp($2N$).} \cite{Friedman:1997yq} and is precisely what is expected.  Thus, we obtain an Sp(2) spectral cover in the case of $X^+$ and a trivial structure group for the case of $X^-$. The commutant of SO(5) within E$_8$ is given by SO(11).

\section{Tuned models without rational sections} \label{lemma}

In this appendix we reproduce \cite{Morrison:2012ei, Morrison:2014era} the following

\begin{lemma}\label{Lemma:SectionsMerge}
The two sections denoted by $x_1=0$ and $x_4=0$ in \eqref{eq:Rk1GeneralTwoSections} merge into a single section if and only if $s_{7}=0$ in \eqref{genericsectionP112}. Furthermore, the single section is given by $[x_1:x_2:x_3:x_4:x_5]=[0:1:1:0:1]$.
\end{lemma}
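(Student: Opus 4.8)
The plan is to reduce everything to a fiberwise computation over a generic point of the base, where the coefficients $s_i$ become constants and $X_2$ meets the fiber ambient space $\text{Bl}_1\mathbb{P}^{(1,1,2)}$ in the elliptic curve $E$; there the two sections become the two points $S_0\cap E=[0:1:1:s_7:-s_8]$ and $S_1\cap E=[s_7:1:-s_4:0:1]$ recorded in \eqref{sectionsP112}. The one structural input needed is the Stanley--Reisner ideal of the toric fiber coming from the triangulation implied by \eqref{DeltaP112}, namely $\langle x_1x_3,\,x_3x_5,\,x_4x_5,\,x_2x_4,\,x_1x_2\rangle$. From it one reads off: on $\{x_1=0\}$ one has $x_2\neq 0$ and $x_3\neq 0$; on $\{x_4=0\}$ one has $x_2\neq 0$ and $x_5\neq 0$; and the intersection of the two toric divisors $\{x_1=0\}\cap\{x_4=0\}$ is the single torus-fixed point $[0:1:1:0:1]$.

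First I would restrict the hypersurface \eqref{genericsectionP112}: setting $x_1=0$ and dividing by $x_3\neq 0$ gives the constraint $s_7x_2^2x_5+s_8x_3x_4=0$ cutting out $S_0\cap E$ inside $\{x_1=0\}$, while setting $x_4=0$ and dividing by $x_2^2x_5\neq 0$ gives $s_4x_1x_2x_5+s_7x_3=0$ cutting out $S_1\cap E$ inside $\{x_4=0\}$. For the ``if'' direction, impose $s_7=0$: the first constraint becomes $s_8x_3x_4=0$, hence $x_4=0$ since $s_8$ is generically nonvanishing and $x_3\neq 0$, so $S_0\cap E\subseteq\{x_1=x_4=0\}=\{[0:1:1:0:1]\}$; the second becomes $s_4x_1x_2x_5=0$, hence $x_1=0$ since $s_4$ is generically nonvanishing and $x_2,x_5\neq 0$, so $S_1\cap E\subseteq\{x_1=x_4=0\}=\{[0:1:1:0:1]\}$ as well. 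A one-line evaluation shows $\chi|_{[0:1:1:0:1]}=s_7$, so this point actually lies on $E$ precisely when $s_7=0$; hence for $s_7=0$ both sections equal the single section $[0:1:1:0:1]$, which also establishes the ``furthermore'' clause.

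For the ``only if'' direction I would argue from the toric divisor on which each point sits: $S_0\cap E$ lies by construction on $\{x_1=0\}$, whereas $S_1\cap E=[s_7:1:-s_4:0:1]$ has $x_1$-coordinate equal to $s_7$, and the torus orbit of a point with nonzero $x_1$-coordinate never meets $\{x_1=0\}$. Thus for $s_7\neq 0$ the point $S_1\cap E$ is not contained in $\{x_1=0\}$, so $S_0\cap E\neq S_1\cap E$ and the two sections are distinct. Combining the two directions yields the claimed equivalence.

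The substitutions and the reading-off of the Stanley--Reisner ideal are routine. The only step that needs a little care --- and the one I expect to be the main obstacle, modest as it is --- is the bookkeeping of genericity: within the locus $\{s_7\equiv 0\}$ one must observe that $s_4$ and $s_8$ remain generically nonzero over the base, so that neither $\{x_1=0\}$ nor $\{x_4=0\}$ degenerates into a component of $X_2$ and both genuinely stay sections rather than fibral divisors.
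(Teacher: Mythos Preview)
Your proof is correct and follows essentially the same route as the paper's: restrict the hypersurface equation to $x_1=0$ and to $x_4=0$, use the Stanley--Reisner ideal of $\text{Bl}_1\mathbb{P}^{(1,1,2)}$ to force the remaining coordinates to be nonzero, and conclude that both sections collapse to $[0:1:1:0:1]$ precisely when $s_7=0$. If anything you are slightly more explicit than the paper in two places: you actually evaluate $\chi|_{[0:1:1:0:1]}=s_7$ (which the paper leaves implicit), and you run the ``only if'' direction as a clean contrapositive ($s_7\neq 0\Rightarrow S_1\cap E\notin\{x_1=0\}$) rather than first locating the merged point and then reading off the constraint.
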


\begin{proof}
Suppose the two sections $x_1=0$ and $x_4=0$ merge into a single section. Then this single section obeys both $x_1=0$ and $x_4=0$, everywhere. Thus the Stanley-Reisner ideal requires $x_2 \neq 0$, $x_3 \neq 0$ and $x_5 \neq 0$ everywhere. Making use of the skaling relations of the resolved space $\text{Bl}_1 \mathbb{P}^{(1,1,2)}$, one obtains that this section is indeed given by $[x_1:x_2:x_3:x_4:x_5]=[0:1:1:0:1]$. 

Suppose now that $s_7=0$. Setting $x_1$ in \eqref{genericsectionP112} to zero, results in the equation $s_8 x_3^2 x_4=0$. As $x_3 \neq 0$ due to the Stanley Reisner ideal, $x_4$ has to vanish as well resulting in the merging of the two sections. Similarly, $x_4=0$ requires that $s_4 x_1 x_2^3 x_5^2=0$. The Stanley Reisner ideal requires $x_2$ and $x_5$ to be non-vanishing. Thus, there is also in this case only one section given by $[x_1:x_2:x_3:x_4:x_5]=[0:1:1:0:1]$. 

\end{proof}

\section{Non-commutativity of the semi-stable degeneration limit and the map to Weierstrass form}
\label{app:noncomm}

We illustrate the non-commutativity of the diagram \eqref{noncommutativediagram} using the above example with gauge group  E$_7$ $\times$ SO(9) $\times$ $\text{U}(1)$. To be precise, on the top left corner of the diagram, the section $\chi$ of $-K_{\mathbb{P}^{(1,1,2)}\times \mathbb{P}^1} $ is given by

\begin{align}
\chi: \qquad & s_1x_1^4+s_2x_1^3x_2+s_3x_1^2x_2^2+s_4x_1x_2^3+s_5x_1^2x_3+s_6x_1x_2x_3+s_7x_2^2x_3+s_8x_3^2=0\,,
\nonumber\\
\text{where} \qquad & s_1= s_{11}U^2+s_{12}UV+s_{13}V^2\,, \quad 
s_i=s_{i1}U^2+s_{i2}UV \text{ for } 2\leq i \leq 3 \,, \nonumber\\
& s_i= s_{i2}UV \text{ for } 4\leq i \leq 8\,.
\end{align}
Under the stable degeneration limit, denoted by the left map in the diagram \eqref{noncommutativediagram}, $\chi$ is split into $\chi^{\pm}$, which are in turn defined by

\begin{align}
\chi^{\pm}: \qquad & s_1^{\pm}x_1^4 + s_2^{\pm}x_1^3x_2 + s_3^{\pm} x_1^2x_2^2 + s_4^{\pm}x_1x_2^3 + s_5^{\pm}x_1^2x_3 + s_6^{\pm}x_1x_2x_3 + s_7^{\pm}x_2^2x_3 + s_8^{\pm}x_3^2=0\,,
\nonumber\\
\text{where} \quad & s_1^+= s_{12}U+s_{13}\mu \,, \quad s_1^-= s_{11}\lambda_1+s_{12}V \,, \nonumber\\ 
& s_i^+=s_{i2}U \text{ and } s_i^- = s_{i1}\lambda_1+s_{i2}V \text{ for } 2\leq i \leq 3\,, \nonumber\\
& s_i^+=s_{i2}U \text{ and } s_i^-=s_{i2}V \text{ for } 4\leq i \leq 8\,. 
\end{align}
We further map $\chi^{\pm} $, under the bottom map of the diagram \eqref{noncommutativediagram}, into their respective Weierstrass forms 
\begin{equation}
W_{\chi}^{\pm}: y^2 =x^3 + f_\chi^\pm x z^4 +g_\chi^\pm z^6 \,. 
\end{equation}
We can show that $W_{\chi}^{\pm}$ obtained in this way is different compared to $W_{\chi}^{'\pm}$ obtained by taking the other route in diagram \eqref{noncommutativediagram}, namely start from $\chi$ on the top left corner of the diagram, first map $\chi$ into its Weierstrass form $W_{\chi}$ using the map on top of \eqref{noncommutativediagram}, and then use the map on the right of \eqref{noncommutativediagram} to split $W_{\chi}$ into 
\begin{equation}
W_{\chi}^{'\pm}: y^2 =x^3 + f_\chi^{'\pm} x z^4 +g_\chi^{'\pm} z^6 \,. 
\end{equation}
Indeed,  
\begin{equation}
W_{\chi}^{+} \neq W_{\chi}^{'+}\,, \qquad W_{\chi}^{-} \neq W_{\chi}^{'-}.
\end{equation}
To be precise, 
\begin{equation}
f_{\chi}^{\pm} = f_{\chi}^{'\pm} \quad \text{ but } \quad g_{\chi}^{\pm} \neq g_{\chi}^{'\pm}\,, \quad g_{\chi}^{+} - g_{\chi}^{'+} = \frac{2}{3}U^6 s_{13}s_{31}s_{72}^2s_{82}^2\,, \quad g_{\chi}^{-} - g_{\chi}^{'-} = \frac{2}{3}V^6 s_{13}s_{31}s_{72}^2s_{82}^2\,.
\end{equation}

\bibliographystyle{utphys}	
\bibliography{ref}

\providecommand{\href}[2]{#2}\begingroup\raggedright\begin{thebibliography}{10}

\bibitem{green2012superstring}
M.~B. Green, J.~H. Schwarz, and E.~Witten, {\em Superstring theory: volume 2,
  Loop amplitudes, anomalies and phenomenology}.
\newblock Cambridge university press, 2012.

\bibitem{polchinski1998string}
J.~Polchinski, {\em {String Theory}}, vol.~1\&~2.
\newblock Cambridge University Press, 1998.

\bibitem{Vafa:1996xn}
C.~Vafa, ``{Evidence for F theory},''
  \href{http://dx.doi.org/10.1016/0550-3213(96)00172-1}{{\em Nucl.Phys.}
  {\bfseries B469} (1996) 403--418},
\href{http://arxiv.org/abs/hep-th/9602022}{{\ttfamily arXiv:hep-th/9602022
  [hep-th]}}.

\bibitem{Morrison:1996pp}
D.~R. Morrison and C.~Vafa, ``{Compactifications of F theory on Calabi-Yau
  threefolds. 2.},'' \href{http://dx.doi.org/10.1016/0550-3213(96)00369-0}{{\em
  Nucl.Phys.} {\bfseries B476} (1996) 437--469},
\href{http://arxiv.org/abs/hep-th/9603161}{{\ttfamily arXiv:hep-th/9603161
  [hep-th]}}.

\bibitem{Morrison:1996na}
D.~R. Morrison and C.~Vafa, ``{Compactifications of F theory on Calabi-Yau
  threefolds. 1},'' \href{http://dx.doi.org/10.1016/0550-3213(96)00242-8}{{\em
  Nucl.Phys.} {\bfseries B473} (1996) 74--92},
\href{http://arxiv.org/abs/hep-th/9602114}{{\ttfamily arXiv:hep-th/9602114
  [hep-th]}}.

\bibitem{Aspinwall:1997ye}
P.~S. Aspinwall and D.~R. Morrison, ``{Point - like instantons on K3
  orbifolds},'' \href{http://dx.doi.org/10.1016/S0550-3213(97)00516-6}{{\em
  Nucl.Phys.} {\bfseries B503} (1997) 533--564},
  \href{http://arxiv.org/abs/hep-th/9705104}{{\ttfamily arXiv:hep-th/9705104
  [hep-th]}}.

\bibitem{Friedman:1997yq}
R.~Friedman, J.~Morgan, and E.~Witten, ``{Vector bundles and F theory},''
  \href{http://dx.doi.org/10.1007/s002200050154}{{\em Commun. Math. Phys.}
  {\bfseries 187} (1997) 679--743},
\href{http://arxiv.org/abs/hep-th/9701162}{{\ttfamily arXiv:hep-th/9701162
  [hep-th]}}.

\bibitem{Donagi:1997dp}
R.~Y. Donagi, ``{Principal bundles on elliptic fibrations},'' {\em Asian J.
  Math.} {\bfseries 1} (1997) 214--223,
\href{http://arxiv.org/abs/alg-geom/9702002}{{\ttfamily arXiv:alg-geom/9702002
  [alg-geom]}}.

\bibitem{Andreas:1998zf}
B.~Andreas, ``{N=1 heterotic / F theory duality},''
  \href{http://dx.doi.org/10.1002/(SICI)1521-3978(199906)47:6<587::AID-PROP587>3.0.CO;2-5}{{\em
  Fortsch. Phys.} {\bfseries 47} (1999) 587--642},
\href{http://arxiv.org/abs/hep-th/9808159}{{\ttfamily arXiv:hep-th/9808159
  [hep-th]}}.

\bibitem{Candelas:1996su}
P.~Candelas and A.~Font, ``{Duality between the webs of heterotic and type II
  vacua},'' \href{http://dx.doi.org/10.1016/S0550-3213(96)00410-5}{{\em
  Nucl.Phys.} {\bfseries B511} (1998) 295--325},
\href{http://arxiv.org/abs/hep-th/9603170}{{\ttfamily arXiv:hep-th/9603170
  [hep-th]}}.

\bibitem{Berglund:1998ej}
P.~Berglund and P.~Mayr, ``{Heterotic string / F theory duality from mirror
  symmetry},'' {\em Adv. Theor. Math. Phys.} {\bfseries 2} (1999) 1307--1372,
\href{http://arxiv.org/abs/hep-th/9811217}{{\ttfamily arXiv:hep-th/9811217
  [hep-th]}}.

\bibitem{Anderson:2014gla}
L.~B. Anderson and W.~Taylor, ``{Geometric constraints in dual F-theory and
  heterotic string compactifications},''
  \href{http://dx.doi.org/10.1007/JHEP08(2014)025}{{\em JHEP} {\bfseries 08}
  (2014) 025},
\href{http://arxiv.org/abs/1405.2074}{{\ttfamily arXiv:1405.2074 [hep-th]}}.

\bibitem{Hayashi:2008ba}
H.~Hayashi, R.~Tatar, Y.~Toda, T.~Watari, and M.~Yamazaki, ``{New Aspects of
  Heterotic--F Theory Duality},''
  \href{http://dx.doi.org/10.1016/j.nuclphysb.2008.07.031}{{\em Nucl.Phys.}
  {\bfseries B806} (2009) 224--299},
\href{http://arxiv.org/abs/0805.1057}{{\ttfamily arXiv:0805.1057 [hep-th]}}.

\bibitem{Grimm:2009ef}
T.~W. Grimm, T.-W. Ha, A.~Klemm, and D.~Klevers, ``{Computing Brane and Flux
  Superpotentials in F-theory Compactifications},''
  \href{http://dx.doi.org/10.1007/JHEP04(2010)015}{{\em JHEP} {\bfseries 1004}
  (2010) 015},
\href{http://arxiv.org/abs/0909.2025}{{\ttfamily arXiv:0909.2025 [hep-th]}}.

\bibitem{Grimm:2009sy}
T.~W. Grimm, T.-W. Ha, A.~Klemm, and D.~Klevers, ``{Five-Brane Superpotentials
  and Heterotic / F-theory Duality},''
  \href{http://dx.doi.org/10.1016/j.nuclphysb.2010.06.011}{{\em Nucl. Phys.}
  {\bfseries B838} (2010) 458--491},
\href{http://arxiv.org/abs/0912.3250}{{\ttfamily arXiv:0912.3250 [hep-th]}}.

\bibitem{Jockers:2009ti}
H.~Jockers, P.~Mayr, and J.~Walcher, ``{On N=1 4d Effective Couplings for
  F-theory and Heterotic Vacua},''
  \href{http://dx.doi.org/10.4310/ATMP.2010.v14.n5.a3}{{\em Adv. Theor. Math.
  Phys.} {\bfseries 14} (2010) 1433--1514},
\href{http://arxiv.org/abs/0912.3265}{{\ttfamily arXiv:0912.3265 [hep-th]}}.

\bibitem{Cvetic:2011gp}
M.~Cvetic, I.~Garcia~Etxebarria, and J.~Halverson, ``{Three Looks at Instantons
  in F-theory -- New Insights from Anomaly Inflow, String Junctions and
  Heterotic Duality},'' \href{http://dx.doi.org/10.1007/JHEP11(2011)101}{{\em
  JHEP} {\bfseries 11} (2011) 101},
\href{http://arxiv.org/abs/1107.2388}{{\ttfamily arXiv:1107.2388 [hep-th]}}.

\bibitem{Cvetic:2012ts}
M.~Cvetic, R.~Donagi, J.~Halverson, and J.~Marsano, ``{On Seven-Brane Dependent
  Instanton Prefactors in F-theory},''
  \href{http://dx.doi.org/10.1007/JHEP11(2012)004}{{\em JHEP} {\bfseries 11}
  (2012) 004},
\href{http://arxiv.org/abs/1209.4906}{{\ttfamily arXiv:1209.4906 [hep-th]}}.

\bibitem{Aldazabal:1996du}
G.~Aldazabal, A.~Font, L.~E. Ibanez, and A.~M. Uranga, ``{New branches of
  string compactifications and their F theory duals},''
  \href{http://dx.doi.org/10.1016/S0550-3213(96)00699-2}{{\em Nucl. Phys.}
  {\bfseries B492} (1997) 119--151},
\href{http://arxiv.org/abs/hep-th/9607121}{{\ttfamily arXiv:hep-th/9607121
  [hep-th]}}.

\bibitem{Klemm:1996hh}
A.~Klemm, P.~Mayr, and C.~Vafa, ``{BPS states of exceptional noncritical
  strings},''
\href{http://arxiv.org/abs/hep-th/9607139}{{\ttfamily arXiv:hep-th/9607139
  [hep-th]}}.

\bibitem{GrassiIPMU}
A.~Grassi, ``{Toric Weierstrass models},'' {\em I.P.M.U. 2010 talk} (2010)
  http://www.ipmu.jp/node/552.

\bibitem{Morrison:2012ei}
D.~R. Morrison and D.~S. Park, ``{F-Theory and the Mordell-Weil Group of
  Elliptically-Fibered Calabi-Yau Threefolds},''
  \href{http://dx.doi.org/10.1007/JHEP10(2012)128}{{\em JHEP} {\bfseries 1210}
  (2012) 128},
\href{http://arxiv.org/abs/1208.2695}{{\ttfamily arXiv:1208.2695 [hep-th]}}.

\bibitem{Borchmann:2013jwa}
E.~P. J.~Borchmann, C.~Mayrhofer and T.~Weigand, ``{Elliptic fibrations for
  SU(5) x U(1) x U(1) F-theory vacua},''
\href{http://arxiv.org/abs/1303.5054}{{\ttfamily arXiv:1303.5054 [hep-th]}}.

\bibitem{Cvetic:2013nia}
M.~Cveti{\v c}, D.~Klevers, and H.~Piragua, ``{F-Theory Compactifications with
  Multiple U(1)-Factors: Constructing Elliptic Fibrations with Rational
  Sections},''
\href{http://arxiv.org/abs/1303.6970}{{\ttfamily arXiv:1303.6970 [hep-th]}}.

\bibitem{Cvetic:2013uta}
M.~Cveti{\v c}, A.~Grassi, D.~Klevers, and H.~Piragua, ``{Chiral
  Four-Dimensional F-Theory Compactifications With SU(5) and Multiple
  U(1)-Factors},''
\href{http://arxiv.org/abs/1306.3987}{{\ttfamily arXiv:1306.3987 [hep-th]}}.

\bibitem{Borchmann:2013hta}
J.~Borchmann, C.~Mayrhofer, E.~Palti, and T.~Weigand, ``{SU(5) Tops with
  Multiple U(1)s in F-theory},''
\href{http://arxiv.org/abs/1307.2902}{{\ttfamily arXiv:1307.2902 [hep-th]}}.

\bibitem{Cvetic:2013jta}
M.~Cveti{\v c}, D.~Klevers, and H.~Piragua, ``{F-Theory Compactifications with
  Multiple U(1)-Factors: Addendum},''
  \href{http://dx.doi.org/10.1007/JHEP12(2013)056}{{\em JHEP} {\bfseries 1312}
  (2013) 056},
\href{http://arxiv.org/abs/1307.6425}{{\ttfamily arXiv:1307.6425 [hep-th]}}.

\bibitem{Cvetic:2013qsa}
M.~Cveti{\v c}, D.~Klevers, H.~Piragua, and P.~Song, ``{Elliptic fibrations
  with rank three Mordell-Weil group: F-theory with U(1) x U(1) x U(1) gauge
  symmetry},'' \href{http://dx.doi.org/10.1007/JHEP03(2014)021}{{\em JHEP}
  {\bfseries 1403} (2014) 021},
\href{http://arxiv.org/abs/1310.0463}{{\ttfamily arXiv:1310.0463 [hep-th]}}.

\bibitem{Braun:2014qka}
V.~Braun, T.~W. Grimm, and J.~Keitel, ``{Complete Intersection Fibers in
  F-Theory},'' \href{http://dx.doi.org/10.1007/JHEP03(2015)125}{{\em JHEP}
  {\bfseries 03} (2015) 125},
\href{http://arxiv.org/abs/1411.2615}{{\ttfamily arXiv:1411.2615 [hep-th]}}.

\bibitem{Klevers:2014bqa}
D.~Klevers, D.~K. Mayorga~Pena, P.-K. Oehlmann, H.~Piragua, and J.~Reuter,
  ``{F-Theory on all Toric Hypersurface Fibrations and its Higgs Branches},''
  \href{http://dx.doi.org/10.1007/JHEP01(2015)142}{{\em JHEP} {\bfseries 1501}
  (2015) 142},
\href{http://arxiv.org/abs/1408.4808}{{\ttfamily arXiv:1408.4808 [hep-th]}}.

\bibitem{Cvetic:2015ioa}
M.~Cveti{\v c}, D.~Klevers, H.~Piragua, and W.~Taylor, ``{General U(1)xU(1)
  F-theory Compactifications and Beyond: Geometry of unHiggsings and novel
  Matter Structure},''
\href{http://arxiv.org/abs/1507.05954}{{\ttfamily arXiv:1507.05954 [hep-th]}}.

\bibitem{Grassi:2012qw}
A.~Grassi and V.~Perduca, ``{Weierstrass models of elliptic toric K3
  hypersurfaces and symplectic cuts},''
\href{http://arxiv.org/abs/1201.0930}{{\ttfamily arXiv:1201.0930 [math.AG]}}.

\bibitem{Blumenhagen:2005ga}
R.~Blumenhagen, G.~Honecker, and T.~Weigand, ``{Loop-corrected
  compactifications of the heterotic string with line bundles},''
  \href{http://dx.doi.org/10.1088/1126-6708/2005/06/020}{{\em JHEP} {\bfseries
  06} (2005) 020},
\href{http://arxiv.org/abs/hep-th/0504232}{{\ttfamily arXiv:hep-th/0504232
  [hep-th]}}.

\bibitem{Marsano:2009gv}
J.~Marsano, N.~Saulina, and S.~Schafer-Nameki, ``{Monodromies, Fluxes, and
  Compact Three-Generation F-theory GUTs},''
  \href{http://dx.doi.org/10.1088/1126-6708/2009/08/046}{{\em JHEP} {\bfseries
  08} (2009) 046},
\href{http://arxiv.org/abs/0906.4672}{{\ttfamily arXiv:0906.4672 [hep-th]}}.

\bibitem{Blumenhagen:2009yv}
R.~Blumenhagen, T.~W. Grimm, B.~Jurke, and T.~Weigand, ``{Global F-theory
  GUTs},'' \href{http://dx.doi.org/10.1016/j.nuclphysb.2009.12.013}{{\em
  Nucl.Phys.} {\bfseries B829} (2010) 325--369},
\href{http://arxiv.org/abs/0908.1784}{{\ttfamily arXiv:0908.1784 [hep-th]}}.

\bibitem{Grimm:2010ez}
T.~W. Grimm and T.~Weigand, ``{On Abelian Gauge Symmetries and Proton Decay in
  Global F-theory GUTs},''
  \href{http://dx.doi.org/10.1103/PhysRevD.82.086009}{{\em Phys.Rev.}
  {\bfseries D82} (2010) 086009},
  \href{http://arxiv.org/abs/1006.0226}{{\ttfamily arXiv:1006.0226 [hep-th]}}.

\bibitem{Aspinwall:1998xj}
P.~S. Aspinwall and D.~R. Morrison, ``{Nonsimply connected gauge groups and
  rational points on elliptic curves},'' {\em JHEP} {\bfseries 9807} (1998)
  012,
\href{http://arxiv.org/abs/hep-th/9805206}{{\ttfamily arXiv:hep-th/9805206
  [hep-th]}}.

\bibitem{Choi:2012pr}
K.-S. Choi and H.~Hayashi, ``{U(n) Spectral Covers from Decomposition},''
  \href{http://dx.doi.org/10.1007/JHEP06(2012)009}{{\em JHEP} {\bfseries 06}
  (2012) 009},
\href{http://arxiv.org/abs/1203.3812}{{\ttfamily arXiv:1203.3812 [hep-th]}}.

\bibitem{Aspinwall:1996mn}
P.~S. Aspinwall, ``{K3 surfaces and string duality},'' in {\em {Fields, strings
  and duality. Proceedings, Summer School, Theoretical Advanced Study Institute
  in Elementary Particle Physics, TASI'96, Boulder, USA, June 2-28, 1996}},
  pp.~421--540.
\newblock 1996.
\newblock
\href{http://arxiv.org/abs/hep-th/9611137}{{\ttfamily arXiv:hep-th/9611137
  [hep-th]}}.
\newblock

\bibitem{Vafa:1995gm}
C.~Vafa and E.~Witten, ``{Dual string pairs with N = 1 and N = 2 supersymmetry
  in four dimensions},''
  \href{http://dx.doi.org/10.1016/0920-5632(96)00025-4}{{\em Nucl. Phys. Proc.
  Suppl.} {\bfseries 46} (1996) 225--247},
\href{http://arxiv.org/abs/hep-th/9507050}{{\ttfamily arXiv:hep-th/9507050}}.

\bibitem{Bershadsky:1996nh}
M.~Bershadsky, K.~A. Intriligator, S.~Kachru, D.~R. Morrison, V.~Sadov, {\em
  et~al.}, ``{Geometric singularities and enhanced gauge symmetries},''
  \href{http://dx.doi.org/10.1016/S0550-3213(96)90131-5}{{\em Nucl.Phys.}
  {\bfseries B481} (1996) 215--252},
\href{http://arxiv.org/abs/hep-th/9605200}{{\ttfamily arXiv:hep-th/9605200
  [hep-th]}}.

\bibitem{kodaira1963compact}
K.~Kodaira, ``On compact analytic surfaces: Ii,'' {\em The Annals of
  Mathematics} {\bfseries 77} no.~3, (1963) 563--626.

\bibitem{Douglas:2014ywa}
M.~R. Douglas, D.~S. Park, and C.~Schnell, ``{The Cremmer-Scherk Mechanism in
  F-theory Compactifications on K3 Manifolds},''
  \href{http://dx.doi.org/10.1007/JHEP05(2014)135}{{\em JHEP} {\bfseries 05}
  (2014) 135},
\href{http://arxiv.org/abs/1403.1595}{{\ttfamily arXiv:1403.1595 [hep-th]}}.

\bibitem{Donagi:1998vw}
R.~Donagi, ``{Heterotic / F theory duality: ICMP lecture},'' in {\em
  {Mathematical physics. Proceedings, 12th International Congress, ICMP'97,
  Brisbane, Australia, July 13-19, 1997}}.
\newblock 1998.
\newblock \href{http://arxiv.org/abs/hep-th/9802093}{{\ttfamily
  arXiv:hep-th/9802093 [hep-th]}}.
\newblock
\url{http://alice.cern.ch/format/showfull?sysnb=0270001}.
\newblock

\bibitem{Atiyah01011957}
M.~F. Atiyah, ``Vector bundles over an elliptic curve,''
  \href{http://dx.doi.org/10.1112/plms/s3-7.1.414}{{\em Proceedings of the
  London Mathematical Society} {\bfseries s3-7} no.~1, (1957) 414--452},
  \href{http://arxiv.org/abs/http://plms.oxfordjournals.org/content/s3-7/1/414.full.pdf+html}{{\ttfamily
  http://plms.oxfordjournals.org/content/s3-7/1/414.full.pdf+html}}.
  \url{http://plms.oxfordjournals.org/content/s3-7/1/414.short}.

\bibitem{Friedman:1997ih}
R.~Friedman, J.~W. Morgan, and E.~Witten, ``{Vector bundles over elliptic
  fibrations},''
\href{http://arxiv.org/abs/alg-geom/9709029}{{\ttfamily arXiv:alg-geom/9709029
  [alg-geom]}}.

\bibitem{Aspinwall:2005qw}
P.~S. Aspinwall, ``{An analysis of fluxes by duality},''
\href{http://arxiv.org/abs/hep-th/0504036}{{\ttfamily arXiv:hep-th/0504036
  [hep-th]}}.

\bibitem{tate1975algorithm}
J.~Tate, ``Algorithm for determining the type of a singular fiber in an
  elliptic pencil,'' {\em Modular functions of one variable IV} (1975) 33--52.

\bibitem{Katz:2011qp}
S.~Katz, D.~R. Morrison, S.~Schafer-Nameki, and J.~Sully, ``{Tate's algorithm
  and F-theory},'' \href{http://dx.doi.org/10.1007/JHEP08(2011)094}{{\em JHEP}
  {\bfseries 1108} (2011) 094},
\href{http://arxiv.org/abs/1106.3854}{{\ttfamily arXiv:1106.3854 [hep-th]}}.

\bibitem{2010arXiv1011.1003B}
U.~{Bruzzo} and A.~{Grassi}, ``{Picard group of hypersurfaces in toric
  3-folds},'' {\em ArXiv e-prints} (Nov., 2010) ,
  \href{http://arxiv.org/abs/1011.1003}{{\ttfamily arXiv:1011.1003 [math.AG]}}.

\bibitem{Batyrev:1994hm}
V.~V. Batyrev, ``{Dual polyhedra and mirror symmetry for Calabi-Yau
  hypersurfaces in toric varieties},'' {\em J.Alg.Geom.} {\bfseries 3} (1994)
  493--545,
\href{http://arxiv.org/abs/alg-geom/9310003}{{\ttfamily arXiv:alg-geom/9310003
  [alg-geom]}}.

\bibitem{Morrison:2014era}
D.~R. Morrison and W.~Taylor, ``{Sections, multisections, and U(1) fields in
  F-theory},''
\href{http://arxiv.org/abs/1404.1527}{{\ttfamily arXiv:1404.1527 [hep-th]}}.

\bibitem{Donagi:1999ez}
R.~Donagi, B.~A. Ovrut, T.~Pantev, and D.~Waldram, ``{Standard models from
  heterotic M theory},'' {\em Adv. Theor. Math. Phys.} {\bfseries 5} (2002)
  93--137,
\href{http://arxiv.org/abs/hep-th/9912208}{{\ttfamily arXiv:hep-th/9912208
  [hep-th]}}.

\bibitem{OguisoShioda}
T.~Shioda and K.~Oguiso, ``{The Mordell-Weil lattice of a rational elliptic
  surface},''
{\em Comment. Math. Univ. St. Paul} {\bfseries 40} (1991) 83--99.

\bibitem{Mayrhofer:2014opa}
C.~Mayrhofer, D.~R. Morrison, O.~Till, and T.~Weigand, ``{Mordell-Weil Torsion
  and the Global Structure of Gauge Groups in F-theory},''
\href{http://arxiv.org/abs/1405.3656}{{\ttfamily arXiv:1405.3656 [hep-th]}}.

\bibitem{Grassi:2011hq}
A.~Grassi and D.~R. Morrison, ``{Anomalies and the Euler characteristic of
  elliptic Calabi-Yau threefolds},''
\href{http://arxiv.org/abs/1109.0042}{{\ttfamily arXiv:1109.0042 [hep-th]}}.

\end{thebibliography}\endgroup

\end{document}